\documentclass{article}
\usepackage[preprint]{log_2026}

\makeatletter
\renewcommand{\@noticestring}{Accepted at the Learning on Graphs Conference (LoG 2026), oral presentation.}
\makeatother

\usepackage{booktabs}
\usepackage{amsfonts}
\usepackage{amsmath}
\usepackage{amssymb}
\usepackage{amsthm}
\usepackage{graphicx}
\usepackage{pifont}
\usepackage{float}
\usepackage{array}

\newcommand{\cmark}{\ding{51}}
\newcommand{\xmark}{\ding{55}}

\usepackage[numbers,compress,sort]{natbib}

\newtheorem{theorem}{Theorem}
\newtheorem{definition}{Definition}
\newtheorem{proposition}{Proposition}
\newtheorem{lemma}{Lemma}
\newtheorem{corollary}{Corollary}
\newtheorem{remark}{Remark}

\newcommand{\multiset}[1]{\left\{\!\left\{#1\right\}\!\right\}}
\newcommand{\hash}{\operatorname{HASH}}
\newcommand{\one}{\mathbf{1}}
\newcommand{\relEmb}{\operatorname{Emb}_{r}}

\title[Expressive Power of Implicit Line-Graph WL]{%
\texorpdfstring{On the Expressive Power of Implicit Line-Graph\\
Higher-Order Weisfeiler--Leman}%
{On the Expressive Power of Implicit Line-Graph Higher-Order Weisfeiler--Leman}}

\author[Fan Yang]{%
Fan Yang\\
Independent Researcher\\
\email{lukeyf.ubc@gmail.com}
}

\begin{document}

\maketitle

\begin{abstract}
Whitney's theorem allows isomorphism testing for connected simple graphs,
apart from \(K_3\) and \(K_{1,3}\), to be formulated as distinguishing their
line graphs.  However, the relation between fixed-dimensional
Weisfeiler--Leman (WL) expressivity on line graphs and on their roots remains
unresolved.  We study this relation through Implicit Line-Graph WL
(ILG-\(k\)-WL), which is exactly \(k\)-WL on \(L(G)\), executed over the
edges of \(G\) with line-graph relations derived from endpoint incidence and
without explicitly constructing \(L(G)\).  On the Whitney-general class, the relation
between root-domain and line-graph WL depends on \(k\).  For \(k=1,2\),
ILG-\(k\)-WL adds no
distinguishing power beyond root-domain \(1\)-WL and misses some pairs that
\(1\)-WL separates.  For \(k=3\), we prove the backward containment
\(L(G)\equiv_{3\text{-WL}}L(H)\Rightarrow
G\equiv_{3\text{-WL}}H\).  Strongly regular witness pairs, including the
Shrikhande/rook pair, show that ILG-\(3\)-WL is strictly more expressive than
\(3\)-WL.  The backward containment also extends to disconnected graphs with
no isolated vertices when every connected component is Whitney-general.
Deterministic ILG-\(3\)-WL separates all three substructure-counting witness
pairs, all \(105\) pairs in SR25, and \(359\) of \(400\) BREC pairs.
An untrained dense ILG-\(3\)-GNN gives the same pairwise verdicts on these
evaluations.
\end{abstract}

\section{Introduction}

Graph neural networks (GNNs) are used in molecular property prediction
\citep{gilmer2017mpnn}, social-network analysis
\citep{hamilton2017graphsage}, combinatorial optimization
\citep{khalil2017combinatorial}, and program analysis
\citep{allamanis2018programs}.  Their ability to distinguish non-isomorphic
inputs matters independently of the learning objective: if a graph encoder
assigns two graphs the same representation, a downstream predictor cannot
distinguish them from that representation alone.  The Weisfeiler--Leman (WL)
hierarchy formalizes this limitation.  Standard message-passing neural
networks (MPNNs) are bounded by \(1\)-WL
\citep{xu2019gin,morris2019weisfeiler}, which fails to separate many regular
and highly symmetric graphs.  Higher-order GNNs use states indexed by multiple
vertices, as in higher-order WL refinement.  The \(k\)-GNN family operates on
vertex subsets \citep{morris2019weisfeiler}, whereas Provably Powerful Graph
Networks (PPGN) use pair-indexed feature matrices and have provable
\(3\)-WL expressive power \citep{maron2019ppgn}.

Graph transformations can also change what a fixed WL dimension detects.
The line graph \(L(G)\) represents every root edge by a vertex and joins two
such vertices when their edges share an endpoint.
For connected simple graphs \(G,H\), Whitney's theorem
\citep{whitney1932congruent} states that \(L(G)\cong L(H)\) implies
\(G\cong H\), except when \(\{G,H\}=\{K_3,K_{1,3}\}\).  This isomorphism
statement does not imply equal fixed-dimensional WL power on roots and their
line graphs.

Existing line-graph models address community detection
\citep{chen2019supervised}, link prediction \citep{cai2022linegraph}, joint
node--edge learning \citep{jiang2023coembedding}, and atomistic learning
\citep{choudhary2021alignn}.  Other methods transform the refinement domain:
LGAN forms local line graphs \citep{du2026lgan}, ESAN processes selected
subgraphs \citep{bevilacqua2022esan}, and \(\delta\)-\(k\)-LWL restricts tuple
replacements to local neighbors \citep{morris2020sparse}.  Simplicial and
cellular WL add cliques or cells and refine colors through their incidences
\citep{bodnar2021topological,bodnar2021cellular}.  These works do not determine
how fixed-dimensional WL on \(L(G)\) compares with the same dimension on
\(G\).

\begin{figure}[t]
    \centering
    \includegraphics{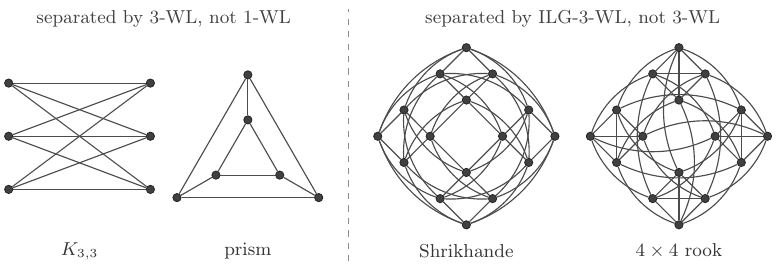}
    \caption{Graph pairs separated by \(3\)-WL (left) and ILG-\(3\)-WL
    (right).  \emph{Left:}
    \(3\)-WL separates \(K_{3,3}\) and the triangular prism, whereas
    \(1\)-WL does not.  \emph{Right:} ILG-\(3\)-WL separates the Shrikhande
    and \(4\times4\) rook graphs, whereas \(3\)-WL does not; their
    four-clique counts are \(0\) and \(8\), respectively.}
    \label{fig:hardness-pairs}
\end{figure}

This paper compares fixed-dimensional WL on \(G\) and \(L(G)\) through
Implicit Line-Graph WL (ILG-\(k\)-WL), which runs \(k\)-WL over root edges using
endpoint-derived line-graph relations without constructing \(L(G)\).
The Shrikhande and rook graphs illustrate why this comparison is nontrivial.
They are non-isomorphic strongly regular graphs with the same parameters and
form a standard hard case for \(3\)-WL
\citep{balcilar2021breaking,bodnar2021topological}.  Yet \(3\)-WL separates
their line graphs; line-graph transformation can break strong regularity
\citep{yang2024theoretical}.  Figure~\ref{fig:hardness-pairs} shows this pair.

\paragraph{Contributions.}
We determine the relation between root-domain and line-graph WL for
\(k\in\{1,2,3\}\) on the Whitney-general class.  ILG-\(k\)-WL is strictly
weaker than root-domain \(k\)-WL for \(k=1,2\)
(Corollary~\ref{cor:first-order-equiv}), whereas Theorem~\ref{thm:backward-transfer}
and strongly regular witnesses give
\(3\text{-WL}\prec\mathrm{ILG}\text{-}3\text{-WL}\)
(Corollary~\ref{cor:strict-three}).  One global pair update counts
four-cliques exactly on every finite simple graph
(Corollary~\ref{cor:universal-k4-counting}).  Backward transfer also extends
to disconnected graphs with no isolated vertices and Whitney-general
components (Theorem~\ref{thm:componentwise-backward}).  Both backward-transfer
theorems are formalized in Lean~4
(Appendix~\ref{app:machine-verification}).

Direct ILG-\(k\)-WL is \(k\)-WL on \(L(G)\).  For \(k\geq3\), its global-edge
form implements \((k{-}1)\)-FWL using \(\Theta(m^{k-1})\) root-edge tuple states
and endpoint-derived line-graph relations.  On bounded finite domains,
ILG-\(k\)-GNNs can match any fixed number of global-edge ILG-\(k\)-WL rounds,
and every parameterization is bounded by that refinement.

We evaluate deterministic ILG-\(3\)-WL and an untrained dense ILG-\(3\)-GNN on
the three substructure-counting witness pairs, all \(105\) SR25 pairs, and all
\(400\) BREC pairs.  Their verdicts coincide: \(3/3\), \(105/105\), and
\(359/400\) separations, respectively
(Section~\ref{sec:deterministic-verification}).

\section{Background and Conventions}
\label{sec:background}

Throughout, \(G\) and \(H\) denote finite, undirected, simple graphs, and
\(L(G)\) and \(L(H)\) denote their line graphs.  For one graph \(G\), we
write \(V(G)\) and \(E(G)\) for its vertex and edge sets and abbreviate
\(n=|V(G)|\) and \(m=|E(G)|\) when no second graph is present.  Edges are
unordered two-element subsets \(e=\{u,v\}\); an implementation may list the
edges in any order and store the two endpoints of each edge in either order.
For \(u\in V(G)\), let \(N_G(u)\) and \(d_G(u)\) denote its neighborhood and
degree; we omit the subscript only when the graph is unambiguous.
We write \(A_G\) for the adjacency matrix of \(G\), and \(I_S,J_S\) for the
identity and all-ones matrices indexed by a finite set \(S\).
Vertices and edges of \(G,H\) are called \emph{root vertices} and
\emph{root edges}; a root edge \(e\in E(G)\) is canonically the line-graph
vertex \(e\in V(L(G))\).  We write
\(\operatorname{Deg}(G)=\multiset{d_G(u):u\in V(G)}\) for the root degree
multiset.

The line graph \(L(G)\) has vertex set \(E(G)\), with two vertices adjacent
exactly when the corresponding distinct edges of \(G\) share an endpoint.

\begin{definition}[Whitney-general class]
\label{def:whitney-general}
A graph is \emph{Whitney-general} if it is connected, simple, and not
isomorphic to \(K_3\) or \(K_{1,3}\).  Whitney's theorem
\citep{whitney1932congruent} guarantees that its line graph determines it up
to isomorphism.  A graph is \emph{componentwise Whitney-general} if it has no
isolated vertices and every connected component is Whitney-general.
\end{definition}

\begin{definition}[Line-graph relation code]
For \(e,f\in E(G)\), write \(e\sim_{L(G)}f\) when \(e\neq f\) and
\(e\cap f\neq\emptyset\) --- that is, when \(e\) and \(f\) are adjacent in
\(L(G)\) --- and define the three-valued relation code
\[
    r_G(e,f)=
    \begin{cases}
        0, & e=f,\\
        1, & e\sim_{L(G)} f,\\
        2, & \text{otherwise}.
    \end{cases}
\]
\end{definition}

The \emph{graph-level signature} of a coloring \(c:S\to\mathcal C\) over a
finite state set \(S\) is its multiset of colors
\(\multiset{c(s):s\in S}\); a refinement procedure distinguishes two graphs
when their graph-level signatures differ after stabilization or after a
specified number of rounds.
All occurrences of \(\hash\) below mean a fixed injective encoding of the
displayed finite signature; only the induced color partition matters.  When
two graphs are compared, their refinements are run in parallel with the same
encoding of signatures, while every replacement sweep remains inside its own
graph.  Thus equal color names have a graph-independent meaning and no
vertexwise correspondence is assumed.

\paragraph{WL convention.}
Throughout, \(1\)-WL denotes ordinary vertex color refinement: from the
constant initial color \(c_0(v)\), it updates
\[
 c_{t+1}(v)=\hash\left(c_t(v),
   \multiset{c_t(w):w\in N_G(v)}\right).
\]
For every
\(k\geq2\), \(k\)-WL denotes the \emph{oblivious} \(k\)-dimensional WL
algorithm customary in graph learning
\citep{morris2019weisfeiler,feng2023fwl}: it colors ordered
\(k\)-tuples with \emph{separate per-coordinate replacement}. We use
``\(k\)-WL'' without further qualification for this convention throughout.
For a graph \(G\), a tuple
\(\mathbf{v}=(v_1,\ldots,v_k)\in V(G)^k\), and \(k\geq2\), initialize
\(c_0(\mathbf v)=\hash(\operatorname{atp}_G(\mathbf v))\), where the atomic
type records all equality and adjacency relations among the coordinates, and
update by
\[
    c_{t+1}(\mathbf{v})
    =
    \hash\left(
        c_t(\mathbf{v}),
        \left(
            \multiset{
                c_t(\mathbf{v}[i\leftarrow w])
                : w\in V(G)
            }
        \right)_{i=1}^{k}
    \right),
\]
where \(\mathbf{v}[i\leftarrow w]\) denotes the tuple with its \(i\)-th
coordinate replaced by~\(w\):
\[
 \mathbf v[i\leftarrow w]
 =(v_1,\ldots,v_{i-1},w,v_{i+1},\ldots,v_k).
\]
This procedure stores
\(\Theta(n^k)\) tuple states. For \(k\geq3\), the non-oblivious or folklore
algorithm on ordered \((k{-}1)\)-tuples, denoted \((k{-}1)\)-FWL, has the
same stable graph-distinguishing power as this separate-coordinate \(k\)-WL
algorithm \citep{feng2023fwl}. It uses
\emph{simultaneous} all-coordinate replacement and \(\Theta(n^{k-1})\) tuple
states. This is an arity-shift equivalence of graph distinguishability, not
a round-by-round identification of tuple colors. We use this
global-replacement form in Section~\ref{sec:global-edge}.

For either ordinary \(1\)-WL or the above \(k\)-WL convention, we write
\[
  G\equiv_{k\text{-WL}}H
\]
when the stable graph-level signatures of \(G\) and \(H\), computed with the
shared encoding convention above, are equal.  We write
\(G\not\equiv_{k\text{-WL}}H\) when they differ.  Unless features are stated
explicitly, all graphs are initially uncolored: \(1\)-WL starts from one
constant vertex color and \(k\)-WL for \(k\geq2\) starts from atomic types.
For two refinement procedures \(\mathsf A,\mathsf B\), we write
\(\mathsf A\preceq\mathsf B\) when every graph pair separated by \(\mathsf
A\) is also separated by \(\mathsf B\), and \(\mathsf A\prec\mathsf B\) when
the containment is strict.

Two low-dimensional conventions matter below. On finite simple graphs,
\(1\)-WL and \(2\)-WL have the same graph-distinguishing power
\citep{feng2023fwl}; and substituting \(k=1\) into the tuple rule
above would give a trivial global multiset, which is why ordinary \(1\)-WL
is defined separately.

\section{Implicit Line-Graph WL}

Since \(V(L(G))=E(G)\), \(k\)-WL on \(L(G)\) can be executed over the root
edges.  We give its first-order, direct higher-order, and arity-shifted
global-edge forms.

\subsection{ILG-\texorpdfstring{\(1\)}{1}-WL}

ILG-\(1\)-WL is ordinary \(1\)-WL on \(L(G)\): each root edge \(e\) starts
from a common color (or its given edge label) and updates from
\(\multiset{c_t(f):f\neq e,\ f\cap e\neq\emptyset}\).  Replacing the hash by
neural message, aggregation, and update maps gives ILG-\(1\)-GNN.

\begin{lemma}[ILG-\(1\)-GNN is an MPNN on the line graph]
\label{lem:ilg1gnn-linegraph-mpnn}
With shared initial features and message, update, aggregation, and readout
maps, an ILG-\(1\)-GNN on \(G\) and an MPNN on \(L(G)\) agree at every layer
and at graph readout.
\end{lemma}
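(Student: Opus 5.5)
The argument is an induction on the layer index \(t\). It rests on the canonical identification \(V(L(G))=E(G)\) fixed in Section~\ref{sec:background}, under which a root edge \(e\) is literally the line-graph vertex \(e\).

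\textbf{Neighborhood identity.} First we record that, for every \(e\in E(G)\), the ILG message set \(\{f\in E(G): f\neq e,\ f\cap e\neq\emptyset\}\) coincides with \(N_{L(G)}(e)\). This is exactly the definition of adjacency in \(L(G)\), equivalently of the relation \(e\sim_{L(G)}f\), that is, \(r_G(e,f)=1\). Multiplicities also agree: both graphs are simple, so each neighbor appears once on either side. It follows that every multiset aggregated on one side is the same multiset as on the other, provided the per-element features agree.

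\textbf{Induction on layers.} Let \(h^{(t)}_{\mathrm{ILG}}(e)\) and \(h^{(t)}_{L}(e)\) denote the layer-\(t\) features of the two models.
\begin{itemize}
\item \emph{Base case.} \(h^{(0)}_{\mathrm{ILG}}=h^{(0)}_{L}\) by the shared-initial-features hypothesis. This covers a common constant color, or a given edge label viewed as a vertex label of \(L(G)\).
\item \emph{Inductive step.} Assume the two feature maps agree at layer \(t\).
\begin{itemize}
\item For each \(e\) and each neighbor \(f\), the shared message map receives identical inputs \(\bigl(h^{(t)}(e),h^{(t)}(f)\bigr)\), plus identical edge features if any are present.
\item The shared permutation-invariant aggregation then receives the same multiset, by the neighborhood identity.
\item The shared update map receives the same pair of inputs.
\end{itemize}
Hence \(h^{(t+1)}_{\mathrm{ILG}}(e)=h^{(t+1)}_{L}(e)\) for every \(e\).
\item \emph{Readout.} The readout is applied to the multiset \(\multiset{h^{(T)}(e):e\in E(G)}\). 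This is the same multiset as \(\multiset{h^{(T)}(x):x\in V(L(G))}\), so the graph-level outputs agree.
\end{itemize}

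\textbf{Main obstacle.} The only delicate point is well-definedness with respect to implementation choices. An ILG implementation may list edges in any order and store endpoints as either \((u,v)\) or \((v,u)\). We will argue as follows:
\begin{itemize}
\item The ILG neighborhood is defined through the set condition \(f\cap e\neq\emptyset\), so it is independent of endpoint orientation.
\item Aggregation is permutation invariant, so it is independent of edge order.
\item If the implementation enumerates neighbors through endpoints \(u\) and \(v\) separately, an edge \(f\) sharing an endpoint with \(e\) is met through exactly one endpoint. Two distinct simple edges cannot share both endpoints, so no neighbor is double-counted.
\end{itemize}
With this observation, the implementation-level aggregation equals the multiset over \(N_{L(G)}(e)\), and the induction goes through unchanged.
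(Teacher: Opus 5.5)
Your proposal is correct and matches the paper's proof step for step. Both arguments use the neighborhood identity \(N_{\mathrm{ILG}}(e)=N_{L(G)}(e)\), then a layerwise induction through the shared message maps (with matching optional edge features), permutation-invariant aggregation and update, and finally an order-invariant readout over the common index set \(E(G)=V(L(G))\); your remarks on edge order and endpoint orientation do the same job as the paper's permutation-matrix observation \(\mathsf Z_{L(G)}^{(\ell)}=\mathsf P_E\widetilde{\mathsf Z}^{(\ell)}\).
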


\begin{proof}
Their neighborhoods coincide; see
Appendix~\ref{app:ilg1gnn-linegraph-mpnn}.
\end{proof}

\subsection{ILG-\texorpdfstring{\(k\)}{k}-WL}

For \(k\geq2\), direct ILG-\(k\)-WL applies the \(k\)-WL update of
Section~\ref{sec:background} to ordered tuples in \(E(G)^k\), with atomic
types computed from \(r_G\) and replacements ranging over \(E(G)\).

\begin{proposition}[ILG-\(k\)-WL equals \(k\)-WL on the line graph]
\label{prop:ilg-kwl}
For every graph \(G\) and \(k\geq1\), ILG-\(k\)-WL on \(G\) and \(k\)-WL on
\(L(G)\) produce identical state partitions after each round, up to canonical
renaming of colors.
\end{proposition}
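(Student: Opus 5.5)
The plan is a direct induction on the round number \(t\), using the canonical identification \(V(L(G))=E(G)\) as the state-space bijection. It is the identity on root edges, and componentwise the identity on ordered tuples \(E(G)^k = V(L(G))^k\). Under this bijection the two procedures will turn out to run the same recursion on the same index set with the same initial data. The "canonical renaming of colors" is needed only because the hash arguments are written in different alphabets (relation codes from \(r_G\) versus atomic types of \(L(G)\)). So I will prove the stronger statement: for every \(t\) and every pair of states \(\mathbf e,\mathbf f\in E(G)^k\),
\[
c^{\mathrm{ILG}}_t(\mathbf e)=c^{\mathrm{ILG}}_t(\mathbf f)\iff c^{L(G)}_t(\mathbf e)=c^{L(G)}_t(\mathbf f).
\]
I will in fact show that there is an injective renaming \(\phi_t\) with \(c^{L(G)}_t=\phi_t\circ c^{\mathrm{ILG}}_t\).

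The first step is the adjacency correspondence. By the definition of the line-graph relation code, \(r_G(e,f)=0\) iff \(e=f\) in \(V(L(G))\). Likewise \(r_G(e,f)=1\) iff \(e\ne f\) and \(e\cap f\ne\emptyset\), which is exactly adjacency in \(L(G)\). Finally, \(r_G(e,f)=2\) iff the two line-graph vertices are distinct and non-adjacent.

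For \(k=1\), this gives \(N_{L(G)}(e)=\{f: r_G(e,f)=1\}\), which is precisely the neighborhood used by ILG-\(1\)-WL. This is the same observation underlying Lemma~\ref{lem:ilg1gnn-linegraph-mpnn}. For \(k\ge2\), the atomic type of \(\mathbf e\) in \(L(G)\) records the equality and adjacency pattern over all coordinate pairs. By the correspondence above, this pattern is a bijective re-encoding of the matrix \((r_G(e_i,e_j))_{i,j}\). Hence there is an injective \(\phi_0\) between the two initial color alphabets with \(c_0^{L(G)}=\phi_0\circ c_0^{\mathrm{ILG}}\). If edge labels are present, I include them identically in both procedures.

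The inductive step is short once \(\phi_t\) exists. Both procedures hash the current color together with the same \(k\) multisets \(\multiset{c_t(\mathbf e[i\leftarrow f]): f\in E(G)}\), because the replacement range \(E(G)\) equals \(V(L(G))\). For \(k=1\), they instead hash the current color together with the multiset over the same neighborhood. Applying \(\phi_t\) elementwise maps the ILG signature injectively onto the \(L(G)\) signature. Composing with the two injective hashes therefore yields an injective \(\phi_{t+1}\). Since the partition induced by a coloring is invariant under injective renaming, the state partitions agree after every round, and so do the graph-level multisets up to \(\phi_t\).

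I expect the only real obstacle to be bookkeeping: making sure the ILG initial type loses no information relative to the atomic type and vice versa. This includes the diagonal entries and the fact that \(r_G\) is symmetric, matching an undirected \(L(G)\). It also requires that the per-coordinate replacement in ILG ranges over all of \(E(G)\), not just adjacent edges.

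When two graphs are compared in parallel with a shared encoding, the renamings \(\phi_t\) must be defined on the shared color alphabet rather than per graph. This works because each \(\phi_t\) is built purely from the hash inputs and does not depend on the graph.
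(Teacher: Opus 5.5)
Your proposal is correct and takes essentially the same route as the paper: identify \(E(G)^k\) with \(V(L(G))^k\), check that the \(k=1\) neighborhoods, the initial atomic types (via \(r_G\)), and the replacement ranges coincide, and induct on the round. The only difference is that the paper takes both initializations to be literally \(\hash(\operatorname{atp}_{L(G)}(\mathbf e))\), so the colors agree exactly, whereas your explicit graph-independent renaming \(\phi_t\) also covers an \(r_G\)-based initial encoding and the shared-encoding comparison of two graphs.
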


\begin{proof}
Under \(V(L(G))=E(G)\), the domains, atomic types, and replacement ranges
coincide; see Appendices~\ref{app:atomic-types} and~\ref{app:ilg-kwl-proof}.
\end{proof}

\subsection{Global-Edge Form with \texorpdfstring{\(\Theta(m^{k-1})\)}{Theta(m\^{}(k-1))} Tuple-State Storage}
\label{sec:global-edge}

Direct ILG-\(k\)-WL uses \(\Theta(m^k)\) tuple states and
\(\Theta(m^{k+1})\) replacement entries per round.  Applying the standard
arity shift to \(L(G)\) gives the same stable distinguishing power on
\(\Theta(m^{k-1})\) tuples \citep{feng2023fwl}; its \(k=3\) instance underlies
PPGN \citep{maron2019ppgn}.  Global-edge ILG-\(k\)-WL is this
\((k{-}1)\)-FWL formulation on ordered root-edge tuples: for each
\(g\in E(G)\), its message collects the \(k{-}1\) coordinate-wise
replacements \(c_t(\mathbf e[i\leftarrow g])\).  Appendix
\ref{app:global-edge-2fwl} gives the full recurrence.

This form uses \(\Theta(m^{k-1})\) tuple states and \(\Theta(m^k)\)
replacement entries per round.  Endpoint comparisons evaluate \(r_G(e,f)\)
in \(O(1)\) time from a \(\Theta(m)\)-space \(m\times2\) table, avoiding
\(\Theta(m+m_L)\) stored line-graph adjacency without changing tuple-state or
update asymptotics.  Appendices~\ref{app:complexity}
and~\ref{app:empirical-complexity} give the details and measurements.

\begin{corollary}[Global-edge ILG-\(k\)-WL has \(k\)-WL graph-distinguishing power on \(L(G)\)]
\label{cor:global-edge-fwl}
For every simple graph \(G\) and every \(k\geq3\), global-edge ILG-\(k\)-WL
has the same graph-level distinguishing power as \(k\)-WL on \(L(G)\), using
\(\Theta(m^{k-1})\) rather than \(\Theta(m^k)\) tuple states.
\end{corollary}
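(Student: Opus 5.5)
The argument composes two identifications, one exact and one imported from the literature. Global-edge ILG-\(k\)-WL is, by its definition in Section~\ref{sec:global-edge}, the \((k{-}1)\)-FWL recurrence run on ordered tuples in \(E(G)^{k-1}\). Its initial colors are the atomic types computed from \(r_G\), and its message for each \(g\in E(G)\) is the tuple \(\bigl(c_t(\mathbf e[i\leftarrow g])\bigr)_{i=1}^{k-1}\).

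\textbf{Step 1: exact identification with \((k{-}1)\)-FWL on \(L(G)\).} The first step shows that this recurrence coincides, round by round and up to canonical renaming of colors, with \((k{-}1)\)-FWL executed on \(L(G)\). The argument mirrors Proposition~\ref{prop:ilg-kwl}. Under the canonical bijection \(V(L(G))=E(G)\), the state domains \(E(G)^{k-1}\) and \(V(L(G))^{k-1}\) are equal. Since \(r_G(e,f)\) encodes exactly equality versus \(L(G)\)-adjacency versus non-adjacency, the atomic type of a tuple computed from \(r_G\) equals its atomic type in \(L(G)\); I would cite Appendix~\ref{app:atomic-types} for this. The replacement range is \(E(G)=V(L(G))\), and the simultaneous-replacement messages have the same form on both sides. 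An induction on \(t\) with a shared injective \(\hash\) then gives identical partitions at every round, and hence identical stable graph-level signatures. In particular, for any \(G,H\), global-edge ILG-\(k\)-WL separates \(G\) from \(H\) if and only if \((k{-}1)\)-FWL separates \(L(G)\) from \(L(H)\).

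\textbf{Step 2: arity shift.} The second step invokes the equivalence recalled in Section~\ref{sec:background} \citep{feng2023fwl}: for \(k\geq3\), \((k{-}1)\)-FWL and the separate-coordinate \(k\)-WL have the same stable graph-distinguishing power on every pair of finite simple graphs. We apply it to the pair \(L(G),L(H)\). This is legitimate because line graphs of simple graphs are themselves finite simple graphs. Combining with Step~1, global-edge ILG-\(k\)-WL separates \(G,H\) iff \(L(G)\not\equiv_{k\text{-WL}}L(H)\). By Proposition~\ref{prop:ilg-kwl}, this is in turn equivalent to direct ILG-\(k\)-WL separating \(G,H\).

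\textbf{Step 3: storage count.} Finally, the state count is immediate: there are \(|E(G)|^{k-1}=m^{k-1}\) ordered tuples, versus \(m^k\) for direct ILG-\(k\)-WL.

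\textbf{Main obstacle.} The delicate point is Step~2, where care with conventions is needed. The equivalence is only at the level of stable graph signatures, not a round-by-round match of colors. It also depends on \(k\)-WL being the oblivious, per-coordinate-multiset version with atomic-type initialization, together with matching initialization for FWL. I would check that the global-edge recurrence in Appendix~\ref{app:global-edge-2fwl} uses exactly the FWL convention of \citet{feng2023fwl}. That means the same initial atomic types, and ordered \((k{-}1)\)-tuples of replacement colors indexed by \(g\), aggregated as a multiset over \(g\). Any discrepancy, such as omitting \(c_t(\mathbf e)\) from the update, would only affect refinement speed, not the stable partition, so it can be absorbed.
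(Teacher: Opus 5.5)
Your proposal is correct and follows the paper's route: the paper also identifies the global-edge recurrence with \((k{-}1)\)-FWL on \(L(G)\) by induction on rounds (Appendix~\ref{app:global-edge-2fwl}) and then invokes the arity shift of \citet{feng2023fwl}. One detail you did not anticipate is that the full recurrence in that appendix appends the relation codes \(r_G(g,e_i)\) to each replacement entry, so Step~1 needs the paper's short redundancy argument: each replacement color refines its initial atomic type and therefore already determines these codes.
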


\begin{proof}
The recurrence is exactly \((k{-}1)\)-FWL on \(L(G)\)
(Appendix~\ref{app:global-edge-2fwl}), and the arity shift
\citep{feng2023fwl} gives the claim.
\end{proof}

\section{Expressivity Analysis}
\label{sec:expressivity}

ILG-\(k\)-WL is strictly weaker than root-domain \(k\)-WL for \(k=1,2\), but
strictly stronger for \(k=3\), on Whitney-general graphs.  At \(k=1,2\), some
pairs separated on the roots are not separated on their line graphs, even
when the roots have equal order.  At \(k=3\), joint pair refinement on the
line graphs determines the stable ordered-pair coloring of the roots and
separates some pairs that root-domain \(3\)-WL does not.  The backward result
extends componentwise to the disconnected class in
Definition~\ref{def:whitney-general}.

\subsection{First-Order: Loss at \texorpdfstring{\(k=1\)}{k=1}}
\label{sec:first-order-gap}

The restriction to Whitney-general roots removes the classical
\(K_3,K_{1,3}\) ambiguity.  Any loss shown below therefore occurs even though
the line graph determines its root up to isomorphism.

\begin{proposition}[\(1\)-WL equivalence transfers to line graphs]
\label{prop:forward-transfer}
For all simple graphs \(G,H\) (without restriction):
\[
    G\equiv_{1\text{-WL}} H \quad\Longrightarrow\quad L(G)\equiv_{1\text{-WL}} L(H).
\]
Thus, ILG-\(1\)-WL can separate a pair only if root-domain \(1\)-WL also
separates it.
\end{proposition}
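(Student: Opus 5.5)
We derive the statement from an explicit description of the line-graph colors in terms of the stable root coloring. Let \(c\) be the stable \(1\)-WL coloring of the disjoint union \(G\sqcup H\); running both graphs in parallel with the shared encoding gives exactly this. The hypothesis \(G\equiv_{1\text{-WL}}H\) says that the color multisets \(\multiset{c(u):u\in V(G)}\) and \(\multiset{c(u):u\in V(H)}\) agree. A stable coloring is equitable: any vertex of color \(a\) has the same number \(n_{a,b}\) of neighbours of color \(b\), and this holds uniformly across both graphs. For a root edge \(e=\{u,v\}\) we set \(\kappa(e)=\multiset{c(u),c(v)}\), which is well defined because it ignores endpoint order.

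\textbf{Step 1: an equitable partition of \(L(G)\sqcup L(H)\).} We show that the partition of \(E(G)\sqcup E(H)\) by \(\kappa\) is equitable on the line graph. Fix an edge \(e=\{u,v\}\) with \(c(u)=a\), \(c(v)=b\), and a target class \(\{x,y\}\). The line-graph neighbours \(f\) of \(e\) split into two disjoint groups: edges \(\{u,w\}\) with \(w\neq v\), and edges \(\{v,w\}\) with \(w\neq u\). These groups are disjoint because the graphs are simple, so no edge other than \(e\) contains both \(u\) and \(v\).

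We count the first group. The number of \(w\in N(u)\) with \(c(w)=z\) is \(n_{a,z}\), and we subtract one when \(z=b\) to exclude \(v\). Hence the number of such \(f\) with \(\kappa(f)=\multiset{a,z}\) is \(n_{a,z}-[z=b]\). The second group is symmetric, giving \(n_{b,z}-[z=a]\). Summing over those \(z\) with \(\multiset{a,z}=\{x,y\}\) or \(\multiset{b,z}=\{x,y\}\) gives a count that depends only on \(\multiset{a,b}\) and \(\{x,y\}\). We check that the formula is symmetric in \(a,b\) and handle the case \(a=b\) carefully: when \(a=b\), the subtraction occurs in both groups, and the formula remains consistent.

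Since \(\kappa\) is equitable and 1-WL on \(L(G)\sqcup L(H)\) starts from a constant coloring, induction on rounds shows that the 1-WL coloring of the line graphs is always coarser than \(\kappa\). In particular, the stable line-graph color of an edge is a function \(\phi(\kappa(e))\).

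\textbf{Step 2: matching class sizes.} Let \(m_{\{a,b\}}(G)\) be the number of edges of \(G\) in class \(\multiset{a,b}\). It equals \(N_a(G)\,n_{a,b}\) when \(a\neq b\), and \(N_a(G)\,n_{a,a}/2\) when \(a=b\), where \(N_a(G)\) is the number of vertices of color \(a\). By hypothesis \(N_a(G)=N_a(H)\), and the \(n_{a,b}\) are shared, so the \(\kappa\)-class sizes agree between \(G\) and \(H\). Therefore the multisets \(\multiset{\phi(\kappa(e))}\) over \(E(G)\) and \(E(H)\) agree, which gives \(L(G)\equiv_{1\text{-WL}}L(H)\). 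The second sentence of the proposition then follows from Proposition~\ref{prop:ilg-kwl} with \(k=1\).

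\textbf{Main obstacle.} The delicate step is the equitability count in Step 1: we must not double count, must exclude \(e\) itself, and must get the correction terms right when \(a=b\) or when \(z\) coincides with \(a\) or \(b\). A safer alternative is to argue by rounds, showing that \(c_t\) on the line graph is determined by the pair of endpoint colors \(c_{t}\) on the root (with an index shift). This is the approach we would fall back on if the equitable-partition bookkeeping becomes messy.
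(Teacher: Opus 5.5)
Your proof is correct and follows essentially the same route as the paper. Both arguments take the induced edge coloring \(\multiset{c(u),c(v)}\), show it is an equitable partition of the line graph whose class sizes and quotient are fixed by the roots' stable \(1\)-WL data, and conclude that refinement from the uniform start, which always stays coarser than this partition, gives equal histograms on \(L(G)\) and \(L(H)\). Your explicit wing counts \(n_{a,z}-[z=b]\) and the class-size formulas \(N_a n_{a,b}\) and \(N_a n_{a,a}/2\) spell out details the paper leaves implicit, and working on \(G\sqcup H\) makes the shared parameters across the two graphs automatic.
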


\begin{proof}[Proof sketch]
By the fractional-isomorphism/equitable-partition characterization
\citep{ramana1994fractional,grohe2017descriptive},
\(G\equiv_{1\text{-WL}} H\) iff their stable \(1\)-WL colorings have
matching class sizes and between-class neighbor counts. The table of these
counts is the equitable quotient. The induced edge coloring
\(\widehat c_G(e)=\multiset{c_G(u),c_G(v)}\), where \(c_G\) is the stable
vertex coloring, is an equitable partition of \(L(G)\) whose class sizes and
equitable quotient are determined by the \(1\)-WL data of~\(G\). Since these
agree for \(G\) and \(H\), the stable \(1\)-WL colorings of \(L(G)\) and \(L(H)\)
coincide, giving \(L(G)\equiv_{1\text{-WL}} L(H)\). See
Appendix~\ref{app:forward-transfer} for the full argument.
\end{proof}

\begin{proposition}[Equal-order backward failure at first order]
\label{prop:first-order-loss}
There exist Whitney-general graphs \(G,H\) with
\(L(G)\equiv_{1\text{-WL}} L(H)\) but \(G\not\equiv_{1\text{-WL}} H\).
\end{proposition}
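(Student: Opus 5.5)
The plan is to use an explicit witness pair whose line graphs are both regular of the same degree on the same number of vertices, while the roots already differ in their degree multisets. \(1\)-WL on a \(d\)-regular graph stabilizes immediately at the constant coloring. Hence any two \(d\)-regular graphs with the same number of vertices have equal stable graph-level signatures, which is how \(L(G)\equiv_{1\text{-WL}}L(H)\) will be certified. In the other direction, \(G\not\equiv_{1\text{-WL}}H\) follows as soon as \(\operatorname{Deg}(G)\neq\operatorname{Deg}(H)\), since the first \(1\)-WL round records each vertex's degree. A difference in vertex count also suffices.

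The key computation is that the line-graph degree of a root edge \(e=\{u,v\}\) is \(d_G(u)+d_G(v)-2\). So \(L(G)\) is regular exactly when \(d_G(u)+d_G(v)\) is constant over the edges of \(G\). This happens for regular roots and for connected semiregular bipartite roots. I therefore pair a regular root with a biregular bipartite root having the same endpoint-degree sum and the same edge count.

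Concretely, take \(G=Q_3\), the cube. It is cubic, with \(8\) vertices and \(12\) edges, so \(L(G)\) is \(4\)-regular on \(12\) vertices. For \(H\), take three hub vertices \(a_1,a_2,a_3\) and, for each of the three hub pairs \(\{a_i,a_j\}\), two subdivision vertices adjacent to both \(a_i\) and \(a_j\). This is the subdivision of the doubled triangle. It is simple, connected and bipartite, with \(9\) vertices and \(12\) edges. Hubs have degree \(4\) and subdivision vertices have degree \(2\), so every edge has endpoint-degree sum \(6\), and \(L(H)\) is \(4\)-regular on \(12\) vertices. Neither graph is \(K_3\) or \(K_{1,3}\), so both are Whitney-general. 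By the observation above \(L(G)\equiv_{1\text{-WL}}L(H)\). On the other hand \(\operatorname{Deg}(G)=\multiset{3^8}\neq\multiset{4^3,2^6}=\operatorname{Deg}(H)\), so \(G\not\equiv_{1\text{-WL}}H\) after one round.

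The main obstacle is the equal-order refinement suggested by the surrounding text. The pair above has \(8\neq 9\) vertices, and it appears that pairing a regular root against a biregular root cannot simultaneously match orders and edge counts in small cases. For an equal-order witness I would drop regularity of the line graphs and use the equitable-partition criterion from the proof of Proposition~\ref{prop:forward-transfer}. The aim is two roots \(G,H\) of equal order and size whose induced edge partitions, by endpoint-degree pairs, have coarsest equitable quotients of \(L(G)\) and \(L(H)\) that agree, while \(\operatorname{Deg}(G)\neq\operatorname{Deg}(H)\). The first attempt would be a small exhaustive search over connected graphs on about \(6\) to \(8\) vertices, comparing the \(1\)-WL signatures of the line graphs against the root degree multisets. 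I would then verify the found pair by hand through its equitable quotient. If that search fails, the unequal-order pair above already proves the proposition as stated.
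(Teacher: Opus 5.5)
\paragraph{What your pair proves.} It proves the displayed existence sentence. \(Q_3\) and the subdivided doubled triangle are connected and simple, and neither is \(K_3\) or \(K_{1,3}\). Both line graphs are \(4\)-regular on \(12\) vertices, so \(1\)-WL stabilizes at the constant coloring on each and the shared-encoding signatures coincide. Meanwhile \(\operatorname{Deg}(Q_3)\neq\operatorname{Deg}(H)\). Certifying the line-graph side by regularity is also simpler than the paper's equitable-quotient computation.

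\paragraph{What is missing.} The proposition is the \emph{equal-order} failure, and the paper relies on that feature. The section states that the loss occurs ``even when the roots have equal order'', and Remark~\ref{rem:first-order-failure} says root order is not the only reason backward transfer fails. Your \(8\)-versus-\(9\)-vertex pair does not deliver this. The proposed exhaustive search is a plan, not a witness, so the equal-order part is unproved.

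\paragraph{Why your route cannot supply it.} The obstruction holds in every size, not just small cases. Suppose \(L(G)\) is \(D\)-regular and \(L(H)\equiv_{1\text{-WL}}L(G)\). Then \(L(H)\) is \(D\)-regular on the same \(m\) vertices, so every root edge has endpoint-degree sum \(s=D+2\). A connected root with this property takes one of two forms:
\begin{enumerate}
\item it is \(s/2\)-regular, of order \(4m/s\);
\item it is \((a,b)\)-biregular bipartite with \(a+b=s\) and \(a\neq b\), of order \(m/a+m/b=ms/(ab)>4m/s\).
\end{enumerate}
So the order \(n\) decides between the two cases. In the biregular case it fixes \(ab=ms/n\), hence \(\{a,b\}\) and the class sizes. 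Roots with the same such data are \(1\)-WL-equivalent, because the degree partition is equitable. Hence two roots on this route are \(1\)-WL-equivalent if and only if they have the same order. Every witness the route produces is separated by counting vertices alone, which is exactly the mechanism the equal-order formulation is meant to exclude.

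\paragraph{What the paper does instead.} The paper uses non-regular line graphs. Its roots both have order \(8\) and size \(10\), with degree histograms \(4^2 2^6\) and \(3^6 1^2\). Both line graphs have two degree-two and eight degree-four vertices. One further refinement round stabilizes them with cells of sizes \(2,4,4\) and the common quotient \(A{:}\,(0,2,0)\), \(B{:}\,(1,1,2)\), \(C{:}\,(0,2,2)\). One realization of these data is as follows.
\begin{itemize}
\item \(G\): nonadjacent hubs \(h_1,h_2\), both joined to \(x_5\) and \(x_6\), plus the edges \(h_1x_1,h_1x_3,h_2x_2,h_2x_4,x_1x_2,x_3x_4\).
\item \(H\): a \(4\)-cycle \(z_1z_2z_3z_4\), a vertex \(y_1\) adjacent to \(z_1,z_2\), a vertex \(y_2\) adjacent to \(z_3,z_4\), and pendant edges \(y_1p_1,y_2p_2\).
\end{itemize}
In both line graphs, \(A\) consists of the two line vertices of degree two. \(B\) consists of the four degree-four line vertices adjacent to \(A\), and \(C\) of the remaining four. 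The stated quotient can be checked directly.
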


\begin{figure}[t]
    \centering
    \includegraphics{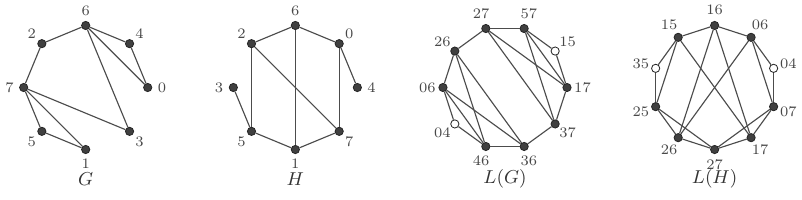}
    \caption{Equal-order backward failure at first order.  Root \(1\)-WL
    separates \(G,H\) by their degree multisets, but \(L(G),L(H)\) have the
    same stable equitable quotient with cells of sizes \(2,4,4\).  Line
    vertices are labeled by their root edges; hollow and filled vertices have
    degrees two and four.}
    \label{fig:first-order-pair}
\end{figure}

\begin{proof}
The pair and its line graphs are displayed in
Figure~\ref{fig:first-order-pair}; both graphs are Whitney-general, and
their root degree multisets differ, so \(1\)-WL separates \(G\) and
\(H\) in its first round. In either line graph, degree
refinement followed by one neighbor-color refinement stabilizes with cells
of sizes \(2,4,4\) and the same equitable quotient
\(A{:}\,(0,2,0)\), \(B{:}\,(1,1,2)\), \(C{:}\,(0,2,2)\), so
\(L(G)\equiv_{1\text{-WL}}L(H)\). The computation is verified in
Appendix~\ref{app:first-order-transfer}.
\end{proof}

\begin{remark}[Why \(k=1\) and \(k=2\) fail]
\label{rem:first-order-failure}
Root order is not the only reason backward transfer fails: a line color records
endpoint-degree sums and an equitable quotient of edge neighborhoods, but
not how those colors are distributed among root stars. Joint pair
refinement at \(k=3\) additionally records common-neighbor data for ordered
edge pairs, from which Section~\ref{sec:backward-transfer} recovers the
stable ordered-pair coloring of the root.
\end{remark}

\begin{corollary}[ILG is strictly weaker for \(k=1,2\)]
\label{cor:first-order-equiv}
On the Whitney-general class:
\[
  \text{ILG-}1\text{-WL} \prec 1\text{-WL},
  \qquad
  \text{ILG-}2\text{-WL} \prec 2\text{-WL}.
\]
\end{corollary}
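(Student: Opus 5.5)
The plan is to unpack each strict containment into its two halves and assemble it from results already stated. Here \(\text{ILG-}k\text{-WL}\preceq k\text{-WL}\) means: every Whitney-general pair \((G,H)\) separated by \(k\)-WL on \((L(G),L(H))\) is also separated by \(k\)-WL on \((G,H)\). Strictness then requires a Whitney-general pair separated on the roots but not on the line graphs. By Proposition~\ref{prop:ilg-kwl}, ILG-\(k\)-WL on \(G\) is exactly \(k\)-WL on \(L(G)\), so I will phrase every step directly in terms of \(L(G)\) and \(L(H)\).

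For \(k=1\), containment is the contrapositive of Proposition~\ref{prop:forward-transfer}. If \(G\equiv_{1\text{-WL}}H\), then \(L(G)\equiv_{1\text{-WL}}L(H)\), so any pair separated by ILG-\(1\)-WL is already separated by root-domain \(1\)-WL. This holds for all simple graphs, so in particular on the Whitney-general class. Strictness is witnessed by the pair of Proposition~\ref{prop:first-order-loss}: both graphs are Whitney-general, \(G\not\equiv_{1\text{-WL}}H\), yet \(L(G)\equiv_{1\text{-WL}}L(H)\).

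For \(k=2\), I will use the convention recalled in Section~\ref{sec:background} that oblivious \(2\)-WL and \(1\)-WL have the same graph-distinguishing power on finite simple graphs \citep{feng2023fwl}. Line graphs are finite simple graphs, so this equivalence applies both to the roots and to their line graphs. Hence \(G\equiv_{2\text{-WL}}H \iff G\equiv_{1\text{-WL}}H\) and \(L(G)\equiv_{2\text{-WL}}L(H)\iff L(G)\equiv_{1\text{-WL}}L(H)\). Substituting these into the \(k=1\) argument transfers both the containment and the witness pair of Proposition~\ref{prop:first-order-loss} verbatim, giving \(\text{ILG-}2\text{-WL}\prec 2\text{-WL}\).

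I expect no step to be a real obstacle, since the substance lies in Propositions~\ref{prop:forward-transfer} and~\ref{prop:first-order-loss}. The one point needing care is the \(k=2\) transfer. The cited \(1\)-WL/\(2\)-WL equivalence must be for uncolored graphs with atomic-type initialization, which matches the convention fixed in Section~\ref{sec:background}. It must also be applied to \(L(G)\) and \(L(H)\) as plain simple graphs. I will check explicitly that ILG-\(2\)-WL initializes from atomic types built from \(r_G\), which by Proposition~\ref{prop:ilg-kwl} coincide with the atomic types of \(L(G)\). This rules out extra edge features that would break the equivalence.
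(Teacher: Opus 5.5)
Your proposal is correct and matches the paper's proof. The \(k=1\) containment is the contrapositive of Proposition~\ref{prop:forward-transfer}, with strictness from the witness of Proposition~\ref{prop:first-order-loss}, and the \(k=2\) case follows by applying the \(1\)-WL/\(2\)-WL equivalence to both the roots and their line graphs, together with Proposition~\ref{prop:ilg-kwl}; your extra check that ILG-\(2\)-WL initializes from the atomic types of \(L(G)\) is a sensible detail the paper leaves implicit.
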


\begin{proof}
The first statement follows from Propositions~\ref{prop:forward-transfer}
and~\ref{prop:first-order-loss}. For the second, apply the equivalence of
\(1\)-WL and \(2\)-WL \citep{feng2023fwl} both to the roots and to
their line graphs, and use Proposition~\ref{prop:ilg-kwl}.
\end{proof}

\subsection{Strictness at \texorpdfstring{\(k=3\)}{k=3}}

An \(\operatorname{srg}(v,d,\lambda,\mu)\) is \(d\)-regular on \(v\) vertices,
with \(\lambda\) and \(\mu\) common neighbors for adjacent and nonadjacent
vertex pairs, respectively.

On two strongly regular graphs with the same parameters, the initial pair
types and all simultaneous-replacement
counts depend only on equality and adjacency.  Induction leaves the
\(2\)-FWL colors parameter-determined.  The arity shift then makes
co-parametric non-isomorphic strongly regular graphs standard examples not
separated by \(3\)-WL \citep{feng2023fwl}.

\begin{corollary}[Universal four-clique counting]
\label{cor:universal-k4-counting}
For every finite simple graph \(G\), one global pair-refinement round of
ILG-\(3\)-WL determines the exact number \(\operatorname{cl}_4(G)\) of
four-vertex cliques.  More precisely,
\[
 \operatorname{cl}_4(G)=\frac{1}{6}
 \left|\left\{(e,f)\in E(G)^2:
 e\cap f=\varnothing,\quad
 \bigl|N_{L(G)}(e)\cap N_{L(G)}(f)\bigr|=4
 \right\}\right|.
\]
Consequently, ILG-\(3\)-WL-equivalent finite simple graphs have the same
four-clique count.
\end{corollary}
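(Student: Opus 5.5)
The plan has two parts: a purely combinatorial count in the root graph, and a check that the counted quantity is visible after one global pair-refinement round.

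For the combinatorial part, fix disjoint edges \(e=\{a,b\}\) and \(f=\{c,d\}\) in \(E(G)\). A line-graph vertex \(g\) lies in \(N_{L(G)}(e)\cap N_{L(G)}(f)\) exactly when \(g\neq e,f\) and \(g\) meets both \(e\) and \(f\). Since \(e\cap f=\varnothing\) and \(g\) has two endpoints, \(g\) must have one endpoint in \(\{a,b\}\) and the other in \(\{c,d\}\). So the common neighbours are exactly the edges of \(G\) among the four cross pairs \(\{a,c\},\{a,d\},\{b,c\},\{b,d\}\), and \(|N_{L(G)}(e)\cap N_{L(G)}(f)|\le 4\). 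Equality holds iff all four cross pairs are edges, which together with \(e\) and \(f\) means \(\{a,b,c,d\}\) induces a \(K_4\). In the converse direction, a four-clique on four vertices has exactly three perfect matchings, i.e.\ three unordered pairs of disjoint edges. Each gives two ordered pairs \((e,f)\), so every four-clique contributes exactly \(6\) ordered pairs. Distinct cliques give distinct pairs, because \(e\cup f\) recovers the vertex set. This proves the displayed identity.

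For the WL part, I will invoke the global-edge recurrence of Section~\ref{sec:global-edge}, made explicit in Appendix~\ref{app:global-edge-2fwl} for \(k=3\), which is \(2\)-FWL on \(L(G)\). The initial colour of an ordered pair \((e,f)\) encodes \(r_G(e,f)\), so it records both \(e=f\) and whether \(e\cap f=\varnothing\) (code \(2\)). One round of simultaneous replacement hashes the multiset \(\multiset{(c_0(e,g),c_0(g,f)):g\in E(G)}\). The number of entries equal to \((1,1)\) is precisely \(|N_{L(G)}(e)\cap N_{L(G)}(f)|\). Hence the colour after one round determines both \(r_G(e,f)=2\) and the common-neighbour count. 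Because the hash is injective with a graph-independent encoding, the graph-level multiset of round-one pair colours determines how many ordered pairs have code \(2\) and count \(4\). Dividing by \(6\) gives \(\operatorname{cl}_4(G)\).

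For the consequence, I need that ILG-\(3\)-WL-equivalent graphs have equal round-one signatures. The stable partition refines the round-one one, and equal stable signatures under the shared encoding force equal colour histograms at every earlier round; I would state this as the standard monotonicity of refinement. If the notion of equivalence is the direct \(3\)-WL on \(L(G)\) rather than the global-edge form, I will pass through Corollary~\ref{cor:global-edge-fwl}. Since distinguishing power coincides at stabilization, equal stable signatures of one form imply equal stable signatures of the other, and the argument above then applies to the global-edge form.

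I expect the main obstacle to be the stable-versus-round-one bookkeeping, together with making sure the \(2\)-FWL message is indexed exactly as written, with \(g\) ranging over all of \(E(G)\). The combinatorial count itself is routine.
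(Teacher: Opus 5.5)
Your proposal is correct and follows essentially the same route as the paper's proof in Appendix~\ref{app:k4-counting-proof}. In both, the initial relation code isolates disjoint edge pairs, and the multiplicity of the (adjacent, adjacent) replacement entry in the first global update equals \(|N_{L(G)}(e)\cap N_{L(G)}(f)|\). That count is four exactly when the four endpoints span a \(K_4\), and each such clique contributes its three opposite-edge pairs in both orders, giving the factor six. Your extra bookkeeping only makes explicit what the paper leaves implicit: stable colours determine round-one colours through the injective nested hash, and the arity shift connects direct \(3\)-WL on \(L(G)\) to the pair form.
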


\begin{proof}
The first pair color records initial nonadjacency and the common-neighbor
multiplicity.  The resulting factor-six double count is proved in
Appendix~\ref{app:k4-counting-proof}.
\end{proof}

\begin{proposition}[Shrikhande/rook separation at \(k=3\)]
\label{prop:ilg-gains}
On the Whitney-general class, ILG-\(3\)-WL can separate graph pairs that \(3\)-WL on the root graph cannot.
\end{proposition}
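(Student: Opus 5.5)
The plan is to exhibit an explicit witness pair, namely the Shrikhande graph \(S\) and the \(4\times4\) rook graph \(R=K_4\square K_4\). Both are \(\operatorname{srg}(16,6,2,2)\), and the proof has three parts:
\begin{enumerate}
\item \(S\) and \(R\) are Whitney-general and non-isomorphic.
\item Root-domain \(3\)-WL does not separate them.
\item ILG-\(3\)-WL, i.e.\ \(3\)-WL on \(L(S)\) versus \(L(R)\) by Proposition~\ref{prop:ilg-kwl}, does separate them.
\end{enumerate}

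\emph{Step 1.} Both graphs are connected, \(6\)-regular, and have \(16\) vertices. Hence neither is \(K_3\) or \(K_{1,3}\), and both lie in the Whitney-general class of Definition~\ref{def:whitney-general}. Non-isomorphism follows from their four-clique counts. In \(R\), every row and every column is a \(K_4\), giving \(\operatorname{cl}_4(R)=8\). In \(S\), every vertex neighborhood is a \(6\)-cycle, which contains no triangle, so \(S\) has no \(K_4\) and \(\operatorname{cl}_4(S)=0\).

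\emph{Step 2.} This is the argument sketched just before Corollary~\ref{cor:universal-k4-counting}. On two co-parametric strongly regular graphs, the initial \(2\)-FWL pair colors depend only on equality and adjacency. For every pair type, the number of \(w\) producing each ordered pair of replacement types is fixed by \((v,d,\lambda,\mu)\). By induction on rounds, the \(2\)-FWL coloring therefore never refines beyond the three initial classes. Both graphs have the same class sizes, so \(2\)-FWL does not separate them. The arity-shift equivalence \citep{feng2023fwl} then gives \(S\equiv_{3\text{-WL}}R\).

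\emph{Step 3.} Here I would invoke Corollary~\ref{cor:universal-k4-counting}. One global pair-refinement round of ILG-\(3\)-WL determines \(\operatorname{cl}_4\), and the two counts are \(0\) and \(8\). So the graph-level signatures of global-edge ILG-\(3\)-WL on \(S\) and \(R\) differ already after the first round, and hence also after stabilization. By Corollary~\ref{cor:global-edge-fwl}, global-edge ILG-\(3\)-WL has the same graph-distinguishing power as \(3\)-WL on the line graph, so \(L(S)\not\equiv_{3\text{-WL}}L(R)\).

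The main point needing care is the transfer between the global-edge (\(2\)-FWL) form used in Corollary~\ref{cor:universal-k4-counting} and the direct ILG-\(3\)-WL notion in the statement. The four-clique corollary guarantees a difference after one global round. Because global-edge refinement only refines its partition over time, the stable signatures must also differ, and Corollary~\ref{cor:global-edge-fwl} identifies stable distinguishability of the two forms. If one wants independence from that corollary, a fallback is a direct check: compute \(|N_{L(G)}(e)\cap N_{L(G)}(f)|\) for disjoint edges \(e,f\). In \(R\), two disjoint edges spanning a \(K_4\) row have four common line-neighbors, namely the four edges joining their endpoints, and no such pair exists in \(S\). A counting or computer check of this statistic separates the line graphs.
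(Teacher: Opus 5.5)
Your proposal is correct and takes essentially the same route as the paper: the Shrikhande/rook witness, the co-parametric strongly-regular argument for root \(3\)-WL equivalence, and Corollary~\ref{cor:universal-k4-counting} (\(0\) versus \(48\) ordered disjoint edge pairs with four common line neighbors), combined with the arity shift and Proposition~\ref{prop:ilg-kwl}. Your additional details are valid and only make explicit what the paper's short proof leaves implicit: that a first-round signature difference persists to stabilization, and that the arity shift is invoked through Corollary~\ref{cor:global-edge-fwl}.
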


\begin{proof}
The Shrikhande and \(4\times4\) rook graphs are Whitney-general,
co-parametric \(\operatorname{srg}(16,6,2,2)\) graphs, so their root
\(3\)-WL signatures agree.  The Shrikhande graph has no \(K_4\), whereas the
rook graph has eight.  By Corollary~\ref{cor:universal-k4-counting}, the
corresponding ordered-pair color therefore has multiplicities \(0\) and
\(48\), so pair refinement separates their line graphs, as does \(3\)-WL by
the arity shift.
Proposition~\ref{prop:ilg-kwl} identifies the direct line-graph test with
ILG-\(3\)-WL.
\end{proof}

\subsection{Backward Transfer on Whitney-General Graphs at \texorpdfstring{\(k=3\)}{k=3}}
\label{sec:backward-transfer}

The preceding witness shows that ILG-\(3\)-WL can be stronger, but not that it
preserves every distinction made by root \(3\)-WL.  The next theorem proves
that it does on Whitney-general graphs.

\begin{theorem}[Backward transfer at \(3\)-WL]
\label{thm:backward-transfer}
For Whitney-general graphs \(G,H\),
\[
  L(G)\equiv_{3\text{-WL}} L(H)
  \;\Longrightarrow\;
  G\equiv_{3\text{-WL}} H.
\]
\end{theorem}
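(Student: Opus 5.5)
We work throughout with the pair-indexed form. By the arity shift \citep{feng2023fwl}, \(3\)-WL equivalence on any graph is the same as \(2\)-FWL equivalence on ordered pairs, both for the roots and for the line graphs (Corollary~\ref{cor:global-edge-fwl}). The plan is to show that the stable \(2\)-FWL coloring of \(L(G)\) interprets the stable \(2\)-FWL coloring of \(G\) uniformly. We do this by encoding root vertices through edges, so that equal line-graph signatures force equal root signatures. Concretely, we represent a root vertex \(u\) by any incident edge \(e\ni u\) together with a choice of endpoint. A root vertex pair \((u,v)\) is then represented by an ordered edge pair \((e,f)\) with \(u\in e\), \(v\in f\), plus endpoint selectors. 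The key point is that "which endpoint" must be definable from \(2\)-FWL colors on \(L(G)\).

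\textbf{Step 1: local definability.}
In \(L(G)\), the edges through a root vertex \(u\) form a clique (a "star clique"). Triangles of \(L(G)\) are of two kinds: star triangles, meaning three edges sharing a vertex, and root triangles, meaning the three edges of a triangle in \(G\). These are distinguished by the \(2\)-FWL color of the pairs involved, using common-neighbor counts and the count of edges meeting both. For adjacent \(e,f\) the shared endpoint is unique, so the relation "\(g\) contains \(e\cap f\)" is decidable for a third edge \(g\): \(g\) passes through the common vertex iff \(\{e,f,g\}\) is a star triangle.

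Whitney-generality, i.e.\ connected and not \(K_3\) or \(K_{1,3}\), is what rules out the degenerate cases in which a star clique and a triangle clique cannot be told apart. We may need the usual small exceptions, such as \(K_4\) minus edges, checked separately, or argued via the paper's Lean formalization.

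We then define an equivalence on incidence pairs \((e,u)\) by "same root vertex". We show this is \(2\)-FWL-recognizable on pairs of \(L(G)\): \((e,f)\) with \(r=1\) share exactly one vertex, and for \(r=0\) we need to pick an endpoint. To handle endpoint choice symmetrically, we work with the pair color of \((e,f)\) together with the multiset over \(g\) of the colors of \((e,g),(g,f)\).

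\textbf{Step 2: simulate root \(2\)-FWL.}
We define a coloring on root ordered pairs \((u,v)\) as the multiset of line-graph stable colors of \((e,f)\) over \(e\ni u\), \(f\ni v\), refined by the incidence data from Step~1. We then prove by induction on rounds \(t\) that the root \(2\)-FWL color \(c_t(u,v)\) is a function of this multiset. The induction has two parts:
\begin{itemize}
\item The initial atomic type is recovered because \(u=v\), \(u\sim v\), and non-adjacency correspond respectively to the existence of \(e=f\) through both, the existence of an edge \(e\) containing both, or neither. Each of these is expressible in the line-graph colors.
\item The update \(\multiset{(c_t(u,w),c_t(w,v))}\) is simulated by ranging over intermediate edges \(g\ni w\), where each \(w\) is counted \(d(w)\) times. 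The degree \(d(w)\) is recoverable from the stable color of the diagonal pair \((g,g)\) together with the endpoint selector, since \(g\) has degree \(d(w)+d(w')-2\) in \(L(G)\) and the two star cliques at the endpoints can be separated. Dividing the multiplicities by \(d(w)\) then recovers the correct multiset.
\end{itemize}
Isolated vertices do not occur, since the graphs are connected and not \(K_1\). A root with a single vertex is excluded because it has no edges; we handle it, or note that \(K_1\), \(K_2\), and \(P_3\) are trivially determined.

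\textbf{Step 3: conclude.}
Since the line-graph signatures match, the induced multisets of simulated root pair colors match as well, with each root pair counted \(d(u)d(v)\) times and the degrees recoverable. Hence \(G\equiv_{3\text{-WL}}H\).

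The main obstacle is Step~1. We must make the endpoint selection, i.e.\ the triangle-versus-star distinction, definable purely from \(2\)-FWL pair colors of \(L(G)\), uniformly for both graphs, without assuming any vertexwise correspondence. The first thing to try is to prove that the stable color of \((e,f)\) for adjacent \(e,f\) determines the degree of the shared vertex. The count of \(g\) adjacent to both equals \(d(x)-2+[\text{triangle on }e\cup f]\), and the triangle indicator is detected by whether the unique third edge is adjacent to both.
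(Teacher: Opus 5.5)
\textbf{There is a genuine gap: Step~1 is false, and Steps~2--3 depend on it.} You claim that the stable pair colors of \(L(G)\) decide, for an adjacent pair \((e,f)\) and a common neighbor \(g\), whether \(g\) passes through \(e\cap f\) (a star triangle) or closes the wedge (a root triangle). This fails for Whitney-general roots.

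Take the \(4\times4\) rook graph and a row clique \(\{u,v,w,x\}\) in columns \(1,2,3,4\). Let \(e=uv\) and \(f=uw\), with closing edge \(g_1=vw\) and star edge \(g_2=ux\).
\begin{itemize}
\item The column permutation \((1\,2)(3\,4)\) fixes \(e\) and sends \(g_1\) to \(g_2\).
\item The column permutation \((1\,3)(2\,4)\) fixes \(f\) and sends \(g_1\) to \(g_2\).
\end{itemize}
Stable colors are invariant under root automorphisms. So \(g_1\) and \(g_2\) contribute the identical entry \((c(g,f),c(e,g))\) to the update at \((e,f)\), and no function of pair colors separates them.

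The same happens in \(K_4\) itself, whose line graph is the arc-transitive octahedron. The same example also defeats your claim that ``the two star cliques at the endpoints of \(g\) can be separated.'' Whitney generality does not exclude these cases. Your proposed fix, detecting the triangle by whether ``the unique third edge'' is adjacent to both, is circular: locating that edge is the same three-ary question. What is true is weaker and concerns the pair only. Every adjacent basis relation has constant central degree and constant open/closed wedge type (the paper's wedge-type lemma). Even that needs a nontrivial argument and a diamond exception.

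\textbf{Consequence for Step~2.} Without pointwise endpoint selection, Step~2 has no mechanism. Each entry \((c(g,f),c(e,g))\) mixes both endpoints of \(g\), and of \(e\) and \(f\). You therefore cannot split the line update by an intermediate root vertex \(w\) and divide by \(d(w)\). Showing that your multiset coloring over \(e\ni u,\ f\ni v\) is stable under root \(2\)-FWL is essentially Hadamard closure of the recovered root algebra, which is the whole difficulty.

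\textbf{How the paper avoids this.} It never selects endpoints pointwise.
\begin{itemize}
\item It uses \(B_G^\top B_G=2I+A_{L(G)}\) to make the incidence sandwiches \(P_C^G S_G M T_G^\top P_D^G\) span an algebra \(\mathcal W_G\). The map to the \(H\)-side is a trace-preserving isomorphism \(\Omega\), well defined by the Frobenius-norm argument.
\item It orients only fibers whose two endpoint dart profiles differ. There the orientation is canonical and transported.
\item For equal-profile fibers, a cancellation identity gives \(Z_{uu}=Z_{vv}\) and \(Z_{uv}=Z_{vu}\) on such edges for every \(Z\in\mathcal W_G\). This makes the endpoint choice irrelevant.
\item Profile cells are built at distances zero and one, and then by induction on root distance via witness edge pairs. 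This yields Hadamard closure.
\item Minimality of the coherent closure then gives the root algebra isomorphism.
\end{itemize}
The exceptional roots (paw, diamond, \(K_4\)) arise in, and are handled by, the degree, star-profile, and dart-profile lemmas that feed this construction. Your hint to check ``\(K_4\) minus edges'' separately does not substitute for them.
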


\begin{proof}[Proof idea]
Let \(B_G\) be the unsigned vertex--edge incidence matrix.  The identity
\(B_G^\top B_G=2I+A_{L(G)}\) shows that \(2I+A_{L(G)}\) is the Gram matrix
of the root incidence vectors.  At \(k=3\), stable line-pair colors span a
matrix algebra closed under multiplication.  The Gram identity therefore
lets us recover endpoint degrees and local endpoint profiles without choosing
the shared endpoints globally.  Stable \(1\)-WL records vertex color classes
and their neighbor counts but provides no such coherent pair algebra, so the
same reconstruction is unavailable below \(k=3\).
Appendix~\ref{app:backward-transfer-proof} gives the full argument in five
stages.
\begin{enumerate}
\item \emph{Line algebra (\S A.1).} The indicator matrices of stable
\(2\)-FWL colors span a coherent matrix algebra.  Line-graph equivalence
induces an isomorphism between the two such algebras.

\item \emph{Local root data (\S A.2).} The line algebra recovers endpoint
degrees, star profiles (incident-edge counts by stable line color), and dart
profiles (relation counts at a specified endpoint).  Small roots \(K_1,K_2\),
the paw, and the diamond are treated separately, while \(K_3,K_{1,3}\) are
excluded.

\item \emph{Unequal endpoint profiles (\S A.3).} A common ordering of the
transported profile labels orients these edges in both roots.  Their signed
incidence matrices then define mixed Gram matrices that are mapped by the
line-algebra isomorphism.

\item \emph{Equal endpoint profiles (\S A.4).} No canonical orientation is
available.  A cancellation identity shows that either endpoint choice acts
identically on matrices in the recovered root space.  This recovers the cells
for equal and adjacent root-vertex pairs.

\item \emph{All distances (\S A.5--A.6).} Induction on root distance recovers
the remaining pair cells, whose disjoint \(0\)-\(1\) supports give Hadamard
closure and hence the stable root pair algebra.
\end{enumerate}
The arity shift \citep{feng2023fwl} converts this
\(2\)-FWL statement into Theorem~\ref{thm:backward-transfer}.
\end{proof}

\begin{corollary}[Connected graphs and the Whitney exception]
\label{cor:connected-completion}
Let \(G,H\) be connected finite simple graphs.  If
\(L(G)\equiv_{3\text{-WL}}L(H)\), then either
\(G\equiv_{3\text{-WL}}H\) or
\(\{G,H\}=\{K_3,K_{1,3}\}\).
\end{corollary}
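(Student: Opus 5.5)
The plan is a case split on whether either root is one of the two Whitney exceptions, invoking Theorem~\ref{thm:backward-transfer} in the main case.

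\emph{Case 1: neither \(G\) nor \(H\) is isomorphic to \(K_3\) or \(K_{1,3}\).} Then both are connected, simple and non-exceptional, hence Whitney-general in the sense of Definition~\ref{def:whitney-general}. Theorem~\ref{thm:backward-transfer} applies directly to \(L(G)\equiv_{3\text{-WL}}L(H)\) and yields \(G\equiv_{3\text{-WL}}H\).

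\emph{Case 2: at least one root, say \(G\), lies in \(\{K_3,K_{1,3}\}\).} Both exceptions have line graph \(L(G)\cong K_3\). The first task is to show that \(L(H)\equiv_{3\text{-WL}}K_3\) forces \(L(H)\cong K_3\). The atomic-type initialization already records the number of vertices through the graph-level signature, so \(|V(L(H))|=3\), that is, \(H\) has exactly three edges. The signature also records which ordered pairs are adjacent, so \(L(H)\) is complete on these three vertices. Three pairwise-intersecting edges of a simple graph either share a common vertex, giving the star \(K_{1,3}\), or form a triangle \(K_3\). Since \(H\) is connected, there are no other vertices, so \(H\in\{K_3,K_{1,3}\}\).

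It then remains to finish Case 2. If \(H\cong G\), then isomorphic graphs are trivially \(3\)-WL-equivalent, so \(G\equiv_{3\text{-WL}}H\). Otherwise \(\{G,H\}=\{K_3,K_{1,3}\}\), which is exactly the stated exception. Indeed \(K_3\) and \(K_{1,3}\) have different vertex counts and are separated by \(3\)-WL, so the exception is genuine and cannot be absorbed into the first alternative.

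The main obstacle is the step in Case 2 where an exceptional root forces the other root to be exceptional. I plan to handle it through the size and adjacency information in the initial atomic types, without appealing to Whitney's theorem itself. Everything else is a direct invocation of Theorem~\ref{thm:backward-transfer}.
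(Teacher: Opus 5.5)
Your proof is correct and follows the paper's case split: Theorem~\ref{thm:backward-transfer} handles the Whitney-general case, and in the exceptional case \(3\)-WL equivalence forces \(L(H)\cong K_3\). The differences are minor. The paper gets \(L(H)\cong K_3\) from order and degree multiset and then cites Whitney's theorem to identify the roots. You read completeness off the atomic types and enumerate three pairwise-intersecting edges directly, and you also spell out the trivial subcase \(H\cong G\).
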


\begin{proof}
If neither root belongs to the Whitney pair, both are Whitney-general and
Theorem~\ref{thm:backward-transfer} applies.  If one root belongs to that
pair, its line graph is \(K_3\).  Stable \(3\)-WL equivalence preserves the
order and degree multiset of the line graph, so the other line graph is also
\(K_3\); Whitney's theorem leaves exactly the two stated roots.
\end{proof}

\begin{theorem}[Backward transfer for disconnected graphs]
\label{thm:componentwise-backward}
For componentwise Whitney-general graphs \(G,H\),
\[
  L(G)\equiv_{3\text{-WL}}L(H)
  \;\Longrightarrow\;
  G\equiv_{3\text{-WL}}H.
\]
\end{theorem}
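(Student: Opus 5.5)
The plan is to reduce to the connected case (Theorem~\ref{thm:backward-transfer}) via the componentwise behaviour of \(3\)-WL, i.e.\ of \(2\)-FWL after the arity shift \citep{feng2023fwl}. Two facts about \(2\)-FWL are needed, for any graphs \(X,Y\).

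\emph{Fact 1 (connectivity is detected).} The stable pair coloring refines the distance relation, so it determines which ordered pairs lie in the same component. Hence it determines the multiset of stable colorings of the individual components.

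\emph{Fact 2 (componentwise decomposition).} The graphs \(X=\bigsqcup_i X_i\) and \(Y=\bigsqcup_j Y_j\) are \(2\)-FWL-equivalent if and only if there is a bijection between components matching \(2\)-FWL-equivalent pairs. I would prove this directly. For the forward direction, restrict the stable coloring of \(X\) to the diagonal-connected classes; the coherent-algebra isomorphism induced by equivalence maps the within-component relation of \(X\) to that of \(Y\). Because the between-component relation is the distance-\(\infty\) color, refinement inside a component only sees the other components through the multiset of their colorings. For the converse, build the coloring of the disjoint union from the component colorings plus a ``different component'' tag. The induction on rounds is routine.

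Apply this to line graphs. Since \(G\) has no isolated vertices, its connected components \(G_1,\dots,G_p\) correspond bijectively to the components \(L(G_1),\dots,L(G_p)\) of \(L(G)\). Each \(G_i\) has at least one edge, because it is Whitney-general and not \(K_1\); this is exactly why isolated vertices are excluded, since they would vanish from \(L(G)\). From \(L(G)\equiv_{3\text{-WL}}L(H)\), Fact~2 gives a bijection \(\sigma\) with \(L(G_i)\equiv_{3\text{-WL}}L(H_{\sigma(i)})\). Each pair \(G_i,H_{\sigma(i)}\) is Whitney-general, so Theorem~\ref{thm:backward-transfer} yields \(G_i\equiv_{3\text{-WL}}H_{\sigma(i)}\). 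The converse half of Fact~2, applied now to the roots, gives \(G\equiv_{3\text{-WL}}H\).

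The main obstacle is stating Fact~2 correctly for the oblivious \(3\)-WL convention with graph-level signatures. In particular, the forward direction must extract component-level equivalence, not merely multiset equality of pair colors. I would first pass to \(2\)-FWL by the arity shift and then argue with the coherent configuration. The key step is that the diagonal color of a vertex \(v\) determines the multiset of colors of the pairs \((v,w)\) with \(w\) in the same component, and from this the isomorphism type of the component's coherent algebra. Whether any component is \(K_1\) or \(K_2\) is irrelevant, because the reduction only calls Theorem~\ref{thm:backward-transfer} on Whitney-general components, whose small cases the theorem already covers.
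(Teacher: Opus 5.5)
Your proposal is correct and follows the paper's proof: a component-matching lemma for stable pair refinement (transport of the within-component relation, plus its converse), the no-isolated-vertices hypothesis to identify the components of \(L(G)\) with the line graphs of the root components, Theorem~\ref{thm:backward-transfer} on each matched pair, and the converse of the lemma to reassemble \(G\equiv_{3\text{-WL}}H\). The only difference is how you justify the matching lemma: your direct argument (diagonal colors determine the restricted component configuration, plus a tagged equitable coloring for the converse) stands in for the paper's appeal to parabolic decompositions of coherent configurations and the bijective pebble game, and it leaves the overall argument unchanged.
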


\begin{proof}[Proof idea]
Stable pair refinement matches the components of \(L(G)\) and \(L(H)\).
Since the roots have no isolated vertices, these are exactly the line graphs
of their root components, so Theorem~\ref{thm:backward-transfer} applies to
each matched pair.  Combining the component equivalences proves the claim;
see Appendix~\ref{app:componentwise-backward-extension}.
\end{proof}

\begin{corollary}[Strict containment at \(3\)-WL]
\label{cor:strict-three}
On both the Whitney-general and componentwise Whitney-general classes,
\[
  3\text{-WL}\ \prec\ \text{ILG-}3\text{-WL}.
\]
\end{corollary}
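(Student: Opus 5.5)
The corollary has two halves, non-strict containment and strictness, and I will prove them separately on each class. For containment I argue by contraposition. Let \(G,H\) be Whitney-general (respectively componentwise Whitney-general) and suppose root \(3\)-WL separates them, i.e.\ \(G\not\equiv_{3\text{-WL}}H\). If ILG-\(3\)-WL did not separate them, then Proposition~\ref{prop:ilg-kwl} would give \(L(G)\equiv_{3\text{-WL}}L(H)\). Theorem~\ref{thm:backward-transfer} (respectively Theorem~\ref{thm:componentwise-backward}) would then yield \(G\equiv_{3\text{-WL}}H\), a contradiction. Hence every pair in the class separated by \(3\)-WL is also separated by ILG-\(3\)-WL, which is \(3\text{-WL}\preceq\text{ILG-}3\text{-WL}\) on that class.

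For strictness on the Whitney-general class I invoke Proposition~\ref{prop:ilg-gains}. The Shrikhande and \(4\times4\) rook graphs are connected and are neither \(K_3\) nor \(K_{1,3}\). They are co-parametric \(\operatorname{srg}(16,6,2,2)\), so root \(3\)-WL does not separate them. By Corollary~\ref{cor:universal-k4-counting} their four-clique counts \(0\) and \(8\) force different ILG-\(3\)-WL signatures. This pair is therefore separated by ILG-\(3\)-WL but not by \(3\)-WL, which makes the containment strict.

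For the componentwise class, note that every Whitney-general graph is componentwise Whitney-general. It has no isolated vertices, since a connected graph that is not \(K_1\) has none and \(K_1\) is excluded here by having 16 vertices. Its single component is Whitney-general. So the same Shrikhande/rook pair lies in the larger class and witnesses strictness there as well.

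The only point needing care is a convention check. The relation \(\prec\) is defined relative to a class of pairs, so both the containment argument and the witness must be drawn from the same class. This is immediate once we observe that connected Whitney-general graphs on at least two vertices belong to the componentwise class. I do not expect any genuine obstacle, since all the substantive work sits in the two backward-transfer theorems and in Corollary~\ref{cor:universal-k4-counting}.
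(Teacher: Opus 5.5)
Your proposal is correct and matches the paper's proof. Containment follows from the two backward-transfer theorems, with Proposition~\ref{prop:ilg-kwl} identifying ILG-\(3\)-WL with \(3\)-WL on \(L(G)\), and strictness comes from the Shrikhande/rook witness of Proposition~\ref{prop:ilg-gains}. Your explicit check that this witness also lies in the componentwise class fills in a step the paper leaves implicit, though the blanket claim ``every Whitney-general graph is componentwise Whitney-general'' fails for \(K_1\); your later restriction to connected graphs on at least two vertices handles this correctly.
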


\begin{proof}
Theorems~\ref{thm:backward-transfer} and
\ref{thm:componentwise-backward} give containment, and
Proposition~\ref{prop:ilg-gains} makes it strict.
\end{proof}

\begin{remark}[Comparison with root \(4\)-WL]
\label{rem:four-wl-incomparable}
Root \(4\)-WL and ILG-\(3\)-WL are incomparable, even on connected
Whitney-general simple graphs.  Root \(4\)-WL distinguishes
the incidence graphs of two nonisomorphic
Steiner triple systems of order \(13\), whereas ILG-\(3\)-WL does not.
Conversely, ILG-\(3\)-WL distinguishes a compressed parity pair over \(K_5\)
that root \(4\)-WL does not.  Appendix~\ref{app:four-wl-incomparability}
gives both witnesses and their exact traces.
\end{remark}

\section{Higher-Order Neural Architecture}
\label{sec:global-edge-kgnn}

The global-edge recurrence defines a neural model on the same tuple domain.
For \(k\geq3\), one hidden state is stored for each ordered
\((k{-}1)\)-tuple of root edges.  A message associated with replacement edge
\(g\) jointly encodes the \(k-1\) states obtained by placing that same \(g\)
in each coordinate.  We display the first nontrivial case, \(k=3\), whose
states are ordered edge pairs.

Let \(q_e\in\mathbb{R}^{d_e}\) be an edge feature.  All \(q_e\) are the same
constant on an unlabeled graph.  Finite vertex labels can instead be included
in \(q_e\) as the unordered pair of endpoint labels; isolated vertices are
outside the line-domain representation.  This uses labels, not unique vertex
or edge identifiers.  Let
\(\relEmb:\{0,1,2\}\to\mathbb R^{d_r}\) embed the line-graph relation code;
an injective one-hot map suffices for exact WL simulation.  Let \(d_h\) be the
hidden-state width, so \(h_{ef}^{(\ell)}\in\mathbb R^{d_h}\).  Pair states
start from
\[
    h_{ef}^{(0)}
    =
    \phi_0\left(q_e, q_f, \relEmb(r_G(e,f))\right),
    \qquad e,f\in E(G).
\]
One layer computes
\[
    M_{ef}^{(\ell)}
    =
    \sum_{g\in E(G)}
    \psi_\ell\left(
        h_{gf}^{(\ell)},
        h_{eg}^{(\ell)},
        q_g,
        \relEmb(r_G(g,e)),
        \relEmb(r_G(g,f))
    \right),
\]
then updates
\[
    h_{ef}^{(\ell+1)}
    =
    \eta_\ell\left(h_{ef}^{(\ell)}, M_{ef}^{(\ell)}\right).
\]
After \(T_{\mathrm{net}}\) layers, a graph readout may combine pair and edge
summaries:
\[
    h_G
    =
    \operatorname{READOUT}\left(
        \sum_{e,f\in E(G)}\gamma_{\mathrm{read}}(h_{ef}^{(T_{\mathrm{net}})}),
        \sum_{e\in E(G)}\xi(q_e)
    \right).
\]
Our implementations use multilayer perceptrons (MLPs) for the maps above:
\(\phi_0\) initializes pair states, \(\psi_\ell\) produces messages,
\(\eta_\ell\) updates states, and \(\gamma_{\mathrm{read}},\xi\) feed the
invariant readout.  The corresponding WL color already contains \(q_g\) and
the relation codes, so the neural message receives no extra structural
information.

Vertex relabeling leaves the graph representation unchanged, and reordering
the internal edge list only permutes tuple states
(Appendix~\ref{app:equivariance}).  Appendix~\ref{app:algo-spec} gives the
general-\(k\) forward pass.

\begin{proposition}[Expressivity of global-edge ILG-\(k\)-GNN]
\label{prop:expressivity}
Fix \(k\geq3\), a fixed number of rounds, and a graph class with bounded edge
count and a finite alphabet of edge features.  Finite vertex labels may be
included when they are encoded invariantly in the edge features as above.
There are choices of
\(\relEmb,\phi_0,\psi_\ell,\eta_\ell,\gamma_{\mathrm{read}},\xi\), and
\(\operatorname{READOUT}\) for which
the summed multiset encodings and final readout are injective on every finite
domain encountered.  With these choices, the GNN simulates the corresponding
feature-labeled global-edge ILG-\(k\)-WL refinement.  Conversely, if two
feature-labeled graphs have identical refinement histories, no instance whose
messages use only the stated tuple states, edge features, and relation codes
can separate them through an invariant readout.
\end{proposition}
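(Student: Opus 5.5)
Both directions go by induction on layers, comparing hidden states with global-edge ILG-\(k\)-WL colors of ordered \((k{-}1)\)-tuples of root edges, with the same depth \(T\) for network and refinement. The hypotheses (bounded edge count \(m\le M\), finite edge-feature alphabet \(\Sigma\), fixed \(T\)) give finitely many feature-labeled graphs in the class up to isomorphism. Hence every layer encounters only finitely many distinct tuple states, relation codes, and message multisets, of size at most \(M\). Injectivity on these finite sets is the only requirement, and it can be met by explicit constructions; no universal-approximation argument is needed.

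\emph{Simulation direction.} Take \(\relEmb\) one-hot. Choose \(\phi_0\) injective on the finite set of initial triples \((q_e,q_f,\relEmb(r_G(e,f)))\) together with the remaining coordinates of the tuple's atomic type; in the general-\(k\) version these are all pairwise \(r_G\) codes, which form the atomic type in the line graph by Proposition~\ref{prop:ilg-kwl}. For the inductive step, assume the layer-\(\ell\) states are an injective image of the round-\(\ell\) colors on the finite set \(S_\ell\) of colors that occur. The message argument for replacement edge \(g\) consists of the \(k{-}1\) replaced tuple states, \(q_g\), and the relation codes; it ranges over a finite set \(X_\ell\). Enumerate \(X_\ell\) as \(x_1,\dots,x_N\) and let \(\psi_\ell(x_j)=(M+1)^{j}\), or the \(j\)-th basis vector scaled so that sums of at most \(M\) terms are distinct. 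A multiset of size at most \(M\) over \(X_\ell\) is then recovered from its sum by base-\((M+1)\) digits, so \(M^{(\ell)}\) is injective in the multiset of messages. Choose \(\eta_\ell\) injective on the finite set of pairs (state, message sum). The layer-\((\ell{+}1)\) states are then an injective image of the round-\((\ell{+}1)\) colors, because the global-edge recurrence of Appendix~\ref{app:global-edge-2fwl} hashes exactly the old color and this multiset; the relation codes to \(g\) are included on both sides. Apply the same digit trick to \(\gamma_{\mathrm{read}}\) and \(\xi\): the number of tuples is at most \(M^{k-1}\) and the number of edges at most \(M\). Take \(\operatorname{READOUT}\) injective on the finite set of pairs. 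Then \(h_G\) determines the final color multiset together with the edge-feature multiset, and it is determined by them, so the GNN separates exactly the pairs that the refinement separates. If continuous maps are preferred, any finite injective table extends to an MLP by interpolation on finitely many points.

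\emph{Converse.} Fix arbitrary maps. By induction, \(h^{(\ell)}\) is a function of the round-\(\ell\) color: at \(\ell=0\), \(h^{(0)}\) depends only on the edge features and relation codes, which the initial color records. If two tuples, possibly in different graphs, have equal round-\(\ell\) colors, their replacement multisets agree as multisets of color-tuples together with \(q_g\) and the relation codes. The inductive hypothesis makes each summand a function of the corresponding entry, so the sums \(M^{(\ell)}\) agree, and so do the updated states. Identical refinement histories give equal final color multisets and equal edge-feature multisets, hence equal sums and equal readouts. Equivariance (Appendix~\ref{app:equivariance}) ensures the readout is well defined under edge reordering.

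The main obstacle is bookkeeping rather than depth. The message must receive exactly the information in the WL message, namely the \(k{-}1\) replaced colors jointly and not as separate multisets, plus \(q_g\) and the relation codes, which the WL color already contains. The other care point is ensuring that sum aggregation stays injective under the uniform bound \(M\). I would settle both by stating the finite-domain multiset-injectivity lemma once and invoking it at every layer and at readout.
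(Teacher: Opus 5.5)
Your proposal is correct and takes essentially the same route as the paper: a layerwise induction with injective encodings on the finite domains for the simulation direction, and for the converse an induction showing that each hidden state is a graph-independent function of the jointly computed refinement color. The only difference is that your base-\((M+1)\) digit construction makes explicit the GIN multiset-encoding lemma that the paper cites. There is one index slip in your converse step: it is equality of round-\((\ell+1)\) colors that yields equal round-\(\ell\) colors and equal round-\(\ell\) replacement multisets, and hence equal message sums and equal \(h^{(\ell+1)}\).
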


\begin{proof}[Proof sketch]
Bounded edge count and a finite feature alphabet allow only finitely many
replacement entries.  The multiset-encoding lemma of \citet{xu2019gin}
therefore gives injective summed encodings, and induction over layers assigns
each refinement color a distinct hidden state.  Conversely, every hidden
state is a function of its refinement color, so equal color histograms force
equal invariant readouts.  Appendix~\ref{app:neural-simulation} gives the full
construction.
\end{proof}

\section{Expressivity Evaluation}
\label{sec:deterministic-verification}

\paragraph{Evaluation objective.}
We report graph-pair separations by deterministic refinements and neural
realizations, together with the relevant WL upper bounds.
Table~\ref{tab:separation} covers the three substructure-counting witness
pairs, all \(105\) SR25 pairs, and all \(400\) BREC pairs.

\subsection{Deterministic Refinement}
\label{sec:exact-refinement}

For each deterministic method computed in this study, two graphs are separated
when their graph-level signatures differ.  A pair is reported as not separated
only after refinement stabilizes.  The mathematical recurrence uses exact
canonical signatures.  Regression tests on a finite collection of graphs
compare ILG with WL on explicitly constructed line graphs, checking the
implementation against Proposition~\ref{prop:ilg-kwl}.

SR25 contains \(15\) graphs with parameters
\(\operatorname{srg}(25,12,5,6)\), giving \(105\) unordered pairs
\citep{balcilar2021breaking}.  BREC contains \(400\) specified
non-isomorphic pairs whose main graphs are constructed to be
\(1\)-WL-indistinguishable
\citep{wang2024brec}.  Its CFI category is based on the
Cai--F\"urer--Immerman construction
\citep{wang2024brec,cai1992cfi}.  The three substructure-counting witness
pairs are \(C_6\) versus
\(C_3\sqcup C_3\), \(C_8\) versus \(C_4\sqcup C_4\), and Shrikhande versus
the \(4\times4\) rook graph.  The pairs differ in their numbers of triangles,
four-cycles, and four-cliques, respectively.  If a method does not separate
one of these pairs, it cannot count that pattern on all graphs; separation
alone does not prove that it can.  For four-cliques,
Corollary~\ref{cor:universal-k4-counting} supplies the stronger universal
counting result.  In the neural block,
``set \(3\)-WL'' denotes the subset-based refinement associated with Morris
\(3\)-GNN, not the ordered-tuple \(3\)-WL used elsewhere in the paper.
The table's \emph{WL bound} column lists the deterministic refinement whose
distinguishing power upper-bounds that of each neural architecture.  A dash
marks a deterministic method.

\begin{table}[t]
    \centering
    \caption{Deterministic refinement results and neural separation counts on
    the evaluated datasets and graph pairs.  BREC is split into four aggregate categories:
    Regular combines the Simple Regular,
    Strongly Regular, 4-Vertex Condition, and Distance Regular families;
    Extension and CFI combine their respective BREC subfamilies.  Benchmark
    entries count separated pairs; the last three columns give one verdict for
    each substructure-counting witness pair.  The PPGN BREC entry
    \({}^{\dagger}\)
    is sourced from
    \citet{wang2024brec}.  The ILG-\(3\)-GNN entry \({}^{\ddagger}\) reports an
    untrained dense evaluation on the three substructure-counting witness
    pairs, all \(105\) SR25 pairs, and all \(400\) BREC main pairs.  Neural
    rows use the protocols in
    Appendix~\ref{app:neural-evaluation-protocols}.}
    \label{tab:separation}
    \footnotesize
    \setlength{\tabcolsep}{0pt}
    \renewcommand{\arraystretch}{1.05}
    \begin{tabular*}{\textwidth}{@{\extracolsep{\fill}}ll*{6}{c}
        *{3}{>{\centering\arraybackslash}p{0.065\textwidth}}@{}}
        \toprule
        & & & \multicolumn{5}{c}{BREC pairs}
        & \multicolumn{3}{c}{\shortstack{Substructure-counting\\witness pairs}} \\
        \cmidrule(lr){4-8}\cmidrule(lr){9-11}
        Method & WL bound & SR25
        & Basic & Regular & Extension & CFI & All
        & \(C_3\) & \(C_4\) & \(K_4\) \\
        \midrule
        \multicolumn{11}{@{}l}{\emph{Deterministic refinement}} \\
        ILG-\(1\)-WL & ---
        & 0/105 & 0/60 & 0/140 & 0/100 & 0/100 & 0/400
        & \xmark & \xmark & \xmark \\
        \(3\)-WL & ---
        & 0/105 & 60/60 & 50/140 & 100/100 & 60/100 & 270/400
        & \cmark & \cmark & \xmark \\
        ILG-\(3\)-WL & ---
        & 105/105 & 60/60 & 139/140 & 100/100 & 60/100 & 359/400
        & \cmark & \cmark & \cmark \\
        \midrule
        \multicolumn{11}{@{}l}{\emph{Neural architectures}} \\
        ILG-\(1\)-GNN & ILG-\(1\)-WL
        & 0/105 & 0/60 & 0/140 & 0/100 & 0/100 & 0/400
        & \xmark & \xmark & \xmark \\
        Morris \(3\)-GNN & set \(3\)-WL
        & 0/105 & 60/60 & 50/140 & 97/100 & 0/100 & 207/400
        & \cmark & \cmark & \xmark \\
        PPGN & \(3\)-WL
        & 0/105 & 60/60 & 50/140 & 100/100 & 23/100
        & 233/400\({}^{\dagger}\)
        & \cmark & \cmark & \xmark \\
        ILG-\(3\)-GNN\({}^{\ddagger}\) & ILG-\(3\)-WL
        & 105/105 & 60/60 & 139/140 & 100/100 & 60/100
        & 359/400
        & \cmark & \cmark & \cmark \\
        \bottomrule
    \end{tabular*}
\end{table}

On the substructure-counting witness pairs, \(3\)-WL separates the triangle and four-cycle pairs
but not Shrikhande/rook.  Since the latter roots have different four-clique
counts, their shared signature proves that \(3\)-WL cannot count
four-cliques in general.  ILG-\(3\)-WL separates all three pairs.  The same
difference is universal for four-clique counts by
Corollary~\ref{cor:universal-k4-counting}.  On SR25, root-domain \(3\)-WL
assigns one signature to all \(15\) graphs, whereas ILG-\(3\)-WL separates
every pair.

On BREC, ILG-\(3\)-WL improves over root-domain \(3\)-WL only in the Regular
aggregate.  It separates \(139/140\) pairs there, broken down as \(50/50\)
Simple Regular, \(50/50\) Strongly Regular, \(20/20\) 4-Vertex Condition,
and \(19/20\) Distance Regular; root-domain \(3\)-WL separates \(50/140\).
\(3\)-WL already separates every pair in Basic and Extension, and both
refinements separate \(60/100\) CFI pairs.

The remaining \(41\) pairs comprise \(20\) CFI-\(3\)-WL pairs, \(20\)
CFI-\(4\)-WL pairs, and one Distance Regular pair.  On the tested CFI blocks,
ILG-\(3\)-WL and root-domain \(3\)-WL have identical pairwise outcomes.  We
make no claim about other CFI families.  The only Distance Regular pair not
separated is BREC pair \(380\) (zero-based); both refinements stabilize
without distinguishing it.

\subsection{Neural Evaluation}
\label{sec:neural-verification}

The ILG-\(1\)-GNN and Morris \(3\)-GNN rows use pair-specific training and the
evaluation protocols detailed in
Appendix~\ref{app:neural-evaluation-protocols}.  The PPGN category counts,
totaling \(233\) pairs, are taken from \citet[Table~2]{wang2024brec}.

For ILG-\(3\)-GNN, we evaluate an untrained dense model separately on each of
the three substructure-counting witness pairs, the \(105\) SR25 pairs, and the
\(400\) BREC main pairs,
using the same fixed parameters for both graphs in a pair.  The model has
constant uncolored edge features, invariant
equality/incident/disjoint relation codes, width \(32\), five layers, sine
activations, and a multiscale readout.  Each graph is evaluated in its stored
ordering and under three relabelings.  A pair is counted as separated only
when its minimum cross-graph distance exceeds both within-graph diameters by
the fixed numerical tolerance.  Every separation uses all four copies; a
failed pair may stop after two.  The resulting
verdicts are \(3/3\) on the substructure-counting witness pairs, \(105/105\) on SR25, and
\(359/400\) on BREC.  Across all three suites, the neural pairwise verdicts coincide with
deterministic ILG-\(3\)-WL.

All SR25 pairs and all three substructure-counting witness pairs are
\(1\)-WL-indistinguishable.
ILG-\(1\)-GNN separates none of them, consistent with
Lemma~\ref{lem:ilg1gnn-linegraph-mpnn} and
Proposition~\ref{prop:forward-transfer}.
Morris \(3\)-GNN uses set-based refinement rather than the ordered-tuple
\(3\)-WL studied here.  It separates \(0/105\) SR25 pairs and \(207/400\)
BREC pairs, separating the triangle and four-cycle pairs but not the
four-clique pair.  Appendix~\ref{app:neural-evaluation-protocols}
specifies the trained and untrained protocols and their invariance checks.

\section{Discussion and Conclusion}

The measurements in Appendix~\ref{app:empirical-complexity} follow the
complexity analysis: explicit and implicit line-domain refinement have the
same pair-state and per-round update costs, while endpoint access avoids
storing the line-graph structure.  On SR25, root \(4\)-WL is faster than
ILG-\(3\)-WL in the shared implementation.  This is a cost comparison, not an
expressivity ordering: the two methods are incomparable, and ILG-\(3\)-WL
distinguishes pairs that root \(4\)-WL misses
(Remark~\ref{rem:four-wl-incomparable}).

On Whitney-general graphs, ILG-\(k\)-WL is strictly weaker than root-domain
\(k\)-WL for \(k=1,2\), but strictly stronger for \(k=3\); the latter strict
containment also holds on componentwise Whitney-general graphs.  At \(k=3\),
ILG-\(3\)-WL counts four-cliques on every finite simple root graph
(Corollary~\ref{cor:universal-k4-counting}) and separates more pairs than
root-domain \(3\)-WL on SR25 and in BREC's Regular aggregate.  On the tested CFI blocks, the root and line-domain
refinements have the same pairwise outcomes.  The backward-transfer theorems
and Shrikhande/rook pair prove
\(3\text{-WL}\prec\mathrm{ILG}\text{-}3\text{-WL}\) on both classes.

The untrained dense ILG-\(3\)-GNN reproduces the deterministic ILG-\(3\)-WL
verdicts on the substructure-counting witness pairs, SR25, and BREC.  This
comparison tests graph-isomorphism expressivity rather than supervised
prediction.  Our
\(k=1\) result shows that ILG-\(1\)-WL is strictly weaker than root \(1\)-WL on
Whitney-general graphs.  Other line-graph networks report improved task
performance.  This may arise because the line-graph representation makes
useful structural information easier for the model to use, together with
task-specific edge features, training, or model bias.
Many standard graph-classification datasets contain few hard \(1\)-WL
collisions \citep{zopf2022wl}.
Since edges are the native line-graph objects, link-level tasks provide a
natural setting for downstream evaluation \citep{lachi2025link}.

Backward transfer for \(k\geq4\) remains open.  The \(k=3\) proof reconstructs
the coherent matrix algebra of stable pair refinement; extending it would
require a corresponding reconstruction of higher-arity root tuple colors and
simultaneous endpoint choices.  Repeated line-graph
transformation may also simplify isomorphism testing with \(3\)-WL: at most two
transformations take every connected graph other than \(C_3,C_4,C_5\) outside
the strongly regular class \citep{yang2024theoretical}.  Determining whether
this allows \(3\)-WL to distinguish additional graph pairs is an interesting
direction for future work.

\bibliographystyle{plainnat}
\bibliography{reference}

@inproceedings{maron2019ppgn,
  title = {Provably Powerful Graph Networks},
  author = {Maron, Haggai and Ben-Hamu, Heli and Serviansky, Hadar and Lipman, Yaron},
  booktitle = {Advances in Neural Information Processing Systems},
  pages = {2156--2167},
  publisher = {Curran Associates, Inc.},
  volume = {32},
  year = {2019},
  url = {https://proceedings.neurips.cc/paper/2019/hash/bb04af0f7ecaee4aae62035497da1387-Abstract.html}
}

@inproceedings{morris2019weisfeiler,
  title = {{Weisfeiler} and {Leman} Go Neural: Higher-Order Graph Neural Networks},
  author = {Morris, Christopher and Ritzert, Martin and Fey, Matthias and Hamilton, William L. and Lenssen, Jan Eric and Rattan, Gaurav and Grohe, Martin},
  booktitle = {Proceedings of the AAAI Conference on Artificial Intelligence},
  volume = {33},
  pages = {4602--4609},
  year = {2019},
  doi = {10.1609/aaai.v33i01.33014602},
  url = {https://ojs.aaai.org/index.php/AAAI/article/view/4384}
}

@inproceedings{bevilacqua2022esan,
  title = {Equivariant Subgraph Aggregation Networks},
  author = {Bevilacqua, Beatrice and Frasca, Fabrizio and Lim, Derek and Srinivasan, Balasubramaniam and Cai, Chen and Balamurugan, Gopinath and Bronstein, Michael M. and Maron, Haggai},
  booktitle = {International Conference on Learning Representations},
  year = {2022},
  url = {https://openreview.net/forum?id=dFbKQaRk15w}
}

@inproceedings{morris2020sparse,
  title = {{Weisfeiler} and {Leman} Go Sparse: Towards Scalable Higher-Order Graph Embeddings},
  author = {Morris, Christopher and Rattan, Gaurav and Mutzel, Petra},
  booktitle = {Advances in Neural Information Processing Systems},
  pages = {21824--21840},
  publisher = {Curran Associates, Inc.},
  volume = {33},
  year = {2020},
  url = {https://proceedings.neurips.cc/paper_files/paper/2020/hash/f81dee42585b3814de199b2e88757f5c-Abstract.html}
}

@inproceedings{bodnar2021topological,
  title = {{Weisfeiler} and {Lehman} Go Topological: Message Passing Simplicial Networks},
  author = {Bodnar, Cristian and Frasca, Fabrizio and Wang, Yuguang and Otter, Nina and Mont{\'u}far, Guido F. and Li{\`o}, Pietro and Bronstein, Michael M.},
  booktitle = {Proceedings of the 38th International Conference on Machine Learning},
  pages = {1026--1037},
  volume = {139},
  series = {Proceedings of Machine Learning Research},
  publisher = {PMLR},
  year = {2021},
  url = {https://proceedings.mlr.press/v139/bodnar21a.html}
}

@inproceedings{bodnar2021cellular,
  title = {{Weisfeiler} and {Lehman} Go Cellular: {CW} Networks},
  author = {Bodnar, Cristian and Frasca, Fabrizio and Otter, Nina and Wang, Yuguang and Li{\`o}, Pietro and Mont{\'u}far, Guido and Bronstein, Michael},
  booktitle = {Advances in Neural Information Processing Systems},
  pages = {2625--2640},
  publisher = {Curran Associates, Inc.},
  volume = {34},
  year = {2021},
  url = {https://proceedings.neurips.cc/paper_files/paper/2021/hash/157792e4abb490f99dbd738483e0d2d4-Abstract.html}
}

@inproceedings{wang2024brec,
  title = {An Empirical Study of Realized {GNN} Expressiveness},
  author = {Wang, Yanbo and Zhang, Muhan},
  booktitle = {Proceedings of the 41st International Conference on Machine Learning},
  pages = {52134--52155},
  year = {2024},
  editor = {Salakhutdinov, Ruslan and Kolter, Zico and Heller, Katherine and Weller, Adrian and Oliver, Nuria and Scarlett, Jonathan and Berkenkamp, Felix},
  volume = {235},
  series = {Proceedings of Machine Learning Research},
  month = {21--27 Jul},
  publisher = {PMLR},
  url = {https://proceedings.mlr.press/v235/wang24cl.html}
}

@inproceedings{balcilar2021breaking,
  title = {Breaking the Limits of Message Passing Graph Neural Networks},
  author = {Balcilar, Muhammet and H{\'e}roux, Pierre and Ga{\"u}z{\`e}re, Benoit and Vasseur, Pascal and Adam, S{\'e}bastien and Honeine, Paul},
  booktitle = {Proceedings of the 38th International Conference on Machine Learning},
  pages = {599--608},
  volume = {139},
  series = {Proceedings of Machine Learning Research},
  publisher = {PMLR},
  year = {2021},
  url = {https://proceedings.mlr.press/v139/balcilar21a.html}
}

@inproceedings{yang2024theoretical,
  title = {Theoretical Insights into Line Graph Transformation on Graph Learning},
  author = {Yang, Fan and Huang, Xingyue},
  booktitle = {NeurIPS 2024 Workshop on Symmetry and Geometry in Neural Representations},
  year = {2024},
  note = {Poster},
  url = {https://neurips.cc/virtual/2024/101444}
}

@inproceedings{chen2019supervised,
  title = {Supervised Community Detection with Line Graph Neural Networks},
  author = {Chen, Zhengdao and Li, Lisha and Bruna, Joan},
  booktitle = {International Conference on Learning Representations},
  year = {2019},
  url = {https://openreview.net/forum?id=H1g0Z3A9Fm}
}

@article{jiang2023coembedding,
  title = {Co-Embedding of Nodes and Edges with Graph Neural Networks},
  author = {Jiang, Xiaodong and Zhu, Ronghang and Ji, Pengsheng and Li, Sheng},
  journal = {IEEE Transactions on Pattern Analysis and Machine Intelligence},
  volume = {45},
  number = {6},
  pages = {7075--7086},
  year = {2023},
  doi = {10.1109/TPAMI.2020.3029762},
}

@article{cai2022linegraph,
  title = {Line Graph Neural Networks for Link Prediction},
  author = {Cai, Lei and Li, Jundong and Wang, Jie and Ji, Shuiwang},
  journal = {IEEE Transactions on Pattern Analysis and Machine Intelligence},
  volume = {44},
  number = {9},
  pages = {5103--5113},
  year = {2022},
  doi = {10.1109/TPAMI.2021.3080635}
}

@article{choudhary2021alignn,
  title = {Atomistic Line Graph Neural Network for Improved Materials Property Predictions},
  author = {Choudhary, Kamal and DeCost, Brian},
  journal = {npj Computational Materials},
  volume = {7},
  pages = {185},
  year = {2021},
  doi = {10.1038/s41524-021-00650-1}
}

@inproceedings{zopf2022wl,
  title = {{1-WL} Expressiveness Is (Almost) All You Need},
  author = {Zopf, Markus},
  booktitle = {2022 International Joint Conference on Neural Networks},
  year = {2022},
  eprint = {2202.10156},
  archivePrefix = {arXiv},
  primaryClass = {cs.LG}
}

@inproceedings{lachi2025link,
  title = {Bridging Theory and Practice in Link Representation with Graph Neural Networks},
  author = {Lachi, Veronica and Ferrini, Francesco and Longa, Antonio and Lepri, Bruno and Passerini, Andrea and Jaeger, Manfred},
  booktitle = {Advances in Neural Information Processing Systems},
  volume = {38},
  year = {2025},
  doi = {10.52202/085713-4090},
  url = {https://proceedings.neurips.cc/paper_files/paper/2025/hash/b1cd72276feff8173462e3f733ac66f8-Abstract-Conference.html}
}

@article{cai1992cfi,
  title = {An Optimal Lower Bound on the Number of Variables for Graph Identification},
  author = {Cai, Jin-Yi and F{\"u}rer, Martin and Immerman, Neil},
  journal = {Combinatorica},
  volume = {12},
  number = {4},
  pages = {389--410},
  year = {1992},
  doi = {10.1007/BF01305232}
}

@inproceedings{xu2019gin,
  title = {How Powerful are Graph Neural Networks?},
  author = {Xu, Keyulu and Hu, Weihua and Leskovec, Jure and Jegelka, Stefanie},
  booktitle = {International Conference on Learning Representations},
  year = {2019},
  url = {https://openreview.net/forum?id=ryGs6iA5Km}
}

@inproceedings{feng2023fwl,
  title = {Extending the Design Space of Graph Neural Networks by Rethinking Folklore {Weisfeiler-Lehman}},
  author = {Feng, Jiarui and Kong, Lecheng and Liu, Hao and Tao, Dacheng and Li, Fuhai and Zhang, Muhan and Chen, Yixin},
  booktitle = {Advances in Neural Information Processing Systems},
  pages = {9029--9064},
  publisher = {Curran Associates, Inc.},
  volume = {36},
  year = {2023},
  doi = {10.52202/075280-0397},
  url = {https://proceedings.neurips.cc/paper_files/paper/2023/hash/1cac8326ce3fbe79171db9754211530c-Abstract-Conference.html}
}

@article{whitney1932congruent,
  title = {Congruent Graphs and the Connectivity of Graphs},
  author = {Whitney, Hassler},
  journal = {American Journal of Mathematics},
  volume = {54},
  number = {1},
  pages = {150--168},
  year = {1932},
  doi = {10.2307/2371086},
  publisher = {Johns Hopkins University Press}
}

@article{ramana1994fractional,
  title = {Fractional Isomorphism of Graphs},
  author = {Ramana, Motakuri V. and Scheinerman, Edward R. and Ullman, Daniel},
  journal = {Discrete Mathematics},
  volume = {132},
  number = {1--3},
  pages = {247--265},
  year = {1994},
  doi = {10.1016/0012-365X(94)90241-0}
}

@book{grohe2017descriptive,
  title = {Descriptive Complexity, Canonisation, and Definable Graph Structure Theory},
  author = {Grohe, Martin},
  year = {2017},
  publisher = {Cambridge University Press},
  series = {Lecture Notes in Logic},
  volume = {47},
  doi = {10.1017/9781139028868}
}

@article{du2026lgan,
  title = {{LGAN}: An Efficient High-Order Graph Neural Network via the Line Graph Aggregation},
  author = {Du, Lin and Bai, Lu and Li, Jincheng and Cui, Lixin and Du, Hangyuan and Zhang, Lichi and Chen, Yuting and Li, Zhao},
  journal = {Proceedings of the AAAI Conference on Artificial Intelligence},
  volume = {40},
  number = {25},
  pages = {20914--20922},
  year = {2026},
  doi = {10.1609/aaai.v40i25.39232},
}

@article{fuhlbruck2021identifiability,
  author = {Fuhlbr{\"u}ck, Frank and K{\"o}bler, Johannes and Verbitsky, Oleg},
  title = {Identifiability of Graphs with Small Color Classes by the {Weisfeiler--Leman} Algorithm},
  journal = {SIAM Journal on Discrete Mathematics},
  volume = {35},
  number = {3},
  pages = {1792--1853},
  year = {2021},
  doi = {10.1137/20M1327550}
}

@article{ponomarenko2023wl,
  author = {Ponomarenko, Ilia},
  title = {On the {WL}-dimension of Circulant Graphs of Prime Power Order},
  journal = {Algebraic Combinatorics},
  volume = {6},
  number = {6},
  pages = {1469--1490},
  year = {2023},
  doi = {10.5802/alco.315},
}

@misc{chen2019coherent,
  author = {Chen, Gang and Ponomarenko, Ilia},
  title = {Lectures on Coherent Configurations},
  year = {2019},
  howpublished = {Online lecture notes},
  note = {Updated edition},
  url = {https://www.pdmi.ras.ru/~inp/ccNOTES.pdf},
}

@inproceedings{gilmer2017mpnn,
  title = {Neural Message Passing for Quantum Chemistry},
  author = {Gilmer, Justin and Schoenholz, Samuel S. and Riley, Patrick F. and Vinyals, Oriol and Dahl, George E.},
  booktitle = {Proceedings of the 34th International Conference on Machine Learning},
  series = {Proceedings of Machine Learning Research},
  volume = {70},
  pages = {1263--1272},
  year = {2017},
  publisher = {PMLR},
  url = {https://proceedings.mlr.press/v70/gilmer17a.html},
}

@inproceedings{hamilton2017graphsage,
  title = {Inductive Representation Learning on Large Graphs},
  author = {Hamilton, William L. and Ying, Zhitao and Leskovec, Jure},
  booktitle = {Advances in Neural Information Processing Systems},
  volume = {30},
  pages = {1024--1034},
  year = {2017},
  publisher = {Curran Associates, Inc.},
  url = {https://proceedings.neurips.cc/paper/2017/hash/5dd9db5e033da9c6fb5ba83c7a7ebea9-Abstract.html}
}

@inproceedings{khalil2017combinatorial,
  title = {Learning Combinatorial Optimization Algorithms over Graphs},
  author = {Khalil, Elias and Dai, Hanjun and Zhang, Yuyu and Dilkina, Bistra and Song, Le},
  booktitle = {Advances in Neural Information Processing Systems},
  volume = {30},
  pages = {6348--6358},
  year = {2017},
  publisher = {Curran Associates, Inc.},
  url = {https://proceedings.neurips.cc/paper/2017/hash/d9896106ca98d3d05b8cbdf4fd8b13a1-Abstract.html}
}

@inproceedings{allamanis2018programs,
  title = {Learning to Represent Programs with Graphs},
  author = {Allamanis, Miltiadis and Brockschmidt, Marc and Khademi, Mahmoud},
  booktitle = {International Conference on Learning Representations},
  year = {2018},
  url = {https://openreview.net/forum?id=BJOFETxR-}
}

\clearpage

\appendix

\section{Proof of the \texorpdfstring{\(3\)}{3}-WL Backward-Transfer Theorem}
\label{app:backward-transfer-proof}
This appendix first proves the connected Whitney-general core,
Theorem~\ref{thm:backward-transfer}, using joint pair refinement, and then
derives the componentwise extension in
Theorem~\ref{thm:componentwise-backward}.  Throughout the core proof,
\(G,H\) denote the two root
graphs and \(L(G),L(H)\) their line graphs; we do not rename either line
graph. ``Pair refinement'' means the joint ordered-pair update below, i.e.,
2-FWL. By the standard arity-shift theorem, its stable graph-distinguishing
power equals that of the paper's separate-coordinate \(3\)-WL
\citep{feng2023fwl}. The symbol \(\circ\) denotes the
Hadamard (entrywise) product.

\paragraph{Proof outline.}
The proof proceeds in five steps.  First, we characterize stable pair
refinement by its coherent matrix algebra.  Second, we determine root endpoint
degrees and star and dart profiles from the stable pair algebra of \(L(G)\).
Third, we use these profiles to construct root-side algebras and an isomorphism
\(\Omega\) induced by the line-algebra isomorphism.  Fourth, we construct the root
profile cells at distances zero and one, treating unequal and equal profile
fibers separately.  Fifth, we construct the remaining cells by induction on
root distance.  The resulting algebras are Hadamard closed, and \(\Omega\) then
restricts to an isomorphism of the stable root pair algebras.

\subsection{Coherent-algebra characterization of pair refinement}

We first record the standard characterization of joint pair refinement by
coherent configurations.  Proposition~2.3 of
\citet{fuhlbruck2021identifiability} identifies the stable joint-pair
partition with the coherent closure of the graph relations, and their
Lemma~2.4 proves that an algebraic isomorphism preserves every refinement
round; see also
\citet[Sections~2.2, 2.5, and 3.1]{ponomarenko2023wl}. Those references call
the joint update on ordered pairs \(2\)-WL. In the graph-learning terminology
used here it is 2-FWL; its equivalence in stable graph-distinguishing power
to separate-coordinate \(3\)-WL is the separate arity-shift result
\citep{feng2023fwl}. We include the short derivation
needed here so that the matrix identities and their index order are explicit.

For a graph \(G\), pair refinement assigns colors to ordered pairs by the
joint-replacement update
\[
 c_{t+1}^G(x,y)
 =\hash\left(
   c_t^G(x,y),
   \multiset{\bigl(c_t^G(z,y),c_t^G(x,z)\bigr):z\in V(G)}
 \right).
\]
The order of the two entry components agrees with the global-replacement
formula in Section~\ref{sec:global-edge}. Some references list the two entry
coordinates in the reverse order; the fixed coordinate swap is a bijection on
entries and does not change the induced refinement partition. The
initial color \(c_0^G(x,y)\)
records whether \(x=y\), whether \(x\) and
\(y\) are adjacent, or whether they are distinct and nonadjacent.  Write
\(c_\infty^G\) for the \emph{stable} coloring, attained at the first round
after which one further update no longer refines the induced partition of
ordered pairs.  Call two graphs \emph{pair-refinement equivalent} when the
stable graph-level signatures of their joint pair refinements, computed
with the shared injective encoding of Section~\ref{sec:background}, are
equal. By the standard arity-shift theorem, this is equivalent in stable
graph-distinguishing power to
\(G\equiv_{3\text{-WL}}H\)
\citep{feng2023fwl}.
For a stable color \(\rho\), define its \(0\)-\(1\) relation matrix by
\[
 (R_\rho^G)_{xy}=\one[c_\infty^G(x,y)=\rho].
\]
We use the same symbol \(R_\rho^G\) for the matrix and for its support
relation \(\{(x,y):c_\infty^G(x,y)=\rho\}\), writing
\((x,y)\in R_\rho^G\) accordingly.  At stabilization, one more
refinement round does not split a color class.  Hence, if both \((x,y)\) and
\((x',y')\) have stable color \(\tau\), their stable multisets of replacement
entries are equal.  The multiplicity of the entry \((\sigma,\rho)\) is
therefore independent of the chosen pair of color \(\tau\).  This defines the
\emph{intersection number}
\[
 p_{\rho\sigma}^{\tau}
 =\left|\{z:(x,z)\in R_\rho^G,\ (z,y)\in R_\sigma^G\}\right|
 \qquad ((x,y)\in R_\tau^G).
\]
This representative-independence is the standard intersection-number axiom
for a coherent configuration
\citep[Section~2.2, condition~(C3)]{ponomarenko2023wl}; the argument above
shows directly why stable pair refinement satisfies it.
The reversed order \((\sigma,\rho)\) is exactly the entry order in the update:
\((z,y)\) has color \(\sigma\) and \((x,z)\) has color \(\rho\).  Entrywise,
for \((x,y)\in R_\tau^G\),
\[
 \begin{aligned}
 (R_\rho^G R_\sigma^G)_{xy}
 &=\sum_{z\in V(G)}(R_\rho^G)_{xz}(R_\sigma^G)_{zy}\\
 &=\left|\{z:(x,z)\in R_\rho^G,\ (z,y)\in R_\sigma^G\}\right|
 =p_{\rho\sigma}^{\tau}.
 \end{aligned}
\]
Because the stable relations partition \(V(G)^2\), this entrywise identity is
equivalent to
\[
 R_\rho^G R_\sigma^G
 =\sum_\tau p_{\rho\sigma}^{\tau}R_\tau^G.
\]

The matrices \(R_\rho^G\) have disjoint supports and sum to \(J_{V(G)}\).
The equality and adjacency relations are unions of stable classes because
they are distinguished initially, so their real span \(\mathcal A_G\)
contains \(I_{V(G)}\) and \(A_G\), as well as \(J_{V(G)}\).  The displayed
product formula proves closure under ordinary multiplication.  Coordinate
swap permutes the stable classes: this follows by induction from the
symmetric initial types.  If \(a^\top,b^\top\) denote the reversed colors
at one round, then an entry \((a,b)\) becomes
\((b^\top,a^\top)\) after reversing the outer pair.  Thus
\((R_\rho^G)^\top\) is another basis relation; write \(\rho^\top\) for its
color, so \((R_\rho^G)^\top=R_{\rho^\top}^G\), and write \(|R_\rho^G|\) for
the number of ones in \(R_\rho^G\).  Disjointness gives
\[
 R_\rho^G\circ R_\sigma^G
 =\one[\rho=\sigma]R_\rho^G.
\]
Thus \(\mathcal A_G\) is closed under transpose and Hadamard multiplication
as well.  Call a real linear space of square matrices that contains the
identity and all-ones matrices and is closed under transpose, ordinary
multiplication, and Hadamard multiplication a \emph{coherent matrix
algebra}; \(\mathcal A_G\) is the \emph{stable pair coherent algebra} of
\(G\).

\begin{lemma}[Stable-pair algebra characterization]
\label{lem:pair-algebra-bridge}
Graphs \(G,H\) are pair-refinement equivalent if and only if there is a bijection
\(\pi\) between their stable pair colors such that the linear map
\[
 \varphi(R_\rho^G)=R_{\pi(\rho)}^H
\]
maps \(I_{V(G)},J_{V(G)},A_G\) to
\(I_{V(H)},J_{V(H)},A_H\) and preserves transpose, ordinary multiplication,
Hadamard multiplication, and matrix trace.
\end{lemma}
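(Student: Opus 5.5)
We prove the two directions separately, working throughout with the stable joint pair refinement run in parallel on \(G\) and \(H\) with the shared injective encoding. Because the encoding is shared, a stable color name \(\rho\) has the same meaning in both graphs. We therefore take \(\pi\) to be the identity on color names and \(\varphi(R_\rho^G)=R_\rho^H\).

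\emph{Forward direction.} Suppose the stable graph-level signatures agree. Then the sets of stable colors occurring in \(G\) and \(H\) coincide, and each color has the same multiplicity \(|R_\rho^G|=|R_\rho^H|\).

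The initial colors are recoverable from stable colors, since refinement only splits classes. Hence the equality, adjacency and nonadjacency relations are the same unions of basis relations in both graphs. This gives \(\varphi(I)=I\) and \(\varphi(A_G)=A_H\). Because the basis relations sum to \(J\), we also get \(\varphi(J)=J\).

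For transpose, the color \(\rho^\top\) is a function of \(\rho\) under the shared encoding: the reversed color is computed by the same inductive rule on the encoded entries in both graphs. So \(\varphi\) commutes with transpose.

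For ordinary multiplication, the intersection number \(p^\tau_{\rho\sigma}\) is read off from the stable color \(\tau\) itself. One more update maps \(\tau\) injectively to its entry multiset, and at stability that multiset is encoded in a way determined by \(\tau\). So the structure constants are identical in \(G\) and \(H\), and \(\varphi\) preserves products.

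Hadamard multiplication is preserved trivially, since basis relations have disjoint supports in both graphs. For trace, note that \(\operatorname{tr}R_\rho\) equals \(|R_\rho|\) when \(\rho\) is a diagonal color and \(0\) otherwise. Diagonality is encoded in the color, and the sizes agree.

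\emph{Converse direction.} Suppose such a \(\varphi\) exists. We show by induction on rounds that \(\pi\) is compatible with every refinement round. This is the Fuhlbr\"uck--K\"obler--Verbitsky Lemma~2.4 argument.

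\begin{itemize}
\item Hadamard preservation and mapping basis to basis mean \(\varphi\) is a bijection of \(0\)-\(1\) partitions.
\item Preservation of \(I\) and \(A\) means each \(R_\rho^G\) and \(R_{\pi(\rho)}^H\) lie in the same initial class.
\item Preservation of products gives \(p^{\tau}_{\rho\sigma}(G)=p^{\pi(\tau)}_{\pi(\rho)\pi(\sigma)}(H)\).
\end{itemize}

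By induction, the round-\(t\) color of any pair of stable color \(\rho\) in \(G\) equals the round-\(t\) color of any pair of color \(\pi(\rho)\) in \(H\). The round-\((t+1)\) entry multiset is aggregated from stable classes via these intersection numbers, using the coarsening maps from stable to round-\(t\) colors, which agree by hypothesis. Hence the stable colors of \(\rho\) and \(\pi(\rho)\) coincide as names.

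Multiplicities then follow from the trace and \(J\) identities:
\[
|R_\rho| = \operatorname{tr}(R_\rho R_\rho^\top).
\]
Since \(\varphi\) preserves product, transpose and trace, \(|R_\rho^G|=|R_{\pi(\rho)}^H|\). The stable signatures are therefore equal.

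The main obstacle is the converse induction. One must check that the entry multisets at round \(t+1\) are determined by the stable structure constants together with the round-\(t\) coarsening. This needs care with the ordered entry \((c(z,y),c(x,z))\) versus the product order \(R_\rho R_\sigma\), and I would handle it exactly via the displayed entrywise identity. The forward direction's claim that \(\rho^\top\) and \(p^\tau_{\rho\sigma}\) are graph-independent functions of the name \(\tau\) is the other point needing care. It rests on the injective encoding being applied identically in both graphs and on the stability round producing no new splits.
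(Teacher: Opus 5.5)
Your proposal is correct and follows the paper's proof essentially step for step. In the forward direction you match color names under the shared encoding and read off the initial types, transposes, intersection numbers, Hadamard products (from disjoint supports) and traces (from relation sizes); in the converse you run the Fuhlbr\"uck--K\"obler--Verbitsky round-by-round induction through the intersection numbers and recover multiplicities from \(|R_\rho|=\operatorname{tr}(R_\rho R_\rho^\top)\), with the same entry-order bookkeeping the paper does.
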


\begin{proof}
Run pair refinement on \(G\) and \(H\) in parallel with the shared injective
color encoding fixed in Section~\ref{sec:background}.  If the stable
signatures agree, matching
color names gives a bijection \(\pi\) and preserves every relation
multiplicity \(|R_\rho^G|=|R_{\pi(\rho)}^H|\).  A stable color \(\tau\)
records its initial equality/adjacency type and the multiplicity of every
replacement entry.  The observation above therefore shows that \(\pi\)
preserves every intersection number \(p_{\rho\sigma}^{\tau}\), with the
same statement for the transposed colors.  In particular,
\(\pi(\rho^\top)=\pi(\rho)^\top\).  Consequently the displayed linear map
preserves ordinary multiplication and transpose.  It preserves Hadamard
multiplication by the disjoint-support identity above.
Initial types give preservation of \(I,A\), summing all basis matrices gives
preservation of \(J\), and matching relation sizes gives trace preservation:
\(\operatorname{tr}(R_\rho^G)=|R_\rho^G|\) for an equality-type color and is
zero otherwise.  Hence trace is preserved on the whole span.

Conversely, suppose such a basis bijection and linear map exist.  Preservation
of \(I,A\), transpose, and multiplication preserves the initial types and
all intersection numbers.  More explicitly, uniqueness of expansion in the
disjoint basis shows from
\(\varphi(I_{V(G)})=I_{V(H)}\) and \(\varphi(A_G)=A_H\) that
\(\pi\) preserves equality and adjacency type; uniqueness of the expansion
of \(R_\rho R_\sigma\) then shows that it preserves every
\(p_{\rho\sigma}^{\tau}\).  To see explicitly why this recovers the
refinement, observe inductively that each round-\(t\) color is a union of
stable relations.  For a stable relation \(R_\tau^G\), the multiplicity of a
round-\(t\) entry \((b,a)\) is the sum of \(p_{\rho\sigma}^{\tau}\) over the
stable relations \(R_\rho^G\) contained in \(a\) and \(R_\sigma^G\) contained
in \(b\).  Thus the transported unions have identical entry multisets at the
next round.  This is the same induction as in
\citet[Lemma~2.4]{fuhlbruck2021identifiability}, with the two entry
components reversed to match our convention.

It remains to match graph-level color multiplicities.  Trace preservation
applied to
\(R_\rho^G(R_\rho^G)^\top\) gives
\[
 |R_\rho^G|
 =\operatorname{tr}\!\left(R_\rho^G(R_\rho^G)^\top\right)
 =|R_{\pi(\rho)}^H|.
\]
Thus the matched stable colors have the same refinement data and the same
multiplicities, so their stable graph-level signatures agree.  This is the
standard coherent-configuration characterization quoted above.
\end{proof}

Equivalently, \(\mathcal A_G\) is the smallest coherent matrix algebra
containing \(I_{V(G)},J_{V(G)},A_G\)
\citep[Propositions~2.1 and~2.3]{fuhlbruck2021identifiability}.  We also use that a
joint level set of finitely many matrices in \(\mathcal A_G\) is a union of
basis relations.
Its indicator belongs to \(\mathcal A_G\) by finite Hadamard interpolation.
Explicitly, for a transported pair of matrices, take the union of the finite
sets of entries attained on the two sides.  For a requested value \(a\) in
that union, choose the Lagrange polynomial that is one at \(a\) and zero at
every other value in the union, and evaluate it entrywise using Hadamard
products.  If \(a\) is not attained on one side, the resulting equality
indicator on that side is the zero matrix.  A requested value outside the
union is assigned the zero indicator on both sides.  The Hadamard product of
these single-matrix indicators is the desired joint-level-set indicator.
Because the same polynomial is used on both sides, the entire construction is
transported.  This convention applies to every entrywise selector below.

The \emph{paw} is a triangle with one pendant edge, and the \emph{diamond}
is \(K_4\) with one edge removed.  These names are used only for the two
small local configurations isolated below.

\begin{lemma}[Exceptional roots handled directly]
\label{lem:bt-direct-roots}
Assume \(G,H\) are Whitney-general and
\(L(G)\equiv_{3\text{-WL}}L(H)\).  If either root is \(K_1,K_2\), the paw,
or the diamond, then \(G\cong H\).
\end{lemma}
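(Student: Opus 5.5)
The plan is to use invariants that stable pair refinement on line graphs preserves: the number of line vertices, the line-degree multiset, and triangle-type counts. With these, the other line graph is pinned down up to isomorphism, and Whitney's theorem then recovers the other root. First, by Lemma~\ref{lem:pair-algebra-bridge}, \(L(G)\equiv_{3\text{-WL}}L(H)\) yields an algebra isomorphism \(\varphi\) preserving \(I,J,A\), products and traces. Hence \(|E(G)|=\operatorname{tr}(I)=|E(H)|\) and \(\operatorname{tr}(A^2)=2|E(L(G))|\). The diagonal of \(A^2\) is a Hadamard-selectable matrix in the algebra, and the sizes of its level sets transport. So the line-degree multisets agree, and likewise the number of line triangles \(\operatorname{tr}(A^3)/6\).

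The exceptional roots are then handled one at a time, assuming \(G\) is the listed root; the case where \(H\) is listed is symmetric.

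\emph{Case \(K_1\).} Here \(L(G)\) is empty, so \(|E(H)|=0\). A connected \(H\) with no edges is \(K_1\).

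\emph{Case \(K_2\).} Here \(L(G)=K_1\), so \(H\) is connected with one edge, hence \(H\cong K_2\).

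\emph{Case paw.} The line graph of the paw is the diamond: four line vertices with degrees \((2,2,3,3)\) and five edges. Any graph on four vertices with five edges is the diamond, so \(L(H)\) is the diamond. \(H\) is connected, has four edges, and is not \(K_{1,3}\) or \(K_3\). Whitney's theorem applied to \(L(H)\cong L(\text{paw})\) then gives \(H\cong\) paw.

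\emph{Case diamond.} The diamond has five edges. Its line graph is the wheel \(W_4\): \(C_4\) plus a hub, with degrees \((4,3,3,3,3)\) and eight edges. I will check that a graph on five vertices with this degree sequence must be \(W_4\). The degree-4 vertex is adjacent to all others. The remaining four vertices each need degree 2 among themselves, so they form a 2-regular graph on four vertices, which is \(C_4\). Hence \(L(H)\cong L(\text{diamond})\), and Whitney's theorem gives \(H\cong\) diamond.

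The main obstacle is the reverse situation, where the exceptional root is the second graph. This is symmetric because the hypothesis is symmetric. A further subtlety is that the degree sequence alone must force the isomorphism type of the small line graph. This works for five edges on four vertices, and for the wheel as shown, so no finer algebraic data from \(\varphi\) should be needed. In every case \(G\cong H\), which in particular implies \(G\equiv_{3\text{-WL}}H\).
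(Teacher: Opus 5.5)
Your proof is correct and takes essentially the same route as the paper. Both arguments transport the line-graph order and the line-degree multiset through stable pair refinement: you do it via traces and level sets of \(I\circ A^2\) in the coherent algebra, while the paper uses adjacency-colored replacement multiplicities on diagonal pairs. Both then observe that these data force \(L(H)\) to be empty, \(K_1\), the diamond, or \(K_1\vee C_4\), and finish with Whitney's theorem. The triangle count \(\operatorname{tr}(A^3)/6\) you mention is never used and can be dropped.
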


\begin{proof}
On a diagonal pair \((e,e)\), the multiplicity of adjacency-colored
replacement entries records \(d_{L(G)}(e)\); hence pair refinement preserves
line-graph order and degree multiplicities.  Equivalently, the line degree
on a diagonal fiber is the sum of the out-valencies of its outgoing adjacent
basis relations, while the fiber size is its trace; all these quantities are
transported.  A connected root with an empty
line graph is \(K_1\), and a connected root with
a one-vertex line graph is \(K_2\).  The line graph of the paw has degree
multiset \((3,3,2,2)\): the two degree-three vertices are universal, so this
uniquely determines \(K_4\) minus one edge.
The line graph of the diamond has degree multiset \((4,3,3,3,3)\): the
degree-four vertex is universal and the other four vertices induce \(C_4\),
so the line graph is uniquely \(K_1\vee C_4\), where \(\vee\) denotes graph
join.  In either case \(L(G)\) and
\(L(H)\) are isomorphic.  Neither line graph is \(K_3\), so Whitney's theorem
\citep{whitney1932congruent} gives \(G\cong H\).
\end{proof}

\subsection{Stable line data and root profiles}

For the remainder, let \(G,H\) be Whitney-general roots satisfying
\(L(G)\equiv_{3\text{-WL}}L(H)\), and assume by
Lemma~\ref{lem:bt-direct-roots} that both roots have order at least three and
that neither is the paw or the diamond.

These cases are separated from the general construction for the following
reasons.  For roots of orders one and two, the line algebra contains no
adjacent basis relation.  The paw and diamond are the additional
Whitney-general roots for which, respectively, the degree collision (A.3) of
Lemma~\ref{lem:bt-line-degree} and the wedge ambiguity (A.4) of
Lemma~\ref{lem:bt-wedge-type} occur.  Lemma~\ref{lem:bt-direct-roots} treats
all four cases directly and proves \(G\cong H\).  Thus only the construction
below excludes them.

Let \(\mathcal A_{L(G)}\) have basis
relations \(R_\rho^{L(G)}\).  Write \(D_\alpha^{L(G)}\) for those basis
relations supported on the coordinate diagonal; each is a diagonal
\(0\)-\(1\) projector.  Define the
associated edge fiber
\[
 \mathcal E_\alpha^G
 =\{e\in E(G):(D_\alpha^{L(G)})_{ee}=1\}.
\]
Define \(R_\rho^{L(H)},D_\alpha^{L(H)},\mathcal E_\alpha^H\) analogously.
We call \(R_\rho^{L(G)}\) \emph{adjacent} when its support is contained in
\(A_{L(G)}\), and \emph{disjoint} when its support consists of ordered pairs
of disjoint root edges.  If
\(R_\rho^{L(G)}
 =D_\alpha^{L(G)}R_\rho^{L(G)}D_\beta^{L(G)}\), write
\(\mathrm{src}(\rho)=\alpha\) and \(\mathrm{tgt}(\rho)=\beta\); its
out-valency is
\[
 \nu_\rho
 =\left|\{f\in\mathcal E_\beta^G:(e,f)\in R_\rho^{L(G)}\}\right|,
 \qquad e\in\mathcal E_\alpha^G.
\]
Every basis relation lies in a unique block
\(\mathcal E_\alpha^G\times\mathcal E_\beta^G\), so its source and target
fibers are well defined.  The intersection-number identities make
\(\nu_\rho\) independent of \(e\), and it is positive for a nonempty basis
relation.  Summing over the adjacent relations from \(\mathcal E_\alpha\)
to \(\mathcal E_\delta\) gives the \emph{fiber valency}
\(q_{\alpha\delta}\): every \(e\in\mathcal E_\alpha\) has exactly
\(q_{\alpha\delta}\) line neighbors in \(\mathcal E_\delta\).  We refer to
this constancy as \emph{equitability} of the diagonal fibers; the numbers
\(q_{\alpha\delta}\) are sums of intersection numbers.
Lemma~\ref{lem:pair-algebra-bridge} supplies
\[
 \varphi:\mathcal A_{L(G)}\longrightarrow\mathcal A_{L(H)}
\]
matching all stated algebra operations and trace, with
\(\varphi(R_\rho^{L(G)})=R_\rho^{L(H)}\) and
\(\varphi(D_\alpha^{L(G)})=D_\alpha^{L(H)}\) after synchronizing the labels.  From
now on, unadorned \(R_\rho,D_\alpha,\mathcal E_\alpha\) denote the
\(L(G)\)-relation, \(L(G)\)-projector, and \(G\)-edge fiber, respectively.
Their transported mates retain the explicit notation
\(R_\rho^{L(H)},D_\alpha^{L(H)},\mathcal E_\alpha^H\).  We call any such
matched objects \emph{transported}.  Coordinate unit vectors
are denoted by \(\mathbf e_\alpha\) or \(\mathbf e_\rho\), according to the
index set.  For a matrix \(M\), \(\one[M=a]\) denotes the entrywise
\(0\)-\(1\) matrix with \((e,f)\)-entry \(\one[M_{ef}=a]\); entrywise
inequalities such as \(\one[M>0]\) are interpreted analogously.  Finally,
every automorphism of a root induces an automorphism of its line graph and,
by induction on rounds, preserves each refinement color; stable colors,
fibers, and all profiles defined below are therefore automorphism
invariant.

For a root vertex \(u\), define its \emph{star profile}
\[
 \mathbf s(u)=(s_\alpha(u))_\alpha,
 \qquad
s_\alpha(u)=\left|\{e\in\mathcal E_\alpha:u\in e\}\right|.
\]
Thus \(\mathbf s(u)\) counts incident root edges by their stable diagonal
color in \(L(G)\).  A \emph{dart} is an incident pair \((u,e)\) with \(u\in e\).  For
\(e\in\mathcal E_\alpha\), define its
\emph{dart profile}
\[
 \mathbf d(u,e)_\rho
 =\left|\{f\ne e:u\in f,\ (e,f)\in R_\rho\}\right|,
\]
indexed by all basis relations; it is supported on the adjacent relations
leaving \(\mathcal E_\alpha\), and only those coordinates appear below.  It refines
the \(u\)-wing of \(e\), namely the other edges incident with \(u\), by the
stable relation from \(e\) to each such edge.  Summing coordinates by target
fiber recovers the star profile:
\[
 s_\beta(u)
 =\one[\alpha=\beta]
  +\sum_{\substack{\rho\text{ adjacent}:\,
                    \mathrm{src}(\rho)=\alpha,\;
                    \mathrm{tgt}(\rho)=\beta}}
        \mathbf d(u,e)_\rho.
 \tag{A.1}
\]
The indicator accounts for \(e\) itself, which is included in
\(\mathbf s(u)\) but excluded from \(\mathbf d(u,e)\).

For adjacent root edges \(e=uv,f=uw\), call \(u\) the center and \(v,w\)
the outer vertices.  The wedge is
\emph{closed} if \(vw\) is an edge, and \emph{open} otherwise; in the closed
case \(vw\) is its closing edge.

For a root quantity \(q\), call a transported relation label \emph{mixed} if
\(q\) is nonconstant on its support in either root or has different constant
values on the two transported supports.  Thus proving that a label is not
mixed proves both within-root constancy and cross-root transport.

For a root edge \(e=uv\), put
\[
 \delta_u=d_G(u)-1,\qquad \delta_v=d_G(v)-1,\qquad
 t_e=|N_G(u)\cap N_G(v)|;
\]
in comparisons between two representatives, \(d\) and \(N\) refer to the
root containing the displayed representative.  For \(f\in N_{L(G)}(e)\),
let \(\lambda_e(f)=|N_{L(G)}(e)\cap N_{L(G)}(f)|\), and define
\[
 \operatorname{LD}_G(e)
 =\multiset{\lambda_e(f):f\in N_{L(G)}(e)}.
\]
For integers \(0\le\delta_u\le\delta_v\) and \(0\le t\le\delta_u\), define
the multiset
\[
 F(\delta_u,\delta_v,t)
 =[\delta_u]^{t}\uplus[\delta_u-1]^{\delta_u-t}
  \uplus[\delta_v]^{t}\uplus[\delta_v-1]^{\delta_v-t},
 \tag{A.2}
\]
where \(\uplus\) is multiset union and \([a]^b\) denotes \(b\) copies of
\(a\).

\begin{lemma}[Line-degree multisets and their collisions]
\label{lem:bt-line-degree}
The diagonal pair color of a root edge \(e\) determines
\(\operatorname{LD}_G(e)\), and the multiset is transported.  Moreover,
\(\operatorname{LD}_G(e)=F(\delta_u,\delta_v,t_e)\) with
\(\delta_u\le\delta_v\), and \(F\) determines its three arguments except
for the single collision
\[
 F(0,2,0)=F(1,1,1)=[1,1].
 \tag{A.3}
\]
\end{lemma}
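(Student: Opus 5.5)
The plan has three parts: express \(\operatorname{LD}_G(e)\) through intersection numbers of the stable line algebra, compute it combinatorially in the root, and invert \(F\) by a case analysis.

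\emph{Part 1: \(\operatorname{LD}_G(e)\) is an algebra invariant.} Let \(e\in\mathcal E_\alpha\). For a basis relation \(\tau\) contained in \(A_{L(G)}\), and \((e,f)\in R_\tau\), we have
\[
\lambda_e(f)=\sum_{\rho,\sigma\ \text{adjacent}}p_{\rho\sigma}^{\tau},
\]
since a common line neighbour \(z\) is exactly a \(z\) with \((e,z)\) and \((z,f)\) both in adjacent basis relations. Hence \(\lambda_e(f)\) depends only on \(\tau\); write it \(\lambda(\tau)\). As \(f\) ranges over \(N_{L(G)}(e)\), each adjacent relation \(\tau\) with \(\mathrm{src}(\tau)=\alpha\) is hit exactly \(\nu_\tau\) times. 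Therefore
\[
\operatorname{LD}_G(e)=\biguplus_{\tau\ \text{adjacent},\ \mathrm{src}(\tau)=\alpha}[\lambda(\tau)]^{\nu_\tau},
\]
which depends only on the fiber \(\alpha\), i.e.\ on the diagonal color of \(e\). The isomorphism \(\varphi\) of Lemma~\ref{lem:pair-algebra-bridge} preserves intersection numbers, out-valencies, and adjacency type. So the same formula computed in \(L(H)\) gives the same multiset on \(\mathcal E_\alpha^H\); this is the transport claim.

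\emph{Part 2: the formula \(F\).} Fix \(e=uv\) and a line neighbour \(f=uw\) with \(w\ne v\); this is legitimate because the graph is simple. A common line neighbour of \(e\) and \(f\) must meet both. It is either an edge through \(u\) other than \(e,f\), of which there are \(\delta_u-1\), or the closing edge \(vw\) if present. An edge through \(v\) and not \(u\) meets \(f\) only at \(w\), which gives exactly \(vw\), so nothing else contributes. Hence \(\lambda_e(f)=\delta_u\) for closed wedges and \(\delta_u-1\) for open ones. Among the \(\delta_u\) edges \(f\) at \(u\), exactly \(t_e\) close a triangle, since \(w\in N(u)\cap N(v)\). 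The same count holds at \(v\). This yields \(\operatorname{LD}_G(e)=F(\delta_u,\delta_v,t_e)\) after ordering \(\delta_u\le\delta_v\), with \(t_e\le\delta_u\) automatic.

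\emph{Part 3: injectivity of \(F\), the main obstacle.} From \(F\) we read the total size \(s=\delta_u+\delta_v\) and the value counts. I would split by \(d=\delta_v-\delta_u\).

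If \(d\ge2\), the value sets \(\{\delta_u-1,\delta_u\}\) and \(\{\delta_v-1,\delta_v\}\) are disjoint and separated. If \(d\ge2\) and \(\delta_u\ge1\), the two blocks are nonempty with sizes \(\delta_u\) and \(\delta_v\), so both are read off. Then \(t\) is read as the count of the top value in either block, except when \(t=0\); in that case the maximum is \(\delta_v-1\) and the sizes still pin everything down. The subcase \(d\ge2\), \(\delta_u=0\) gives \([0]^0\) (the empty multiset) on the \(u\) side and \([\delta_v-1]^{\delta_v}\) on the \(v\) side. I will match this against the other cases by size and value.

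If \(d=0\), we get \([\delta]^{2t}[\delta-1]^{2\delta-2t}\). The size is even and equal to \(2\delta\), which determines \(\delta\) and then \(t\).

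If \(d=1\), we get \([\delta_u-1]^{\delta_u-t}[\delta_u]^{\delta_u}[\delta_u+1]^{t}\) with odd size \(2\delta_u+1\), from which \(\delta_u\) and then \(t\) are read.

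Parity of \(s\) separates \(d=0\) from \(d=1\), but not from \(d\ge2\). So the remaining work is cross-comparing \(d\ge2\) multisets with \(d\in\{0,1\}\) multisets of the same size. Three spread-out values, or two values whose counts are incompatible with the block sizes, should rule out coincidences except when some blocks are degenerate, i.e.\ \(\delta_u\le1\). I would finish by enumerating these small cases by hand. That enumeration should isolate exactly \(F(0,2,0)=[1,1]=F(1,1,1)\): a pendant edge at a degree-three vertex versus an edge in a triangle between two degree-two vertices.
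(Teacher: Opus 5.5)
Parts 1 and 2 are correct and match the paper's own derivation. Your sum of intersection numbers over adjacent basis relations is the paper's observation that \(A_{L(G)}^2\) is constant on each adjacent basis relation, and your wing-by-wing listing is how the paper obtains (A.2). There is one slip. In the \(d=1\) display, the middle value \(\delta_u\) occurs \(\delta_u+1\) times: \(t\) copies come from the \(u\)-wing and \(\delta_v-t=\delta_u+1-t\) from the \(v\)-wing. Only with this count do you get the odd size \(2\delta_u+1\) that you state.

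The genuine gap is Part 3, which carries the whole ``moreover'' clause. You reduce correctly to comparing \(d\ge2\) multisets with \(d\in\{0,1\}\) multisets of the same size. After that you only predict two things: that coincidences ``should'' be excluded unless \(\delta_u\le1\), and that a hand enumeration ``should'' isolate (A.3). Neither prediction is proved. The enumeration is also not finite, because \(\delta_v\) is unbounded even when \(\delta_u\in\{0,1\}\). Lemma~\ref{lem:bt-endpoint-degrees} relies on (A.3) being the \emph{only} collision, so this step cannot be left as a prediction.

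The paper closes it with three invariants:
\(|F|=\delta_u+\delta_v\); \(\max F\), which equals \(\delta_v-1\) if \(t=0\) and \(\delta_v\) if \(t>0\); and \(\sum F=\delta_u(\delta_u-1)+\delta_v(\delta_v-1)+2t\). These recover all three arguments within one \(t\)-status. Across statuses, equal size and equal maximum force the comparison to be \((a,b,t)\) against \((a-1,b+1,0)\). There, the \(b+1\) copies of the maximum \(b\) force \(a=b\), and the \(a-1\) copies of \(a-2\) then force \(a=t=1\).

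Your own case split can also be finished directly.
\begin{itemize}
\item Against \(d=0\): a \(d\ge2\) multiset with \(\delta_u\ge1\) has \(\max-\min\ge\delta_v-1-\delta_u\ge1\). Equality would need both \(t=0\) and \(t=\delta_u\), so its spread is at least two. A \(d=0\) multiset has spread at most one, so a match forces \(\delta_u=0\). The resulting \([\delta_v-1]^{\delta_v}\) matches only \(F(1,1,1)\).
\item Against \(d=1\): a \(d=1\) multiset of size \(2c+1\) contains \(c\) exactly \(c+1\) times. A \(d\ge2\) multiset of that size has \(\delta_v\ge c+2\). In it, \(c\) occurs \(t\le c\) times if \(c=\delta_u\). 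It occurs \(\delta_u-t\) times if \(c=\delta_u-1\), which equals \(c+1\) only when \(t=0\), and that forces \(\delta_v=c\), a contradiction. Otherwise \(c\) does not occur at all, so there is no collision in this case.
\end{itemize}
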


\begin{proof}
The diagonal pair color of \(e\) determines the multiset.  Indeed,
\(A_{L(G)}^2\) is in the coherent algebra, so it has a constant value
\(\ell_\rho\) on every adjacent basis relation \(R_\rho\).  If
\(e\in\mathcal E_\alpha\), then
\[
 \operatorname{LD}_G(e)
 =\mathop{\biguplus}_{\substack{\rho\ {\rm adjacent}\\
                    \mathrm{src}(\rho)=\alpha}}
   [\ell_\rho]^{\nu_\rho}.
\]
Both \(\ell_\rho\) and the out-valency \(\nu_\rho\) are intersection-number
data, so this multiset is also transported.  Listing the two endpoint wings
gives \(\operatorname{LD}_G(e)=F(\delta_u,\delta_v,t_e)\): each wing at an
endpoint of degree \(d\) contributes one half of (A.2), with \(t_e\) of its
edges closed by the opposite endpoint.
For the uniqueness claim, \(|F|=\delta_u+\delta_v\); for nonempty \(F\),
\(\max F=\delta_v-1\) when \(t_e=0\) and \(\delta_v\) when \(t_e>0\).
If two candidates have the
same \(t_e=0\) status, these data recover \(\delta_u,\delta_v\); for
\(t_e>0\), the identity
\[
 \sum F=\delta_u(\delta_u-1)+\delta_v(\delta_v-1)+2t_e
\]
recovers \(t_e\).  A collision across the two statuses must compare a
positive-\(t\) candidate \((a,b,t)\) with \((a-1,b+1,0)\): equality of
cardinalities and maxima gives these two shifts.  The zero-\(t\) multiset
contains \(b+1\) copies of its maximum \(b\), which forces \(a=b\);
otherwise the positive candidate has at most \(a<b\) such copies.  With
\(a=b\), the zero-\(t\) candidate also contains \(a-1\) copies of
\(a-2\), absent from the positive candidate.  Hence \(a=1\), and equality
of the remaining multiplicities forces \(t=1\).  This is exactly (A.3).
The empty case is immediate.
\end{proof}

\begin{lemma}[Endpoint-degree recovery]
\label{lem:bt-endpoint-degrees}
Every diagonal fiber of \(\mathcal A_{L(G)}\) fixes its unordered endpoint
degrees, and these labels are transported by \(\varphi\).
\end{lemma}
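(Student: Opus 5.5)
The plan is to read the endpoint degrees off the transported multiset \(\operatorname{LD}_G(e)\) using Lemma~\ref{lem:bt-line-degree}. The only case where this fails is the collision (A.3), and there I would use the diagonal labels of the line neighbours of \(e\).

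\emph{Generic fibers.} By Lemma~\ref{lem:bt-line-degree}, the diagonal label \(\alpha\) of \(e\) determines \(\operatorname{LD}_G(e)\), and \(\varphi\) transports it. Off the collision, \(F\) determines \((\delta_u,\delta_v,t_e)\) with \(\delta_u\le\delta_v\). Hence \(\{d_G(u),d_G(v)\}=\{\delta_u+1,\delta_v+1\}\) is constant on \(\mathcal E_\alpha\). The same value \(F\) is attached to \(\mathcal E_\alpha^H\), so \(H\) gives the same unordered degrees and the labels are transported.

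\emph{Collision fibers.} Suppose \(\operatorname{LD}_G(e)=[1,1]\). Then a representative \(e\) has exactly two line neighbours \(f_1,f_2\), and there are two possible configurations.
\begin{itemize}
\item[(A)] \(e=pv\) is pendant with \(d(v)=3\), and \(f_i=vx_i\).
\item[(B)] \(e=uv\) lies in a triangle \(uvw\) with \(d(u)=d(v)=2\), and \(f_1=uw\), \(f_2=vw\). Since \(G\) is connected and not \(K_3\), we have \(d(w)\ge3\).
\end{itemize}
The adjacent basis relations leaving \(\alpha\), together with their targets and out-valencies \(\nu_\rho\), are intersection-number data. So the multiset of diagonal labels of \(f_1,f_2\) is fixed by \(\alpha\) and transported.

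Next I check that \(f_1,f_2\) are not themselves collision edges.
\begin{itemize}
\item In (A), \(f_i\) is a collision edge only if its degrees are \(\{1,3\}\), because \(d(v)=3\) rules out \(\{2,2\}\). That would make both \(x_i\) leaves and \(G=K_{1,3}\), which is excluded.
\item In (B), \(f_i\) has degrees \(\{2,d(w)\}\) with \(d(w)\ge3\). It therefore avoids both \(\{1,3\}\) and \(\{2,2\}\).
\end{itemize}
Hence the generic case applies to \(f_1,f_2\), and their labels determine their degree pairs and triangle counts \(t_{f_i}\).

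To separate (A) from (B), compare these data.
\begin{itemize}
\item In (A), \(f_i\) has degrees \(\{3,d(x_i)\}\) and \(t_{f_i}=\one[x_1x_2\in E(G)]\).
\item In (B), \(f_i\) has degrees \(\{2,d(w)\}\) and \(t_{f_i}\ge1\).
\end{itemize}
The neighbour data can agree only if \(d(x_i)=2\), \(d(w)=3\), and \(x_1x_2\in E(G)\). Then \(x_1,x_2\) have degree two and lie on the triangle \(vx_1x_2\), so by connectivity \(G\) is the paw. The paw was excluded after Lemma~\ref{lem:bt-direct-roots}, so the neighbour data always separate (A) from (B).

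Consequently the label \(\alpha\) forces all members of \(\mathcal E_\alpha\) into one configuration, giving endpoint degrees \(\{1,3\}\) or \(\{2,2\}\). The deciding data are transported by \(\varphi\), and the same exclusions hold for \(H\), so the same decision is made for \(\mathcal E_\alpha^H\).

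\emph{Expected obstacle.} The main difficulty is making the collision argument fully algebraic and checking that no other coincidence slips through. I will phrase the neighbour data as the multiset \(\biguplus_{\rho}[\mathrm{tgt}(\rho)]^{\nu_\rho}\) over adjacent \(\rho\) leaving \(\alpha\), with the generic-case labels attached to each target fiber. I also need to check the degenerate small cases, such as \(x_1=x_2\) or \(w\) adjacent to further structure. These cannot occur because the graphs are simple and the roots have order at least three and are not the paw or the diamond.
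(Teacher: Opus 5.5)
Your proposal is correct and follows essentially the paper's argument. It reads the endpoint degrees off the transported multiset $\operatorname{LD}_G(e)$. In the collision case (A.3), it uses equitability to match the diagonal labels of the two line neighbours, applies uniqueness of $F$ to those neighbours, and forces the pendant side to be the paw. Stopping once one root is the paw suffices, because the standing hypotheses exclude the paw for both roots; the paper additionally uses twin automorphisms to show the triangular side is a paw too.

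There is one harmless slip in configuration (A). A collision neighbour $f_i$ forces only $x_i$ to be a leaf, not both, so $G=K_{1,3}$ does not follow. You do not need that bullet: the multiset match makes the neighbour fibers in (A) equal to those in (B), and you correctly showed the (B) fibers are non-collision.
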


\begin{proof}
By Lemma~\ref{lem:bt-line-degree}, only the collision (A.3) could make a
transported diagonal fiber mixed in its unordered endpoint degrees.
In root language, (A.3) confuses a pendant \((1,3)\)-edge with a triangular
\((2,2)\)-edge.  Suppose two representatives of a transported diagonal
fiber \(\mathcal E_\alpha\), either in one root or one in each root, realize
the two cases.  Write the pendant edge as \(e=pc\), with
the other edges at the degree-three vertex \(c\) equal to
\(f_1=ca,f_2=cb\).
Write the triangular edge as \(e'=uv\) in \(uvw\), with
\(f'_1=uw,f'_2=vw\).  The degree-two vertices \(u,v\) are adjacent true
twins, so swapping them is a root automorphism; it shows that
\(f'_1,f'_2\) have one diagonal color \(\beta\).  The \(\alpha\)-edge \(e'\)
has exactly these two line neighbors.  Equitability of the diagonal
\(\alpha\)-fiber therefore makes the two line neighbors \(f_1,f_2\) of \(e\)
both \(\beta\)-colored.

If \(d(w)=2\), then \(f'_1\) is again triangular of type \((2,2)\),
whereas \(f_1\), having a degree-three endpoint, must be pendant by (A.3).
Thus \(d(a)=d(b)=1\), and connectedness gives \(K_{1,3}\) on the
pendant side and \(K_3\) on the triangular side.  If \(d(w)>2\),
uniqueness of \(F\) for the matched \(\beta\)-edges transports both their
endpoint degrees and their value \(t_e=1\).  Explicitly,
\(\{2,d(w)\}=\{3,d(a)\}\), so
\(d(w)=3,d(a)=2\), and the common-neighbor count gives one common neighbor
of \(c,a\), necessarily \(b\).  Thus \(ab\) is an edge, symmetrically
\(d(b)=2\), and \(p,c,a,b\) form a paw and exhaust their degrees.  For
\(g=ab\), the line vertices
\(e,g\) have the same open neighborhood, namely \(\{f_1,f_2\}\).  Swapping
these line-graph twins is an automorphism, so \(g\) also has color
\(\alpha\).  Thus \(f_1\) has two \(\alpha\)-neighbors.  Equitability at the
matched \(\beta\)-vertices \(f_1,f'_1\) forces \(f'_1\) to have two as well.
Because the \(\beta\)-fiber has endpoint type \(\{2,3\}\), it differs from
\(\alpha\); hence \(f'_2\) is not the second one, and the third edge \(h=wr\)
at \(w\) has color \(\alpha\).  Since \(d(w)=3\), (A.3) makes \(h\) pendant,
and the
other root is also a paw.  In the within-root comparison, the same exhausted
configurations are components.  Thus the only propagated collisions are the
Whitney pair and the paw, and both are excluded by the standing
hypotheses.
\end{proof}

\begin{lemma}[Central degree and wedge type]
\label{lem:bt-wedge-type}
Every adjacent basis relation fixes its central degree and whether its
wedges are open or closed, and these labels are transported by \(\varphi\).
\end{lemma}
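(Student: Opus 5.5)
The plan is to read off the central degree and the wedge type from intersection numbers of \(\mathcal A_{L(G)}\) that are constant on each adjacent basis relation. Those numbers must be combined with the endpoint-degree labels of Lemma~\ref{lem:bt-endpoint-degrees}. Because everything used is built from \(A_{L(G)}\), \(J\), \(I\), the diagonal projectors and ordinary/Hadamard products, it is transported by \(\varphi\), and cross-root transport comes for free.

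Fix an adjacent relation \(R_\rho\) with \((e,f)\in R_\rho\), \(e=uv\), \(f=uw\), and let \(\kappa=\one[vw\in E(G)]\). The common line neighbours of \(e,f\) are the \(d(u)-2\) other edges at \(u\), plus the closing edge \(g=vw\) when \(\kappa=1\). Hence the constant value of \(A_{L(G)}^2\) on \(R_\rho\) is
\[
 \ell_\rho=d(u)-2+\kappa .
\]
So \(\ell_\rho\) alone already determines the pair \((d(u),\kappa)\) up to the single trade-off ``open with centre degree \(k\)'' versus ``closed with centre degree \(k-1\)''. The rest of the proof breaks this trade-off.

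First I would use Lemma~\ref{lem:bt-endpoint-degrees} on the source and target fibres. Then \(\{d(u),d(v)\}\) and \(\{d(u),d(w)\}\) are fixed labels of \(\rho\). Hence \(d(u)\) lies in their intersection, which leaves at most two candidate centre degrees. When the intersection is a single value, \(d(u)\) is determined, and then so is \(\kappa=\ell_\rho-d(u)+2\).

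Second, for the remaining ambiguous cases I would use a finer transported count, namely the \((e,f)\)-entry of \((A_{L(G)}\circ A_{L(G)}^2)\,A_{L(G)}\). This entry equals \(\sum_{h\sim e,\,h\sim f}\lambda_e(h)\). For \(h=ux\) the summand is \(d(u)-2+\one[vx\in E]\), and for \(h=g\) it is \(d(v)-1\). The centre edges also form a clique of size \(d(u)-2\) inside the common neighbourhood, whereas \(g\) is adjacent to none of them. That clique structure is visible through the matrix
\[
 A_{L(G)}\circ\bigl(A_{L(G)}(A_{L(G)}\circ A_{L(G)}^2)\bigr),
\]
together with the values of \(A_{L(G)}^2\) on the relations \((e,h)\) and \((h,f)\). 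Comparing these quantities for the two candidate configurations \((k,\text{open})\) and \((k-1,\text{closed})\) should leave only finitely many small degree patterns in which all the counts coincide.

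The main obstacle is this residual small case analysis. By analogy with the proof of Lemma~\ref{lem:bt-endpoint-degrees}, the surviving collisions should occur only when \(d(u)\le 3\) and the outer vertices have degree at most \(3\). I would handle them as that proof does: use root automorphisms and line-graph twins to propagate colours, and use equitability of the fibres \(q_{\alpha\delta}\) to force the neighbouring configuration. The expectation is that the only connected roots realizing a genuine collision are \(K_3\), \(K_{1,3}\) and the diamond. In the diamond, a wedge centred at a degree-three vertex cannot be told apart from the ambiguous case by the local counts. All three roots are excluded by the standing hypotheses (Lemma~\ref{lem:bt-direct-roots}), which would complete the proof that \(\rho\) fixes both labels and that they are transported.
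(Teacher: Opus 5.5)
Your first step matches the paper's reduction. The identity \((A_{L(G)}^2)_{ef}=d(u)-2+\one[vw\in E(G)]\), together with the transported endpoint degrees of Lemma~\ref{lem:bt-endpoint-degrees}, leaves only one possible collision: an open wedge with centre degree \(k\) and outer degrees \(k-1\) against a closed wedge with centre degree \(k-1\) and outer degrees \(k\).

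Excluding that collision for every \(k\) is the whole content of the lemma, and your second step does not do it. The claim that the counts ``should leave only finitely many small degree patterns'' is asserted, not shown, and the invariants you name cannot show it.
\begin{itemize}
\item Let \(t\) be the number of further centre neighbours \(x\) with \(vx\in E(G)\). Your count \(((A_{L(G)}\circ A_{L(G)}^2)A_{L(G)})_{ef}\) is \((k-2)^2+t_{\mathrm{open}}\) versus \((k-3)^2+(k-1)+t_{\mathrm{closed}}\). These agree whenever \(t_{\mathrm{open}}=1\) and \(t_{\mathrm{closed}}=k-3\), for every \(k\ge3\).
\item Even the joint distribution of \((\lambda_e(h),\lambda_f(h))\) over common line neighbours \(h\), which is genuinely available from intersection numbers, does not help. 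It equals \([(k-2,k-2)]^{k-3}\uplus[(k-1,k-1)]\) in two cases: a closed wedge whose outer vertices are adjacent to all other centre neighbours, and an open wedge in which exactly one other centre neighbour is adjacent to both outer vertices.
\item The ``clique structure'' of \(N_{L(G)}(e)\cap N_{L(G)}(f)\) is a four-point count, not an intersection number of the stable configuration.
\item The matrix \(A_{L(G)}\circ(A_{L(G)}(A_{L(G)}\circ A_{L(G)}^2))\) only computes \(\sum_h\lambda_f(h)\), the target-side mirror of your first count.
\end{itemize}

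The missing idea is to feed the recovered degree fibres back into the algebra.
\begin{itemize}
\item The paper lets \(\Pi_{\mathrm{hh}}\) be the sum of the diagonal projectors of endpoint type \(\{k,k\}\), and puts \(M=A_{L(G)}\Pi_{\mathrm{hh}}\) and \(Q=MM^\top\).
\item On the closed wedge, \(Q_{ef}=1\), since the closing edge is the only high--high edge meeting both legs.
\item On the open wedge, \(Q_{ef}=Q_{ee}\), since every high--high edge meeting \(e\) sits at the high centre.
\item Constancy on the relation and on the source fibre then forces every involved degree-\(k\) vertex to have exactly one degree-\(k\) neighbour.
\item With \(T=A_{L(G)}\circ\one[Q=1]\), one gets \((T^2)_{ef}=k-3\) on the open side and \(0\) on the closed side, so \(k=3\).
\item Comparing \((T^2)_{ee}\) on the source fibre then closes the configuration into the diamond.
\end{itemize}

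Your anticipated list of exceptional roots is also off. \(K_3\) and \(K_{1,3}\) are not instances of this collision, because their edge types \(\{2,2\}\) and \(\{1,3\}\) are not of the form \(\{k-1,k\}\). The diamond is the only root that realizes it.
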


\begin{proof}
Take an adjacent pair
\(e=uv,f=uw\).  Its coherent common-neighbor number is
\[
 (A_{L(G)}^2)_{ef}=d(u)-2+\one[vw\in E(G)].
 \tag{A.4}
\]
If a relation were mixed in its central degree, (A.4) would force a closed
wedge of degree \(c\) and an open wedge of degree \(c+1\).
Lemma~\ref{lem:bt-endpoint-degrees} fixes the unordered endpoint degrees of
each source and target fiber.  For either leg, equality of those unordered
pairs reads
\(\{c,d(\text{closed outer})\}
 =\{c+1,d(\text{open outer})\}\).  Thus the closed outer endpoint has degree
\(c+1\) and the open outer endpoint degree \(c\); all four wedge legs have
endpoint type \(\{c,c+1\}\).

Let \(\Pi_{\mathrm{hh}}\) be the sum of the diagonal projectors of endpoint type
\(\{c+1,c+1\}\), and set
\[
 M=A_{L(G)}\Pi_{\mathrm{hh}},\qquad Q=MM^\top.
\]
Here \(Q_{ef}\) counts the high--high edges adjacent in \(L(G)\) to both
\(e\) and \(f\).  In the closed realization the only such common neighbor is
the closing high--high edge, so \(Q_{ef}=1\).  In the open realization the
center is the common high endpoint of \(e,f\), so every high--high edge
adjacent to \(e\) is also adjacent to \(f\), and \(Q_{ef}=Q_{ee}\).
Constancy on the common basis relation first equates this open value to the
closed value one; constancy and transport on the source and target diagonal
fibers then equate the corresponding diagonal values.  Since \(Q_{ee}\)
counts high--high edges at the high endpoint of \(e\), every involved
degree-\((c+1)\) endpoint consequently has exactly one high-degree neighbor.
Let \(U=\one[Q=1]\) and put
\(T=A_{L(G)}\circ U\).  The two realizations now have the following data:
\[
\begin{array}{c|c|c|c}
 &\text{center degree}&\text{outer degrees}&(T^2)_{ef}\\ \hline
 \text{open}&c+1&c,c&c-2\\
 \text{closed}&c&c+1,c+1&0
\end{array}
\]
Indeed, in the open realization the unique high--high edge at the center is
not a \(T\)-neighbor of itself, whereas each of the other \(c-2\) center
edges is a common \(T\)-neighbor of \(e,f\).  In the closed realization the
center is low, and each high outer endpoint has only the closing high--high
edge; the closing edge itself is not a \(T\)-neighbor, so there is no common
\(T\)-neighbor.
Constancy of \(T^2\) on the common relation gives \(c-2=0\), hence \(c=2\).
A source edge in the closed realization then has
two \(T\)-neighbors: the other wedge leg and the remaining edge at its
degree-three outer endpoint.  In the open realization it has the other
wedge leg, plus a second \(T\)-neighbor exactly when its degree-two outer
endpoint is adjacent to the center's unique high-degree neighbor.  Since
\((T^2)_{ee}\) is the \(T\)-degree of \(e\), equality of the corresponding
source diagonal entries forces this edge, and the target calculation
forces its analogue at the other outer vertex.  The four vertices form a
diamond and exhaust their degrees, so connectedness makes the root the
diamond, contrary to the hypothesis.
\end{proof}

For an edge fiber \(\mathcal E_\alpha\), collect the fiber valencies into
\[
 \mathbf q_\alpha=(q_{\alpha\beta})_\beta+2\mathbf e_\alpha.
\]
The two endpoint wings of every \(\alpha\)-edge \(uv\), together with the
edge itself once at each endpoint, give
\[
 \mathbf s(u)+\mathbf s(v)=\mathbf q_\alpha.
 \tag{A.5}
\]
For an adjacent basis relation \(R_\gamma\) with source \(\mathcal E_\alpha\),
target \(\mathcal E_\beta\), and representative \(e=uv\), \(f=uw\), define
the vectors
\[
 h^\star_\gamma(\delta)=(A_{L(G)}D_\delta A_{L(G)})_{ef},
 \qquad
 h^{\mathrm d}_\gamma(\rho)=(R_\rho A_{L(G)})_{ef},
\]
indexed by the diagonal fibers \(\delta\) and by the adjacent basis
relations \(\rho\), respectively.  The first counts the common line
neighbors of \(e,f\) lying in \(\mathcal E_\delta\); the second counts the
common line neighbors \(g\) with \((e,g)\in R_\rho\).  Both are sums of
intersection numbers, hence constant on \(R_\gamma\) and transported.

\paragraph{Profile-difference observation.}
We use the following observation twice.  For a closed adjacent relation
\(R_\gamma\), suppose each representative has a profile of the form
\(v_\gamma-\mathbf e_{\ell}\), where
the vector \(v_\gamma\) is transported and depends only on \(\gamma\), and
the label \(\ell\) is carried by the closing edge of the representative.
If such a relation is mixed between profiles \(\mathbf x\ne\mathbf y\)
whose representatives carry closing labels \(\ell_{\mathbf x}\) and
\(\ell_{\mathbf y}\), subtraction gives
\[
 \mathbf y-\mathbf x
 =\mathbf e_{\ell_{\mathbf x}}-\mathbf e_{\ell_{\mathbf y}},
\]
so \(\ell_{\mathbf x}\ne\ell_{\mathbf y}\), and the two profiles differ, by
one, in exactly the two coordinates \(\ell_{\mathbf x},\ell_{\mathbf y}\).
If moreover the fixed unordered endpoint-profile pair of the source fiber
is transported and contains the central profile of every representative,
then that pair equals \(\{\mathbf x,\mathbf y\}\) in each root, and
\(\mathbf x,\mathbf y\) are the only profiles realized on \(R_\gamma\).

\begin{lemma}[Endpoint star profiles]
\label{lem:bt-star-endpoint}
Every diagonal fiber of \(\mathcal A_{L(G)}\) fixes its unordered endpoint
star profiles, and every open adjacent basis relation fixes its central
star profile.  These labels are transported by \(\varphi\).
\end{lemma}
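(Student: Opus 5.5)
The plan is to prove the open-relation claim first and then use it, together with (A.5), to pin down the unordered endpoint profiles of each diagonal fiber.

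\emph{Open relations.} Let \(R_\gamma\) be an open adjacent basis relation with source \(\mathcal E_\alpha\) and target \(\mathcal E_\beta\), and take a representative \(e=uv\), \(f=uw\). By Lemma~\ref{lem:bt-wedge-type}, the wedge type is constant and transported, so \(vw\notin E(G)\) for every representative. Then every common line neighbour of \(e,f\) is an edge at the center \(u\) other than \(e,f\). An edge avoiding \(u\) would have to contain both \(v\) and \(w\), which is impossible. Hence
\[
 \mathbf s(u)=h^\star_\gamma+\mathbf e_\alpha+\mathbf e_\beta ,
\]
where \(h^\star_\gamma(\delta)=(A_{L(G)}D_\delta A_{L(G)})_{ef}\). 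The right side is a sum of intersection numbers, so it is constant on \(R_\gamma\) and transported by \(\varphi\). This gives the second claim directly.

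\emph{Diagonal fibers.} Fix \(\mathcal E_\alpha\). By (A.5), it suffices to recover the unordered pair \(\{\mathbf s(u),\mathbf s(v)\}\) as a transported function of \(\alpha\). By Lemmas~\ref{lem:bt-line-degree} and~\ref{lem:bt-endpoint-degrees}, the triple \((\delta_u,\delta_v,t_e)\) with \(\delta_u\le\delta_v\) is constant on the fiber and transported. The collision (A.3) is excluded under the standing hypotheses. At the endpoint \(u\), exactly \(t_e\) wedges with \(e\) are closed and \(\delta_u-t_e\) are open; the same holds at \(v\). Consider the open adjacent relations \(\rho\) leaving \(\alpha\), each with out-valency \(\nu_\rho\) and its central profile \(\mathbf c_\rho\) from the first part. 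The weighted multiset \(\biguplus_{\rho\ \mathrm{open}}[\mathbf c_\rho]^{\nu_\rho}\) is transported, and for every representative it equals
\[
 [\mathbf s(u)]^{\delta_u-t_e}\uplus[\mathbf s(v)]^{\delta_v-t_e}.
\]
The case split is then as follows.
\begin{itemize}
\item If \(\delta_u=0\), then \(\mathbf s(u)=\mathbf e_\alpha\) and \(\mathbf s(v)=\mathbf q_\alpha-\mathbf e_\alpha\).
\item If \(\delta_u>t_e\), both endpoints occur in the multiset. When \(\delta_u<\delta_v\) the multiplicities identify which profile is which; when \(\delta_u=\delta_v\) the support, with one or two values, still gives the unordered pair. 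In either case (A.5) serves as a consistency check.
\item If \(\delta_u=t_e<\delta_v\), the multiset consists of \(\delta_v-t_e\) copies of \(\mathbf s(v)\), and then \(\mathbf s(u)=\mathbf q_\alpha-\mathbf s(v)\).
\item If \(\delta_u=\delta_v=t_e\), every wedge at either endpoint is closed, so \(N[u]=N[v]\). The transposition of \(u,v\) is then a root automorphism fixing \(e\). It maps each edge \(uw\) to \(vw\), and by automorphism invariance of stable colors it preserves fibers. Hence \(\mathbf s(u)=\mathbf s(v)=\tfrac12\mathbf q_\alpha\).
\end{itemize}
Since the case is determined by transported data and each output is built from transported quantities, the unordered pair is constant on \(\mathcal E_\alpha\) and transported.

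\emph{Main obstacle.} The delicate point is making sure no case needs the central profile of a closed relation. There the closing edge is an extra common neighbour whose fiber is not canonically selected, which is the situation the profile-difference observation handles later. The case analysis above avoids this because open wedges or the twin automorphism always suffice. The remaining care is to check that the within-root constancy and the cross-root transport of \((\delta_u,\delta_v,t_e)\) really exclude every mixed configuration, which is exactly what the earlier exclusion of the paw, the diamond, and the Whitney pair provides.
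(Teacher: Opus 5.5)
Your proof is correct. It uses the same three ingredients as the paper: (A.6) for open relations, (A.5) to complete the endpoint pair, and the true-twin swap when no open wedge exists. What differs is how you organize the diagonal-fiber case.

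The paper needs no case split on \((\delta_u,\delta_v,t_e)\). Every nonempty relation has positive constant out-valency, so a single open \(R_\gamma\) leaving \(\mathcal E_\alpha\) already gives every \(\alpha\)-edge an endpoint of profile \(\mathbf e_\alpha+\mathbf e_\beta+h^\star_\gamma\). (A.5) then gives the other endpoint. Whether such an open relation exists is itself transported, by Lemma~\ref{lem:bt-wedge-type}. The only remaining possibility, that no open relation leaves \(\alpha\), is exactly your case \(\delta_u=\delta_v=t_e\). Your other cases are therefore subsumed: any single element of your weighted multiset \(\biguplus_\rho[\mathbf c_\rho]^{\nu_\rho}\), paired with its (A.5) complement, already yields the unordered pair.

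Your route costs an extra appeal to Lemmas~\ref{lem:bt-line-degree} and~\ref{lem:bt-endpoint-degrees} to make the triple constant and transported. In return it only adds the degree-based assignment of the two profiles when \(\delta_u<\delta_v\), which the statement does not require. One phrasing point: ``the collision (A.3) is excluded'' is loose. Edges realizing either side of (A.3) do occur in admissible roots. What Lemma~\ref{lem:bt-endpoint-degrees} excludes is a transported fiber mixed across the collision, and that is all your argument uses.
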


\begin{proof}
Let \(R_\gamma\) be an open adjacent relation with source fiber
\(\mathcal E_\alpha\) and target fiber \(\mathcal E_\beta\).  The common
line neighbors of an open wedge \(e=uv,f=uw\) are exactly the other edges
at \(u\), so
\[
 \mathbf s(u)=\mathbf e_\alpha+\mathbf e_\beta+h^\star_\gamma.
 \tag{A.6}
\]
Every nonempty relation has positive constant out-valency on its source
fiber.  Thus, if an open relation leaves \(\mathcal E_\alpha\), every
\(\alpha\)-edge is the source of a pair in that relation; (A.6) supplies one
endpoint profile and (A.5) supplies the other.  If
none leaves \(\mathcal E_\alpha\), then for each \(\alpha\)-edge \(uv\), every
neighbor of either endpoint other than the other endpoint is adjacent to
both.  Here write \(N[x]=N(x)\cup\{x\}\) for the closed neighborhood.
Swapping the adjacent true twins \(u,v\), whose closed neighborhoods
coincide, is a root automorphism; its induced
line-graph automorphism fixes \(e\) and interchanges its two wings.  Stable
colors are automorphism invariant, so the profiles agree coordinatewise and
equal \(\mathbf q_\alpha/2\).  This proves the endpoint-profile
assertion, while (A.6) proves the open central-profile assertion.
\end{proof}

\begin{lemma}[Central star profiles and class sizes]
\label{lem:bt-star-profiles}
Every adjacent basis relation fixes its central star profile, and every
star-profile class size is determined and transported.
\end{lemma}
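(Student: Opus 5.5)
Lemma~\ref{lem:bt-star-endpoint} already gives the open case, so the remaining work is a closed adjacent relation \(R_\gamma\) with source \(\mathcal E_\alpha\), target \(\mathcal E_\beta\), and representative \(e=uv\), \(f=uw\) with closing edge \(g=vw\).

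\textbf{Closed relations.} For such a wedge, the common line neighbours of \(e,f\) are the other edges at \(u\) together with \(g\). This gives
\[
 \mathbf s(u)=\mathbf e_\alpha+\mathbf e_\beta+h^\star_\gamma-\mathbf e_{\ell},
\]
where \(\ell\) is the diagonal fiber of \(g\). The profile therefore has exactly the form \(v_\gamma-\mathbf e_\ell\) required by the profile-difference observation, with \(v_\gamma=\mathbf e_\alpha+\mathbf e_\beta+h^\star_\gamma\) transported. The central vertex \(u\) is an endpoint of the source edge \(e\). By Lemma~\ref{lem:bt-star-endpoint}, its profile lies in the transported unordered endpoint-profile pair \(\{\mathbf p,\mathbf p'\}\) of \(\mathcal E_\alpha\). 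Now suppose \(R_\gamma\) is mixed. The observation then forces \(\{\mathbf x,\mathbf y\}=\{\mathbf p,\mathbf p'\}\) in both roots, with \(\mathbf p'-\mathbf p=\mathbf e_{\ell_{\mathbf p}}-\mathbf e_{\ell_{\mathbf p'}}\).

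I would rule this out with the dart data \(h^{\mathrm d}_\gamma\), which refines \(h^\star_\gamma\) by relation rather than by fiber. The common neighbour \(g\) stands in a specific relation \(\rho_g\) to \(e\): it is adjacent at the outer endpoint \(v\), and by Lemma~\ref{lem:bt-wedge-type} its wedge type and central degree \(d(v)\) are fixed. The other common neighbours stand in relations centred at \(u\). Consider the two realizations, with centre profiles \(\mathbf x\) and \(\mathbf y\). Their outer endpoints \(v\) carry the complementary profiles \(\mathbf q_\alpha-\mathbf x\) and \(\mathbf q_\alpha-\mathbf y\) by (A.5). From this I would derive a contradiction: constancy of \(h^{\mathrm d}_\gamma(\rho)\) over relations \(\rho\) leaving \(\alpha\) with target fiber \(\ell_{\mathbf x}\) or \(\ell_{\mathbf y}\) should force the closing edge into the same target fiber in both realizations. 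The intended mechanism is that the relation from \(e\) to \(g\) encodes, through its central degree and its dart profile at \(v\), which endpoint of \(e\) is the centre of \(g\).

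\textbf{Main obstacle.} The hard case is when the centre degrees and endpoint-degree data coincide, so that the two endpoint profiles differ only in two coordinates. If the \(h^{\mathrm d}\) argument fails there, I would fall back on the swap-automorphism argument of Lemma~\ref{lem:bt-star-endpoint}. That argument shows that when both ends of \(e\) are symmetric the profiles agree, which contradicts \(\mathbf x\ne\mathbf y\). Otherwise I would use a Hadamard selector \(\one[\,\cdot\,=a]\) on \(A_{L(G)}D_{\ell}A_{L(G)}\) to separate the two realizations inside \(\mathcal A_{L(G)}\).

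\textbf{Class sizes.} Once every adjacent relation fixes its central profile, the star-profile class size \(|\{u:\mathbf s(u)=\mathbf p\}|\) is obtained by counting darts. Each vertex \(u\) of profile \(\mathbf p\) with \(s_\alpha(u)>0\) contributes exactly \(s_\alpha(u)=p_\alpha\) darts \((u,e)\) with \(e\in\mathcal E_\alpha\). So the class size is the number of \(\alpha\)-darts whose vertex has profile \(\mathbf p\), divided by \(p_\alpha\). The numerator is a transported sum over the endpoint profiles of the fiber \(\mathcal E_\alpha\): each \(\alpha\)-edge has endpoint-profile pair \(\{\mathbf p_1,\mathbf p_2\}\) and contributes \(\one[\mathbf p_1=\mathbf p]+\one[\mathbf p_2=\mathbf p]\), weighted by \(|\mathcal E_\alpha|=\operatorname{tr}D_\alpha\). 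Since the roots have order at least three and are connected, every vertex has positive degree, so some \(p_\alpha>0\). Choosing such an \(\alpha\), whose selection is transported, completes the transport of class sizes.
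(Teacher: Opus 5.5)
Your reduction to a closed mixed relation matches the paper: open relations via Lemma~\ref{lem:bt-star-endpoint}, the listing $\mathbf s(u)=\mathbf e_\alpha+\mathbf e_\beta+h^\star_\gamma-\mathbf e_\ell$, and the profile-difference observation. Your class-size count, using a fiber with $p_\alpha>0$, its transported endpoint pair, and $\operatorname{tr}D_\alpha$, is essentially the paper's formula. The gap is the step that actually excludes a mixed closed relation. You describe it only as an intended mechanism, and none of your three routes works as stated.

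(i) The observation already gives $\mathbf y-\mathbf x=\mathbf e_{\ell_{\mathbf x}}-\mathbf e_{\ell_{\mathbf y}}$, so both centres have the same degree. Your ``hard case'' is therefore the only case, and central degrees from Lemma~\ref{lem:bt-wedge-type} cannot tell which end of $e$ the closing edge is centred at. Summing $h^{\mathrm d}_\gamma$ over relations with a fixed target fiber just returns $h^\star_\gamma$, which the mixed configuration already satisfies. Relation by relation, take representatives $e=uv,f=uw$ and $e'=u'v',f'=u'w'$ whose closing legs lie in $R_\rho$ and $R_\sigma$; you only get $\mathbf d(u',e')-\mathbf d(u,e)=\mathbf e_\rho-\mathbf e_\sigma$. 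Reading off from $R_\rho$ which end of $e$ is its centre is the very statement being proved, applied to $\rho$. Moreover, the fixed source dart of Lemma~\ref{lem:bt-dart-profiles} is derived from the present lemma in exactly this unequal-star-profile situation, so invoking it is circular.

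(ii) The swap automorphism exists only when the endpoints of $e$ are adjacent true twins. Here they carry different star profiles, which are automorphism invariant, so it never applies.

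(iii) Every matrix in $\mathcal A_{L(G)}$ is constant on $R_\gamma$, including any selector $\one[A_{L(G)}D_\ell A_{L(G)}=a]$. Also, $(A_{L(G)}D_\ell A_{L(G)})_{ef}=h^\star_\gamma(\ell)$ is precisely the datum you already used. A contradiction requires an algebra matrix whose entries you can compute combinatorially in both realizations and show to differ.

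The missing idea is the structure that makes such a computation possible. Every source or target edge of $\gamma$ joins the profile classes $C_{\mathbf x}$ and $C_{\mathbf y}$. Hence the closing fiber $\chi$ of an $\mathbf x$-centred wedge has endpoint pair $\{\mathbf y,\mathbf y\}$, and symmetrically $\xi$ has $\{\mathbf x,\mathbf x\}$. With (A.7) this gives $(x_\chi,y_\chi)=(0,1)$ and $(x_\xi,y_\xi)=(1,0)$, so $\xi$ and $\chi$ are perfect matchings on $C_{\mathbf x}$ and $C_{\mathbf y}$.

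Each $\gamma$-pair has a unique common $\xi$-edge: the matching edge at an $\mathbf x$-centre, and the closing edge at a $\mathbf y$-centre. This gives $p^{\gamma}_{\eta\theta}=1$, and wedge-type transport then closes a $K_4$. The equalities $x_\delta=y_\delta$ for $\delta\notin\{\xi,\chi\}$, together with connectedness, force $V=C_{\mathbf x}\cup C_{\mathbf y}$ with a $\kappa$-regular bipartite remainder.

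Only then does $W_\xi=A_{L(G)}\circ(A_{L(G)}D_\xi A_{L(G)})$ give $(W_\xi^2)_{ef}=\kappa-2$ on the $\mathbf x$-centred pair and $0$ on the $\mathbf y$-centred pair. This contradicts constancy for $\kappa\ge3$. The case $\kappa=2$ makes the root $K_4$, which edge transitivity excludes. Without an argument of this kind, the central-profile half of the lemma remains unproved.
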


\begin{proof}
By Lemma~\ref{lem:bt-star-endpoint}, only a closed relation
\(R_\gamma\), with source \(\mathcal E_\alpha\) and target
\(\mathcal E_\beta\), could be mixed between central profiles
\(\mathbf x\ne\mathbf y\).  If its \(\mathbf x\)-centered representative
has closing-edge fiber \(\chi\), direct listing gives
\[
 h^\star_\gamma
 =\mathbf x-\mathbf e_\alpha-\mathbf e_\beta+\mathbf e_\chi,
\]
so the central profile is
\((\mathbf e_\alpha+\mathbf e_\beta+h^\star_\gamma)-\mathbf e_\chi\), of
the form required by the profile-difference observation with the closing-edge fiber
as label.  With \(\xi\) the closing fiber of a \(\mathbf y\)-centered
representative, the principle gives
\[
 \mathbf y-\mathbf x=\mathbf e_\chi-\mathbf e_\xi,
 \tag{A.7}
\]
and, since Lemma~\ref{lem:bt-star-endpoint} fixes and transports the
unordered endpoint pair of the source and target fibers, that pair equals
\(\{\mathbf x,\mathbf y\}\) and no other central profile occurs on
\(R_\gamma\).
For \(K\in\{G,H\}\), let
\[
 C_{\mathbf x}^K=\{u\in V(K):\mathbf s_K(u)=\mathbf x\},
 \qquad
 C_{\mathbf y}^K=\{u\in V(K):\mathbf s_K(u)=\mathbf y\},
\]
where \(\mathbf s_K\) denotes the star profile in \(K\).  We suppress the
root superscript inside a one-root calculation.  Since both central labels
occur in the source and target fibers, Lemma~\ref{lem:bt-star-endpoint}
makes every source or target edge of \(\gamma\) join \(C_{\mathbf x}\) to
\(C_{\mathbf y}\).  In the \(\mathbf x\)-centered realization the
\(\chi\)-edge
joins the two outer \(C_{\mathbf y}\)-vertices, so the fixed endpoint pair of
fiber \(\chi\) is \(\{\mathbf y,\mathbf y\}\).  Since
\(\mathbf x\ne\mathbf y\), (A.7) gives \(\xi\ne\chi\); moreover
\(x_\chi=0\), and its \(\chi\)-coordinate gives \(y_\chi=1\).
Consequently \((x_\chi,y_\chi)=(0,1)\).  Symmetrically the \(\xi\)-fiber
joins \(C_{\mathbf x}\) to itself and
\((x_\xi,y_\xi)=(1,0)\).  Hence the \(\xi\)- and
\(\chi\)-edges form perfect matchings on \(C_{\mathbf x}\) and
\(C_{\mathbf y}\), respectively.

Take a \(\mathbf x\)-centered representative
\(e=uv\in\mathcal E_\alpha\),
\(f=uw\in\mathcal E_\beta\), closed by
\(vw\in\mathcal E_\chi\), and let
\(u\bar u\in\mathcal E_\xi\) be the matching edge at \(u\).  For an
\(\mathbf x\)-centered \(\gamma\)-pair, this matching edge is its unique
common \(\xi\)-edge; for a \(\mathbf y\)-centered pair, the closing edge is
the unique common \(\xi\)-edge.  In the latter realization,
let \(R_\eta,R_\theta\) be the two relations through that witness.  Their
middle fiber is the \(\xi\)-fiber, so every
\((\eta,\theta)\)-intermediate is a common \(\xi\)-edge.  Therefore
\(p_{\eta\theta}^{\gamma}=1\), while every other decomposition through the
\(\xi\)-fiber has multiplicity zero.
Transport makes \(u\bar u\) the unique witness with
\((e,u\bar u)\in R_\eta\) and
\((u\bar u,f)\in R_\theta\).  In the \(\mathbf y\)-centered realization the
witness is the closing edge, so both leg relations are closed.  Wedge-type
transport (Lemma~\ref{lem:bt-wedge-type}) now supplies
\(\bar u v,\bar u w\), producing a \(K_4\) with
opposite matching-edge fibers \(\xi,\chi\).

For every edge fiber \(\delta\notin\{\xi,\chi\}\), (A.7) gives
\(x_\delta=y_\delta\).  If this common coordinate is positive, the fixed
endpoint pair of \(\delta\) contains both profiles and therefore is
\(\{\mathbf x,\mathbf y\}\); if it is zero, no \(\delta\)-edge meets either
class.  Together
with the matchings, this shows that
\(C_{\mathbf x}\cup C_{\mathbf y}\) has no edge to a third profile class.
Connectedness yields
\(V(G)=C_{\mathbf x}^G\mathbin{\dot\cup}C_{\mathbf y}^G\), and analogously
for \(H\).  After deleting the two internal matchings, the remaining bipartite
graph is \(\kappa\)-regular, where
\(\kappa=\sum_{\delta\notin\{\xi,\chi\}}x_\delta
   =\sum_{\delta\notin\{\xi,\chi\}}y_\delta\).

Let \(D_\xi\) be the diagonal projector onto the \(\xi\)-matching edges and
define
\[
 M_\xi=A_{L(G)}D_\xi A_{L(G)},
 \qquad
 W_\xi=A_{L(G)}\circ M_\xi.
\]
Since \(D_\xi\) and \(A_{L(G)}\) are transported, so are
\(M_\xi,W_\xi\), and \(W_\xi^2\), and all are constant on basis relations.
For an adjacent pair, a common \(\xi\)-edge is either the unique matching
edge at an \(\mathbf x\)-center or the closing matching edge at an
\(\mathbf y\)-center; simplicity prevents both alternatives from occurring
at once.  Hence \(W_\xi\) is \(0\)-\(1\).  The \(W_\xi\)-subgraph indexed by
a fixed \(\xi\)-edge \(z\bar z\) has as its vertices the cross edges incident
with \(z\) or \(\bar z\).  Two are \(W_\xi\)-adjacent exactly when they share
the same \(\mathbf x\)-endpoint, or when they share a
\(\mathbf y\)-endpoint across the \(\xi\)-matching.  Thus it consists of two
\(\kappa\)-cliques, together with the possibly partial cross-matching
\(zy\leftrightarrow\bar zy\).  Two distinct vertices in one clique have
exactly the other \(\kappa-2\) clique vertices as common \(W_\xi\)-neighbors,
whereas a cross-match pair has no common \(W_\xi\)-neighbor.  Therefore
\[
 (W_\xi^2)_{ef}=
 \begin{cases}
 \kappa-2,&e\ne f\text{ lie in the same clique},\\
 0,&e,f\text{ form a cross-match pair}.
 \end{cases}
 \tag{A.8}
\]
The preceding construction makes the \(\mathbf x\)-centered pair a
same-clique pair and the \(\mathbf y\)-centered pair a cross match.  Here
\(\kappa\ge2\).  If \(\kappa=2\), every vertex of the resulting \(K_4\) has no
additional incident edge, so connectedness implies that the root is \(K_4\).
Edge transitivity of \(K_4\) then puts all its edges in one
diagonal line-algebra fiber, so every vertex has the same star profile and
\(\mathbf x=\mathbf y\).  Thus \(\kappa\ge3\), and
the values in
(A.8) differ, contradicting constancy or transport of \(W_\xi^2\) on
\(R_\gamma\).  This proves the central-profile assertion; it remains to
compute the class sizes.

On the \(G\)-side, for an incident fiber \(\mathcal E_\alpha\),
\[
 |C_{\mathbf x}^G|x_\alpha=
 \begin{cases}
 |\mathcal E_\alpha|,&\mathcal E_\alpha\text{ joins }\mathbf x
   \text{ to a distinct profile},\\
 2|\mathcal E_\alpha|,&\mathcal E_\alpha\text{ joins }\mathbf x
   \text{ to itself}.
 \end{cases}
\]
Every star-profile class has such a fiber with \(x_\alpha>0\), because the
connected nontrivial roots under consideration have no isolated vertices.
Since
\(|\mathcal E_\alpha|=\operatorname{tr}(D_\alpha)\) is transported, these formulas
recover and transport every profile-class size.
\end{proof}

\begin{lemma}[Endpoint dart profiles]
\label{lem:bt-dart-endpoint}
Every diagonal fiber of \(\mathcal A_{L(G)}\) fixes its unordered pair of
endpoint dart profiles, and every open adjacent relation \(R_\gamma\)
fixes its source dart, namely
\(\mathbf d(e\cap f,e)=\mathbf e_\gamma+h^{\mathrm d}_\gamma\).  These
labels are transported by \(\varphi\).
\end{lemma}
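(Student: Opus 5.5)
I will follow the template of Lemmas~\ref{lem:bt-star-endpoint} and~\ref{lem:bt-star-profiles}, with the vector \(h^{\mathrm d}_\gamma\) playing the role of \(h^\star_\gamma\).

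\emph{Open relations.} First I treat the open adjacent relations. For an open wedge \(e=uv\), \(f=uw\), the common line neighbours of \(e,f\) are exactly the edges at \(u\) other than \(e\) and \(f\). Hence \(h^{\mathrm d}_\gamma(\rho)=(R_\rho A_{L(G)})_{ef}\) counts precisely the edges \(g\ne e,f\) at \(u\) with \((e,g)\in R_\rho\). Adding the contribution of \(f\) itself gives
\[
 \mathbf d(u,e)=\mathbf e_\gamma+h^{\mathrm d}_\gamma .
\]
The vector \(h^{\mathrm d}_\gamma\) consists of sums of intersection numbers. It is therefore constant on \(R_\gamma\) and transported by \(\varphi\), which proves the open-relation assertion.

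\emph{Diagonal fibres.} Next take a fibre \(\mathcal E_\alpha\), and first suppose some open relation \(R_\gamma\) leaves it. Its out-valency is positive, so every \(\alpha\)-edge \(e=uv\) is the source of a \(\gamma\)-pair. That pair supplies the dart \(\mathbf d(u,e)\) by the formula above.

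For the other endpoint I need an analogue of (A.5). The whole neighbourhood of \(e\) in \(L(G)\) is split by relation into the two wings, so
\[
 \mathbf d(u,e)+\mathbf d(v,e)=\bigl(\nu_\rho\bigr)_{\rho\ \mathrm{adjacent},\ \mathrm{src}(\rho)=\alpha}.
\]
The right-hand side is transported intersection-number data. Thus \(\mathbf d(v,e)\) is the transported vector minus \(\mathbf e_\gamma+h^{\mathrm d}_\gamma\), and the unordered pair of dart profiles is fixed on the fibre.

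If \(\alpha\)-edges lead to open wedges through several different open relations, each choice yields the same unordered pair, because the \(\gamma\)-centre may be either endpoint. I expect the choice to be harmless: for every representative the resulting pair is \(\{\mathbf d(u,e),\mathbf d(v,e)\}\), so constancy is automatic once one open relation is fixed.

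\emph{No open relation leaves the fibre.} Then every wedge at either endpoint of an \(\alpha\)-edge \(uv\) is closed through the opposite endpoint, so \(N[u]=N[v]\). As in Lemma~\ref{lem:bt-star-endpoint}, swapping the true twins \(u,v\) is a root automorphism fixing \(e\) and interchanging its wings. Stable colours are automorphism invariant, so \(\mathbf d(u,e)=\mathbf d(v,e)\), and each equals half of the transported valency vector above. This settles the diagonal assertion in all cases.

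\emph{Main obstacle.} I expect the hardest step to be justifying that the endpoint produced by an open relation is consistently identified in both roots. In particular, when \(\alpha\)-edges have open wedges at both endpoints, I must check that the two open relations leaving \(\alpha\) are not confused with each other.

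My plan is to avoid this issue by reading everything off a single fixed open relation \(\gamma\) and using only the unordered pair. Every \(\alpha\)-edge has at least one endpoint centring a \(\gamma\)-wedge. The sum identity then determines the other dart, whether or not that endpoint also centres open wedges, and so the pair is well defined.
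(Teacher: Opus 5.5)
Your proposal is correct and matches the paper's proof: the open-wedge listing gives \(\mathbf d(e\cap f,e)=\mathbf e_\gamma+h^{\mathrm d}_\gamma\), the two wings split the out-valencies to give the other endpoint dart, and the case with no open relation leaving the fiber is handled by the true-twin automorphism of Lemma~\ref{lem:bt-star-endpoint}. One point deserves an explicit citation: the same case split and formula apply on the \(H\) side only because openness of an adjacent relation is constant within each root and transported, which is Lemma~\ref{lem:bt-wedge-type}.
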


\begin{proof}
For a source fiber \(\mathcal E_\alpha\), every line neighbor
of \(e=uv\) shares exactly one endpoint with \(e\), so the two wings split
the out-valencies:
\[
 \mathbf d(u,e)+\mathbf d(v,e)=(\nu_\rho)_\rho
 \qquad(e=uv\in\mathcal E_\alpha).
 \tag{A.9}
\]
For an open relation \(\gamma\), the common line neighbors \(g\) of the
wedge \(e=uv,f=uw\) with \((e,g)\in R_\rho\) are exactly the
\(\rho\)-colored edges of the \(u\)-wing other than \(f\), so the open-star
listing gives
\[
 \mathbf d(e\cap f,e)=\mathbf e_\gamma+h^{\mathrm d}_\gamma.
 \tag{A.10}
\]
If an open relation leaves \(\mathcal E_\alpha\), (A.10)
fixes one endpoint dart, and (A.9) fixes the other.  If no open relation
leaves, the true-twin automorphism of
Lemma~\ref{lem:bt-star-endpoint} swaps the two
wings and makes the profiles equal.  This proves the diagonal assertion,
and (A.10) is the open source-dart assertion.
\end{proof}

\begin{lemma}[Source and target dart profiles]
\label{lem:bt-dart-profiles}
For a representative \(e=uv,f=uw\) of an adjacent basis relation,
the relation fixes the source dart \(\mathbf d(u,e)\) and the target dart
\(\mathbf d(u,f)\), and these labels are transported by \(\varphi\).
\end{lemma}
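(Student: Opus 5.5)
The plan is to treat the source dart \(\mathbf d(u,e)\) first and get the target dart from the transpose. If \(R_\gamma\) is adjacent with representative \(e=uv,\ f=uw\), then \(R_{\gamma^\top}\) is adjacent with representative \((f,e)\) and the same center \(u\). So \(\mathbf d(u,f)\) is the source dart of \(\gamma^\top\). Because \(\varphi\) commutes with transpose, transport of source darts for every adjacent relation gives transport of target darts. It therefore suffices to prove the source-dart assertion.

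\textbf{Open relations.} If \(R_\gamma\) is open, Lemma~\ref{lem:bt-dart-endpoint} already gives the formula \(\mathbf d(u,e)=\mathbf e_\gamma+h^{\mathrm d}_\gamma\) of (A.10), so there is nothing to prove. By Lemma~\ref{lem:bt-wedge-type}, openness versus closedness is itself a transported label of \(\gamma\). Hence only closed relations remain.

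\textbf{Closed relations: setup.} For a closed wedge with closing edge \(vw\), the common line neighbours of \(e\) and \(f\) are the \(u\)-wing edges other than \(f\), together with \(vw\). Listing them gives
\[
 \mathbf d(u,e)=\bigl(\mathbf e_\gamma+h^{\mathrm d}_\gamma\bigr)-\mathbf e_{\ell},
\]
where \(\ell\) is the basis relation containing \((e,vw)\). This has exactly the form required by the profile-difference observation, with the label \(\ell\) carried by the closing edge. So if \(\gamma\) were mixed between darts \(\mathbf x\ne\mathbf y\), the two would differ by one in exactly two coordinates \(\ell_{\mathbf x}\ne\ell_{\mathbf y}\). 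Moreover, by Lemma~\ref{lem:bt-dart-endpoint} the source fiber's unordered endpoint-dart pair is fixed and transported, so it would equal \(\{\mathbf x,\mathbf y\}\) in both roots.

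\textbf{Closed relations: excluding mixing.} I would first try to separate the closing edge from the \(u\)-wing inside \(h^{\mathrm d}_\gamma\) by a transported selector. For a common neighbour \(g\), the pair \((g,f)\) is a wedge centered at \(u\) if \(g\) lies in the \(u\)-wing, and centered at \(w\) if \(g=vw\). By Lemma~\ref{lem:bt-star-profiles}, every adjacent basis relation fixes its central star profile, and by Lemma~\ref{lem:bt-wedge-type} it fixes its central degree. Also, \(\mathbf s(u)\) is fixed by \(\gamma\), and \(\mathbf s(w)\) is fixed by (A.5) applied to the target fiber. So whenever \(\mathbf s(u)\ne\mathbf s(w)\), I would restrict \(R_\rho A_{L(G)}\) to target relations with central profile \(\mathbf s(u)\); this is done by Hadamard-multiplying \(A_{L(G)}\) with the indicator of the union of such relations. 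That recovers \(\mathbf d(u,e)-\mathbf e_\gamma\) directly, transported.

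\textbf{Main obstacle.} The hard case is \(\mathbf s(u)=\mathbf s(w)\) (and symmetrically \(\mathbf s(u)=\mathbf s(v)\)). Here I would follow the template of the star-profile proof. The two closing labels \(\ell_{\mathbf x},\ell_{\mathbf y}\) force a rigid local structure. Specifically, I would take the relation \(\ell_{\mathbf x}\) from \(e\) to its closing edge in one realization and show that the unique witness decomposition \(p^{\gamma}_{\eta\theta}=1\) through it transports to the other realization. Together with wedge-type transport, this should produce a \(K_4\)-like configuration. I would then build a matrix of the form \(W=A_{L(G)}\circ(A_{L(G)}D A_{L(G)})\), with \(D\) the relevant transported selector, whose square takes different values on the \(\mathbf x\)- and \(\mathbf y\)-realizations, in the same way as (A.8). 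I expect that case analysis, and ruling out the small degenerate roots (such as \(K_4\)) by edge-transitivity, to be the main difficulty.
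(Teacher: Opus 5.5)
\textbf{Gap in the central case.} Several parts of your plan are sound. The transpose reduction, the open case via (A.10), and the closed-case listing \(\mathbf d(u,e)=\mathbf e_\gamma+h^{\mathrm d}_\gamma-\mathbf e_\ell\) all match the paper. Your Hadamard selector, which filters common neighbours by the central star profile of \((g,f)\) (or of \((e,g)\)), correctly settles the cases \(\mathbf s(u)\ne\mathbf s(w)\) and \(\mathbf s(u)\ne\mathbf s(v)\); this is a mild variant of the paper's use of the endpoint dart pair with (A.1). The gap is the remaining case, which carries all the difficulty: a closed relation \(\gamma\) with equal endpoint star profiles, centered at degree \(d\ge3\), mixed between source darts \(\mathbf x\ne\mathbf y\). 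For it you only announce an analogy with Lemma~\ref{lem:bt-star-profiles}, and that analogy does not transfer. In the star proof the closing labels \(\chi,\xi\) are diagonal fibers whose endpoint pairs were already fixed by Lemma~\ref{lem:bt-star-endpoint}. That forced \((x_\chi,y_\chi)=(0,1)\), which gave the perfect matchings, the unique witness \(p_{\eta\theta}^{\gamma}=1\), and the (A.8) computation. Here the closing labels \(\rho,\sigma\) are adjacent relations, and whether their source darts are fixed is exactly the statement being proved. Write \((x_\rho,y_\rho)=(a,a+1)\) and \((x_\sigma,y_\sigma)=(b+1,b)\). Nothing forces \(a=0\), and when \(a>0\) the relation \(R_\rho\) has out-valency \(2a+1\). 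There is then in general no unique witness to transport and no matching on which a \(W=A_{L(G)}\circ(A_{L(G)}DA_{L(G)})\) computation could run.

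\textbf{What the paper does instead.} It closes this case in two steps that your plan lacks. First it proves \(a,b>0\). If \(a=0\), (A.9) gives \(\nu_\rho=1\), and \(Z_\rho=((R_\rho A_{L(G)})\circ A_{L(G)})A_{L(G)}\) takes the value \(0\) when the unique \(\rho\)-target is the closing edge and \(d-3\) when it is a third edge at the center. This contradicts constancy for \(d\ge4\). For \(d=3\), the witness \(p_{\rho\theta}^{\gamma}=1\) together with wedge-type and central-degree transport produces a cubic \(K_4\), whose darts are automorphic; only this subcase resembles your template. Second, with \(a,b>0\), both \(\rho\) and \(\sigma\) are themselves mixed, hence closed. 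Comparing their own profile-difference identities with \(\mathbf y-\mathbf x=\mathbf e_\rho-\mathbf e_\sigma\) forces their closing labels to be \(\rho\) at \(\mathbf x\)-darts and \(\sigma\) at \(\mathbf y\)-darts. Hence every \(\rho\)- or \(\sigma\)-neighbour \(uz\) in the \(\mathbf x\)-wing of \(e=uv\) yields, injectively, a \(\rho\)-neighbour \(vz\) in the \(\mathbf y\)-wing. This gives \(x_\rho+x_\sigma\le y_\rho\), i.e.\ \(a+b+1\le a+1\), contradicting \(b>0\). This counting injection is the missing idea, and without it the proposal does not prove the lemma.
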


\begin{proof}
Consider the source dart; the target statement
follows by transpose.  If the two endpoint star profiles of its source
fiber differ, Lemmas~\ref{lem:bt-star-endpoint}
and~\ref{lem:bt-star-profiles} identify the central wing,
and the fixed unordered dart pair, together with (A.1), identifies its dart
profile: under (A.1), the two dart profiles recover the two distinct endpoint
star profiles, so the fixed central star profile selects the correct dart.
For equal endpoint star profiles, no adjacent relation can be centered at a
degree-one endpoint.  At degree two, each dart has one other incident edge,
so relation \(\gamma\) fixes the unit dart profile \(\mathbf e_\gamma\).
An open relation at any degree is fixed by (A.10).

Suppose that a closed relation \(\gamma\), centered at a root vertex of
degree \(d\ge3\), is mixed between source-dart profiles
\(\mathbf x\ne\mathbf y\).  Interchanging \(G,H\) if necessary, take the
\(\mathbf x\)-centered representative in \(G\); the
\(\mathbf y\)-centered representative may lie in \(G\) or in \(H\).  In the
\(\mathbf x\)-centered representative, let \(\rho\) be the relation of the first
closing leg \((e,g)\), where \(g\) is the closing edge.  Direct listing gives
\[
 \mathbf d(u,e)=\mathbf e_\gamma+h^{\mathrm d}_\gamma-\mathbf e_\rho,
\]
of the form required by the profile-difference observation, with the closing-leg
relation as label.  With \(\sigma\) the closing-leg relation of a
\(\mathbf y\)-centered representative, the principle gives
\[
 \mathbf y-\mathbf x=\mathbf e_\rho-\mathbf e_\sigma,
 \tag{A.11}
\]
and, since Lemma~\ref{lem:bt-dart-endpoint} fixes and transports the
unordered dart pair of the source fiber, that pair equals
\(\{\mathbf x,\mathbf y\}\) and these are the only source-dart profiles on
\(R_\gamma\); in particular \(\rho\ne\sigma\).  Write
\[
 (x_\rho,y_\rho)=(a,a+1),\qquad
 (x_\sigma,y_\sigma)=(b+1,b).
 \tag{A.12}
\]

We first prove \(a,b>0\).  If \(a=0\), then (A.9) gives the out-valency
\(\nu_\rho=x_\rho+y_\rho=1\).  Thus \(R_\rho\) has exactly one target from
every source edge.  Write \(f_\rho(e)\) for this target and form
\[
 Z_\rho^{L(G)}=((R_\rho A_{L(G)})\circ A_{L(G)})A_{L(G)},
 \qquad
 Z_\rho^{L(H)}=\varphi(Z_\rho^{L(G)}).
\]
Functionality also forces \(\rho\ne\gamma\): in the
\(\mathbf x\)-centered closed realization, otherwise the other wedge leg
and the closing edge would be two distinct \(R_\rho\)-targets of \(e\).
Hence in the \(\mathbf y\)-centered realization the unique
\(R_\rho\)-target is a third edge at the center.
On either side \(K\in\{G,H\}\), functionality makes
\((Z_\rho^{L(K)})_{ef}\) count line vertices
adjacent simultaneously to \(e,f,f_\rho(e)\).  It is zero when
\(f_\rho(e)\) is the closing edge and \(d-3\) when \(f_\rho(e)\) is a third
edge at the center.  Indeed, functionality gives
\((R_\rho^{L(K)}A_{L(K)})_{eg}
 =A_{L(K)}(f_\rho(e),g)\); the Hadamard product
restricts \(g\) to a common neighbor of \(e\) and \(f_\rho(e)\), and the
final multiplication by \(A_{L(K)}\) additionally requires \(g\sim f\).
For \(d\ge4\) this contradicts constancy or transport on \(R_\gamma\).  If
\(d=3\), distinguish the two representatives as follows.  In the star
realization write \(e=uv,f=uw,f_\rho(e)=uz\).  Write the closing
representative as
\(\hat e=\hat u\hat v,\hat f=\hat u\hat w\), with closing edge
\(\hat g=\hat v\hat w\); this hatted realization is the
\(\mathbf x\)-centered one in \(G\), while the star realization lies on the
side containing the \(\mathbf y\)-centered representative.
Then \((\hat e,\hat g)\in R_\rho\); define \(R_\theta\) by
\((\hat g,\hat f)\in R_\theta\).  Functionality gives
\(p_{\rho\theta}^{\gamma}=1\), and transport forces
\((uz,f)\in R_\theta\) in the star realization.  Both pairs through the
witness are closed in the closing realization.  Wedge-type transport
(Lemma~\ref{lem:bt-wedge-type}) therefore makes the corresponding wedges
in the star realization closed:
they supply \(vz,wz\), while the original closed \(\gamma\)-wedge supplies
\(vw\).

In the star realization, \(R_\gamma,R_\rho,R_\theta\) are centered at the
cubic vertex \(u\).  In the closing realization these three relations are
centered at \(\hat u,\hat v,\hat w\), respectively.  Central-degree
transport (Lemma~\ref{lem:bt-wedge-type}) therefore makes all three
vertices cubic.  The target edge \(\hat g\) of
\(R_\rho\) and the target edge \(\hat f\) of \(R_\theta\) consequently have
endpoint-degree pair \(\{3,3\}\); transporting their diagonal-fiber labels
back gives
\(d(z)=d(w)=3\).  Finally, the equal endpoint star profiles of the source
fiber give \(d(v)=d(u)=3\), since degree is the sum of the star-profile
coordinates.  Hence the displayed \(K_4\) is cubic and, by connectedness, is
the whole root.  Its darts are automorphic, contradicting
\(\mathbf x\ne\mathbf y\).  Therefore
\(a>0\); the symmetric argument gives \(b>0\).

Every mixed relation is closed by (A.10).  By (A.9), the source fiber's
fixed dart pair \(\{\mathbf x,\mathbf y\}\) has total valency vector
\(\mathbf x+\mathbf y\).  Thus the closing \(\rho\)-leg opposite an
\(\mathbf x\)-dart is centered at the \(\mathbf y\)-dart, while
\(x_\rho=a>0\) gives a \(\rho\)-relation centered at an \(\mathbf x\)-dart;
hence \(\rho\) is mixed, and
symmetrically so is \(\sigma\).  For
\(\lambda\in\{\rho,\sigma\}\), let \(r_{\mathbf x}\) and
\(r_{\mathbf y}\) be its closing-leg labels in its
\(\mathbf x\)- and \(\mathbf y\)-centered realizations.  Applying the listing
that produced (A.11) to \(R_\lambda\) gives
\[
 \mathbf y-\mathbf x
 =\mathbf e_{r_{\mathbf x}}-\mathbf e_{r_{\mathbf y}}.
\]
Comparison with (A.11) forces
\(r_{\mathbf x}=\rho\) and \(r_{\mathbf y}=\sigma\).

Fix \(e=uv\), with dart profiles \(\mathbf x\) at \(u\) and \(\mathbf y\)
at \(v\).  If a
\(\rho\)- or \(\sigma\)-neighbor in the \(u\)-wing is \(f=uz\), the closing
rule supplies the edge \(vz\), which is a \(\rho\)-neighbor in the
\(v\)-wing.  The two source sets are disjoint because
\(R_\rho\ne R_\sigma\), and simplicity makes the map \(uz\mapsto vz\)
injective.  Therefore
\[
 x_\rho+x_\sigma\le y_\rho.
\]
By (A.12), this is \(a+(b+1)\le a+1\), contradicting \(b>0\).
\end{proof}

\subsection{The profile-oriented ordinary algebra}

Let \(\mathfrak C_G\) and \(\mathfrak C_H\) be the recovered star-profile
classes of \(G\) and \(H\), with the transport bijection written
\(C\mapsto C^H\).  Let \(B_G\in\{0,1\}^{V(G)\times E(G)}\) and
\(B_H\in\{0,1\}^{V(H)\times E(H)}\) be the unsigned incidence matrices, so
\(B_G(u,e)=\one[u\in e]\), and similarly for \(H\).
For \(C\in\mathfrak C_G\), define the diagonal
\(0\)-\(1\) projectors
\[
 (P_C^G)_{uu'}=\one[u=u'\in C],
 \qquad
 (P_{C^H}^H)_{u'v'}=\one[u'=v'\in C^H].
\]
For a diagonal
edge fiber \(\mathcal E_\alpha\), let
\(\{\mathbf d_\alpha^+,\mathbf d_\alpha^-\}\) be its unordered endpoint
dart-profile pair.  If the two profiles differ, choose the same ordering of
their transported profile labels on \(G\) and \(H\), and orient every
\(\alpha\)-edge toward its \(+\)-dart.  Let
\(\widetilde B_{G,\alpha}\in
\{-1,0,1\}^{V(G)\times E(G)}\) be the signed incidence matrix with entry
\(+1\) at the \(+\)-dart endpoint, entry \(-1\) at the other endpoint, and
zero off that fiber; define \(\widetilde B_{H,\alpha}\) analogously.  This
ordering serves only to synchronize the transported labels;
equal-dart-profile fibers are not oriented.  Put
\[
 \mathcal I_G=\{B_G\}\cup
 \{\widetilde B_{G,\alpha}:\mathbf d_\alpha^+\ne\mathbf d_\alpha^-\},
 \qquad
 \mathcal I_H=\{B_H\}\cup
 \{\widetilde B_{H,\alpha}:\mathbf d_\alpha^+\ne\mathbf d_\alpha^-\}.
\]
For each \(S_G\in\mathcal I_G\), write \(S_H\in\mathcal I_H\) for the
incidence matrix with the same transported label.  The next lemma relates the
root profile projectors to the stable line algebra.  Placing a profile-class
projector between two incidence matrices produces an edge-indexed Gram matrix
that belongs to the stable line algebra and is transported.

\begin{lemma}[Mixed Grams]
\label{lem:bt-mixed-grams}
For \(S_G,T_G\in\mathcal I_G\) and \(C\in\mathfrak C_G\),
\[
 S_G^\top P_C^G T_G\in\mathcal A_{L(G)},
\]
and
\[
 \varphi(S_G^\top P_C^G T_G)=S_H^\top P_{C^H}^H T_H.
\]
\end{lemma}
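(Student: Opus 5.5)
We compute the \((e,f)\)-entry of \(S_G^\top P_C^G T_G\) directly and show that it is an entrywise function of transported line-algebra data. Since \(P_C^G\) is diagonal,
\[
 (S_G^\top P_C^G T_G)_{ef}=\sum_{u\in C}(S_G)_{ue}(T_G)_{uf},
\]
which is supported on pairs \(e,f\) sharing an endpoint in \(C\). On the diagonal \(e=f\) this sum runs over the endpoints of \(e\) lying in \(C\). Off the diagonal it has at most one term, coming from the unique common endpoint \(u=e\cap f\), and only when \(u\in C\) and both \(e\) and \(f\) lie in the fibers that \(S\) and \(T\) select.

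The entry is therefore determined by the following data.
\begin{itemize}
\item The membership of the center in \(C\), i.e.\ its central star profile.
\item On the diagonal, the unordered endpoint star-profile pair of the fiber of \(e\).
\item For a signed factor \(\widetilde B_{G,\alpha}\), the sign at \(u\), namely whether \(u\) is the \(+\)-dart or the \(-\)-dart endpoint of \(e\). This is decided by the dart profile \(\mathbf d(u,e)\).
\end{itemize}

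The plan then has three steps.

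First, on the diagonal: for \(e\in\mathcal E_\alpha\), Lemma~\ref{lem:bt-star-endpoint} fixes the unordered endpoint star profiles. The diagonal entry is a sum over the endpoints in \(C\) of products of signs. For \(B_G\) every sign is \(1\). For \(\widetilde B_{G,\alpha}\) with \(S=T\) each product is \(1\); the mixed \(B\)–\(\widetilde B\) case needs to know which endpoint carries which dart. Here I will use (A.1): the dart pair determines the star pair, so each dart label is paired with a star profile. The diagonal value is then constant on \(D_\alpha\), with the same value in \(H\) under the synchronized labeling.

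Second, off the diagonal on an adjacent basis relation \(R_\gamma\): Lemma~\ref{lem:bt-star-profiles} fixes the central star profile, which decides \(u\in C\). Lemma~\ref{lem:bt-dart-profiles} fixes the source dart \(\mathbf d(u,e)\) and the target dart \(\mathbf d(u,f)\). For an oriented fiber these darts equal \(\mathbf d^\pm\), and since \(\mathbf d^+_\alpha\ne\mathbf d^-_\alpha\) they determine the signs. On disjoint relations the entry is \(0\).

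Hence \(S_G^\top P_C^G T_G=\sum_\rho c_\rho R_\rho\), where each coefficient \(c_\rho\) is a function of transported labels. This places the matrix in \(\mathcal A_{L(G)}\), and linearity of \(\varphi\) gives the image \(\sum_\rho c_\rho R_\rho^{L(H)}\). Running the same computation in \(H\) with transported labels yields the same coefficients, which proves the identity.

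I expect the main obstacle to be the bookkeeping that makes every label decision genuinely transported rather than merely constant within one root. Two points need care. The orientation convention must match: the same ordering of transported dart labels is used in both roots, which is exactly why the \(+\)/\(-\) choice was synchronized. The \(\mathbf d(u,e)\) identification must also be read through Lemma~\ref{lem:bt-dart-profiles} and not through star profiles. When the two endpoint star profiles of an oriented fiber coincide, the sign is still fixed by the dart, while \(C\)-membership is fixed by the star profile; keeping these two roles separate resolves the case.
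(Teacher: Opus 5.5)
Your proposal is correct and takes essentially the same route as the paper. You evaluate the entry \(\sum_{u\in e\cap f\cap C}S_G(u,e)T_G(u,f)\) on each basis relation, using Lemmas~\ref{lem:bt-star-profiles} and~\ref{lem:bt-dart-profiles} for the central star profile and the signs on adjacent relations, the endpoint dart pair with (A.1) on the diagonal, and zero on disjoint relations, and then read off transported coefficients. The only cosmetic difference is that you also cite Lemma~\ref{lem:bt-star-endpoint} on the diagonal, whereas the paper gets both star memberships directly from the dart pair of Lemma~\ref{lem:bt-dart-endpoint} through (A.1), which your mixed signed/unsigned case uses anyway.
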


\begin{proof}
The \((e,f)\)-entry is
\[
 \sum_{u\in e\cap f\cap C}
   S_G(u,e)\,T_G(u,f).
 \tag{A.13}
\]
It is zero for disjoint edges; the source and target fibers also determine
whether either signed incidence is zero.  On an adjacent basis relation,
Lemmas~\ref{lem:bt-star-profiles} and~\ref{lem:bt-dart-profiles} fix the
central star profile and, whenever a signed incidence occurs, its source or
target sign.  On a diagonal relation, the unordered endpoint dart profiles
(Lemma~\ref{lem:bt-dart-endpoint}) fix both star memberships through
(A.1).  A signed factor can then be
nonzero only for the fiber containing \(e=f\).  If both factors are unsigned
or both are signed, (A.13) counts the endpoints in \(C\).  With exactly one
signed factor it is the signed \(C\)-incidence; this is zero when both
endpoints have the same star profile, and otherwise its sign is fixed by the
synchronized \(+/-\) dart labels.  Thus (A.13) is constant on every basis
relation of \(\mathcal A_{L(G)}\) and depends only on transported labels.
Consequently
\[
 S_G^\top P_C^G T_G=\sum_\tau m_\tau R_\tau
\]
with the same coefficients \(m_\tau\) after transport to \(H\), proving both
claims.
\end{proof}

Define the root-side space
\[
 \mathcal W_G=\operatorname{span}\left(
   \{P_C^G:C\in\mathfrak C_G\}\cup
   \{P_C^G S_G M T_G^\top P_D^G:
     C,D\in\mathfrak C_G,\ S_G,T_G\in\mathcal I_G,\;
     M\in\mathcal A_{L(G)}\}
 \right).
 \tag{A.14}
\]
Define \(\mathcal W_H\) by the same formula with the transported \(H\)-side
objects.  Here \(M\) is indexed by \(E(G)\times E(G)\), so each incidence
sandwich is a \(V(G)\times V(G)\) root matrix.  The next lemma makes
\(\mathcal W_G\) an ordinary matrix algebra; Hadamard closure is proved only
after its value cells are constructed.
We call the matrices \(P_C^G\) \emph{projector atoms} and the displayed
incidence sandwiches \emph{incidence atoms}.

\begin{lemma}[Root-algebra transport and compression]
\label{lem:bt-trace-transport}
The space \(\mathcal W_G\) is a unital real algebra---that is, it contains
the identity matrix and is closed under real linear combinations and ordinary
matrix multiplication---and is closed under transpose.
The atomwise prescriptions
\[
 \begin{aligned}
 P_C^G&\longmapsto P_{C^H}^H,\\
 P_C^G S_G M T_G^\top P_D^G
 &\longmapsto
 P_{C^H}^H S_H\varphi(M)T_H^\top P_{D^H}^H
 \end{aligned}
 \tag{A.15}
\]
extend to a well-defined, unital, transpose- and trace-preserving algebra
isomorphism
\[
 \Omega:\mathcal W_G\longrightarrow\mathcal W_H.
\]
For every \(Z\in\mathcal W_G\) and \(U_G,V_G\in\mathcal I_G\),
\[
 U_G^\top ZV_G\in\mathcal A_{L(G)}
 \tag{A.16}
\]
compatibly with \(\Omega,\varphi\).  Moreover,
\(I_{V(G)},J_{V(G)},A_G\) belong to \(\mathcal W_G\) and are transported to
\(I_{V(H)},J_{V(H)},A_H\).
\end{lemma}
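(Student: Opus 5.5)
The plan is to reduce everything to products of atoms and use Lemma~\ref{lem:bt-mixed-grams} to push each inner incidence contraction into \(\mathcal A_{L(G)}\).

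\textbf{Closure under multiplication and transpose.} Projector atoms are orthogonal idempotents summing to \(I_{V(G)}\), so \(\mathcal W_G\) is unital. The product of a projector atom with an incidence atom is either the same atom or zero, according to \(C=C'\). For two incidence atoms,
\[
 (P_C S M T^\top P_D)(P_{D'} S' M' T'^\top P_{E})
 =\one[D=D']\,P_C S\bigl(M\,(T^\top P_D S')\,M'\bigr)T'^\top P_E .
\]
By Lemma~\ref{lem:bt-mixed-grams}, \(T^\top P_D S'\in\mathcal A_{L(G)}\). Since \(\mathcal A_{L(G)}\) is closed under products, the middle factor lies in \(\mathcal A_{L(G)}\), and the product is again an incidence atom. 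Transposition maps \(P_CSMT^\top P_D\) to \(P_DTM^\top S^\top P_C\), and \(\mathcal A_{L(G)}\) is transpose-closed. The same computation on \(H\), together with the fact that \(\varphi\) preserves products and transposes and maps mixed Grams to mixed Grams, shows that the prescriptions (A.15) respect products of atoms and transposes.

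\textbf{Compression (A.16).} It suffices to check atoms. For a projector atom, \(U^\top P_C V\) is a mixed Gram. For an incidence atom,
\[
 U^\top P_C S M T^\top P_D V=(U^\top P_C S)\,M\,(T^\top P_D V)\in\mathcal A_{L(G)},
\]
and it is transported by the multiplicativity of \(\varphi\).

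\textbf{Well-definedness of \(\Omega\).} This is the main obstacle, since an element of \(\mathcal W_G\) may have many atom expansions. I would prove that \(\Omega\) is well defined by a trace-form argument. Define \(\tau(X)=\operatorname{tr}(X)\) on finite combinations of atoms, and show that \(\tau\) commutes with the formal map. For projector atoms this holds because the trace is \(|C|=|C^H|\), by the class-size transport of Lemma~\ref{lem:bt-star-profiles}. For incidence atoms, cyclicity gives
\[
 \operatorname{tr}(P_CSMT^\top P_D)=\one[C=D]\operatorname{tr}\bigl(M\,T^\top P_C S\bigr),
\]
and \(\varphi\) preserves trace and products, so this value is transported.

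Now suppose \(X=\sum a_i X_i\) is a formal combination with \(X=0\), and let \(Y=\sum a_i\widetilde X_i\) be the corresponding combination of the prescribed \(H\)-images. Then
\[
 \operatorname{tr}(Y^\top Y)=\operatorname{tr}(X^\top X)=0 ,
\]
because products and transposes of atoms are matched formally and the traces of the resulting atoms agree. Since \(\operatorname{tr}(Y^\top Y)\) is a positive definite form over \(\mathbb R\), it follows that \(Y=0\). Hence \(\Omega\) is well defined. Applying the same argument with \(G\) and \(H\) interchanged, using \(\varphi^{-1}\), gives injectivity and an inverse map, so \(\Omega\) is an isomorphism. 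It is unital, transpose-preserving and trace-preserving by the preceding steps.

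\textbf{The named matrices.} We already have \(I=\sum_C P_C\). Next,
\[
 A_G+D_G=B_GB_G^\top=\sum_{C,D}P_C B_G I_{E(G)} B_G^\top P_D ,
\]
where \(D_G\) is the diagonal degree matrix. Since the degree is constant on each star-profile class, \(D_G=\sum_C \deg(C)P_C\). Subtracting gives \(A_G\in\mathcal W_G\), and it maps to \(A_H\) because degrees are transported. For \(J\), note that \(B_GJ_{E(G)}B_G^\top=(B_G\one)(B_G\one)^\top\) and \(B_G\one\) is the degree vector. Therefore
\[
 \sum_{C,D}\deg(C)^{-1}\deg(D)^{-1}\,P_C B_G J_{E(G)} B_G^\top P_D=J_{V(G)} ,
\]
since the roots are connected and have no isolated vertices, so every degree is positive. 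This expression is transported to \(J_{V(H)}\).
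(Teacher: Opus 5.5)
Your proposal is correct and follows essentially the same route as the paper: mixed Grams absorb the middle factors of atom products and compressions, and well-definedness comes from transporting \(\operatorname{tr}(Z^\top Z)\) and using positivity of the Frobenius norm. \(A_G\) and \(J_{V(G)}\) are then recovered from \(B_GB_G^\top=D_G+A_G\) and \(B_GJ_{E(G)}B_G^\top=D_GJ_{V(G)}D_G\) with class-constant positive degrees; the only cosmetic difference is that you use closure to reduce each \(\operatorname{tr}(X_i^\top X_j)\) to the trace of a single atom, whereas the paper contracts arbitrary cyclic trace words directly.
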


\begin{proof}
\emph{Algebra structure.}
The identity is \(\sum_{C\in\mathfrak C_G}P_C^G\), and transposing an atom
exchanges its two outer profile blocks and incidence matrices and replaces
\(M\) by \(M^\top\).  Multiplication by a projector atom either annihilates
a mismatched outer block or leaves the matching block unchanged, so only
products of two incidence atoms require calculation.  Products with
unequal middle projectors vanish; the remaining products reduce as
\[
 (P_C^G S_G M T_G^\top P_D^G)
 (P_D^G U_G Q V_G^\top P_F^G)
 =P_C^G S_G\bigl(M(T_G^\top P_D^G U_G)Q\bigr)V_G^\top P_F^G.
\]
The middle factor belongs to \(\mathcal A_{L(G)}\) by
Lemma~\ref{lem:bt-mixed-grams}.  Thus \(\mathcal W_G\) is a unital
transpose-closed algebra, and (A.15) is formally multiplicative and
transpose preserving.

\emph{Well-definedness.}
To check independence of the chosen expression, first note that a product
whose profile-block indices do not close has trace zero on both sides.
For a cyclic product of incidence atoms, with \(C_r=C_0\), cyclicity of trace
gives the explicit
contraction
\[
\begin{aligned}
 &\operatorname{tr}\!\left(
   \prod_{i=1}^{r}
   P_{C_{i-1}}^G S_i M_i T_i^\top P_{C_i}^G
 \right)\\
 &\qquad=
 \operatorname{tr}\!\left(
   M_1(T_1^\top P_{C_1}^G S_2)
   M_2\cdots
   M_r(T_r^\top P_{C_0}^G S_1)
 \right),
\end{aligned}
\]
where indices are cyclic.  Each parenthesized factor is a transported mixed
Gram, so the right-hand side is a transported trace in
\(\mathcal A_{L(G)}\).  Projector-only words have trace zero or a transported
class size, and any other projector factors are absorbed into adjacent
profile blocks.  By linearity, all trace words in the spanning atoms are
therefore preserved.  If a formal linear
combination \(Z\) represents zero and \(\widetilde Z\) is its formal image
under (A.15), then
\[
 \operatorname{tr}(\widetilde Z^\top\widetilde Z)
 =\operatorname{tr}(Z^\top Z)=0.
\]
The left side is the squared real Frobenius norm of \(\widetilde Z\), so
\(\widetilde Z=0\).  The prescription is therefore independent of the chosen
representation.  The same construction with \(\varphi^{-1}\) is inverse on
every spanning atom, so \(\Omega\) is bijective.

\emph{Compression.}
Compressing an incidence atom gives
\[
 U_G^\top(P_C^G S_G M T_G^\top P_D^G)V_G
 =(U_G^\top P_C^G S_G)M(T_G^\top P_D^G V_G)
 \in\mathcal A_{L(G)},
\]
while for a projector atom this is the mixed-Gram lemma.  This proves
(A.16) and its transport compatibility.

\emph{The matrices \(I\), \(J\), and \(A\).}
Let \(D_G=\sum_{C\in\mathfrak C_G}d(C)P_C^G\), where \(d(C)\) is
the common root degree on profile class \(C\), and define \(D_H\)
analogously.  Since
\(d(C)=\sum_\alpha s_\alpha(C)\), star-profile transport gives
\(\Omega(D_G)=D_H\).  The roots here are connected and have order at least
three, so all degrees are positive; hence
\(D_G^{-1}=\sum_C d(C)^{-1}P_C^G\) lies in \(\mathcal W_G\) and is
transported to \(D_H^{-1}\).  The line-side
all-ones matrix \(J_{E(G)}\) belongs to \(\mathcal A_{L(G)}\), and entry
inspection gives
\[
 B_GB_G^\top=D_G+A_G,
 \qquad
 B_GJ_{E(G)}B_G^\top=D_GJ_{V(G)}D_G.
 \tag{A.17}
\]
For any \(M\in\mathcal A_{L(G)}\), expanding the root identity on both sides
shows
\[
 B_GMB_G^\top=\sum_{C,D}P_C^GB_GMB_G^\top P_D^G,
\]
so (A.15) transports this unblocked sandwich to
\(B_H\varphi(M)B_H^\top\).  In particular,
\(B_GB_G^\top=B_G I_{E(G)}B_G^\top\), and both sandwiches in (A.17) are sums
of the profile-blocked atoms in (A.14).  Therefore the matrices
\[
 I_{V(G)}=\sum_CP_C^G,\qquad
 A_G=B_GB_G^\top-D_G,\qquad
 J_{V(G)}=D_G^{-1}B_GJ_{E(G)}B_G^\top D_G^{-1}.
\]
belong to \(\mathcal W_G\) and are transported.
\end{proof}

\subsection{Diagonal and root-edge profile cells}

For any finite index set \(S\) and square matrix \(M\) indexed by \(S\), write
\[
 \operatorname{Diag}_S(M)=I_S\circ M
\]
for coordinate-diagonal extraction.  We use \(S=V(G),V(H)\) for root
matrices and \(S=E(G),E(H)\) for line-indexed matrices.  Call a diagonal
edge fiber
\emph{unequal} or
\emph{equal} according as its two endpoint dart profiles differ or agree.
For a fiber \(\mathcal E_\alpha\), set
\[
 \Delta_\alpha=\sum_{C\in\mathfrak C_G}s_\alpha(C)P_C^G,
\]
where \(s_\alpha(C)\) is the common coordinate on \(C\); this is the
diagonal matrix with entry \(s_\alpha(u)\) at \(u\).  Thus
\(\Delta_\alpha\in\mathcal W_G\) directly, without using diagonal
extraction.  If the fiber is unequal, also set
\[
 \Sigma_\alpha=
 \frac12\left(B_GD_\alpha\widetilde B_{G,\alpha}^\top+
               \widetilde B_{G,\alpha}D_\alpha B_G^\top\right).
\]
This incidence formula lies in \(\mathcal W_G\).  Edge by edge, its two
rank-one terms cancel at the two off-diagonal endpoint positions and
contribute \(+1\) and \(-1\) at the two diagonal positions.  Thus its
diagonal entry is the number of \(+\)-darts minus the number of \(-\)-darts
of that fiber.  Both matrices are transported.

\begin{lemma}[Diagonal extraction]
\label{lem:bt-diagonal-extraction}
For every \(Z\in\mathcal W_G\),
\[
 \operatorname{Diag}_{V(G)}(Z)\in\mathcal W_G,
 \qquad
 \Omega(\operatorname{Diag}_{V(G)}(Z))
 =\operatorname{Diag}_{V(H)}(\Omega Z).
 \tag{A.18}
\]
\end{lemma}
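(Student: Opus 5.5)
The plan is to reduce diagonal extraction on the root side to diagonal extraction on the line side. The line side is available because \(\mathcal A_{L(G)}\) is Hadamard closed and \(\varphi\) preserves Hadamard products, so \(\operatorname{Diag}_{E(G)}(M)=I_{E(G)}\circ M\) is transported. By linearity and the block decomposition \(Z=\sum_{C,D}P_C^GZP_D^G\), we have \(\operatorname{Diag}_{V(G)}(Z)=\sum_C\operatorname{Diag}_{V(G)}(P_C^GZP_C^G)\), so we may fix one profile class \(C\) and work with \(Z_C=P_C^GZP_C^G\). Write \(z_u=Z_{uu}\) for \(u\in C\). For an edge \(e=uv\), set \(w_e=Z_{uv}+Z_{vu}\) when both endpoints lie in \(C\), and \(w_e=0\) otherwise.

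\textbf{Step 1: line-side diagonal data.} By compression (A.16), \(X=B_G^\top Z_CB_G\in\mathcal A_{L(G)}\), and it is transported to \(B_H^\top\Omega(Z_C)B_H\). Hence \(Y=\operatorname{Diag}_{E(G)}(X)\) is also transported. Its entry at \(e=uv\) is \(\sum_{x\in e\cap C}z_x+w_e\). Whether one or both endpoints of an \(\alpha\)-edge lie in \(C\) is fixed by the fiber, through Lemmas~\ref{lem:bt-star-endpoint} and~\ref{lem:bt-dart-endpoint}. We therefore split \(Y=\sum_\alpha YD_\alpha\) over fibers and treat the following cases separately.

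(i) Fibers with exactly one endpoint in \(C\). Here \((YD_\alpha)_{ee}=z_u\) for the \(C\)-endpoint \(u\), and \(w_e=0\).

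(ii) Unequal fibers with both endpoints in \(C\). Here \(Y\) must be split. The same compression with \(\widetilde B_{G,\alpha}\) in place of \(B_G\), and one signed and one unsigned factor, gives \(\pm(z_u-z_v)\) plus a term antisymmetric in \(Z_{uv},Z_{vu}\). Combining the symmetrized signed and unsigned compressions isolates \(z_u\) at the \(+\)-dart and \(z_v\) at the \(-\)-dart.

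(iii) Equal fibers.

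In cases (i) and (ii) we then lift back with the \(\Sigma_\alpha\)-type sandwich \(\tfrac12(B_GD_\alpha N\widetilde B_{G,\alpha}^\top+\widetilde B_{G,\alpha}N D_\alpha B_G^\top)\) for diagonal \(N\). For one-endpoint fibers we use \(P_C^GB_GD_\alpha N B_G^\top P_C^G\), whose off-diagonal part vanishes because the other endpoint lies outside \(C\). Either way we obtain the diagonal matrix \(u\mapsto\sum_{e\ni u,\,e\in\mathcal E_\alpha}N_{ee}\). With \(N\) chosen as above, this equals \(s_\alpha(C)z_u\) on \(C\). Multiplying by \(s_\alpha(C)^{-1}\), which is a transported scalar on \(C\), recovers \(\operatorname{Diag}(Z_C)\) whenever some fiber of type (i) or (ii) meets \(C\). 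Every step is an operation on atoms already shown in Lemma~\ref{lem:bt-trace-transport} to be transported by \(\Omega\), so (A.18) follows for such \(C\).

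\textbf{Step 2: the main obstacle.} The hard case is a class \(C\) all of whose incident fibers are equal fibers internal to \(C\). Then no orientation exists, and \(Y_e=z_u+z_v+w_e\) mixes the two endpoints. My first attempt is a cancellation identity using the unsigned lift \(B_GD_\alpha NB_G^\top\) with \(N\) diagonal. It equals \(\Delta\)-weighted diagonal sums plus an adjacency-supported part \(N_{uv}\) at \((u,v)\). The identity \(B_GB_G^\top=D_G+A_G\) suggests writing \(B_GD_\alpha NB_G^\top=\Theta+E\), with \(\Theta\) diagonal and \(E\) supported on \(A_G\). I then try to show that \(E\) is itself recoverable inside \(\mathcal W_G\) without diagonal extraction. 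The route is a second compression: \(B_G^\top EB_G\) restricted by \(\circ A_{L(G)}\) to adjacent line pairs, where the wedge-level dart data of Lemma~\ref{lem:bt-dart-profiles} make each endpoint choice contribute identically. The intended conclusion is that the diagonal \(\Theta\) is expressible as a difference of transported atoms.

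To separate \(z\) from \(w\), I plan to compare \(Y\) with the adjacent entries \(X_{ef}\) for \(e=uv\), \(f=uw\). These equal \(z_u+Z_{uw}+Z_{vu}+Z_{vw}\), and after summing over the \(u\)-wing with the dart-profile multiplicities, they produce a linear system in \((z,w)\) with transported coefficients. Equal dart profiles make the system symmetric in the two endpoints, and that symmetry is exactly what lets either endpoint choice act identically. Verifying that this system is invertible with transported coefficients is the step I expect to need the most care, especially at degree two. At degree two, cycle components may need a direct treatment via the unit dart profiles.
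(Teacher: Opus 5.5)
Your Step~1 works where it applies. For a fiber with exactly one endpoint in $C$, the $e$-diagonal of $B_G^\top P_C^GZP_C^GB_G$ is $Z_{uu}$. The lift $P_C^GB_GD_\alpha NB_G^\top P_C^G$ is an incidence atom, so that case transports.

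Case~(ii) needs repair. The unsigned compression gives $z_{u_+}+z_{u_-}+w_e$ and the symmetrized signed--unsigned compression gives $z_{u_+}-z_{u_-}$; these two cannot isolate $z_{u_+}$. You also need the signed--signed compression, or directly the selectors $B^{\pm}_{G,\alpha}=\frac12(B_GD_\alpha\pm\widetilde B_{G,\alpha})$. The compression $(B^{\pm}_{G,\alpha})^\top ZB^{\pm}_{G,\alpha}$ has $e$-diagonal $Z_{u_\pm u_\pm}$. Likewise, your $\Sigma_\alpha$-type sandwich returns the signed sum $\sum_{e\ni u}\widetilde B_{G,\alpha}(u,e)N_{ee}$, not $\sum_{e\ni u}N_{ee}$. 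The correct lift is $B^{+}_{G,\alpha}N^{+}(B^{+}_{G,\alpha})^\top+B^{-}_{G,\alpha}N^{-}(B^{-}_{G,\alpha})^\top$, with $N^{\pm}$ carrying $Z_{u_\pm u_\pm}$.

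The genuine gap is Step~2, and it is not a marginal case. It arises whenever a profile class meets only equal fibers, for instance in every arc-transitive root such as $C_n$ with $n\ge4$, where Step~1 yields nothing. For this case you only state an intended conclusion, and the sketch does not close:
\begin{enumerate}
\item On an equal fiber the line side sees only $Z_{uu}+Z_{vv}+Z_{uv}+Z_{vu}$ at $e=uv$. Even granting $Z_{uu}=Z_{vv}$ and $Z_{uv}=Z_{vu}$ (proved later as Lemma~\ref{lem:bt-equal-cancellation}), this is $2(Z_{uu}+Z_{uv})$. So endpoint symmetry cannot separate $z$ from $w$.
\item The adjacent entries $Z_{uu}+Z_{uw}+Z_{vu}+Z_{vw}$ bring in new unknowns at root distance up to two. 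There is no reason your linear system is determined, and you leave its invertibility open.
\item Isolating the $A_G$-supported part $E$ of $B_GD_\alpha NB_G^\top$ inside $\mathcal W_G$ is a Hadamard product with $A_G$. That operation is unavailable until Hadamard closure is proved at the very end. The paper obtains exactly this off-diagonal part (Lemma~\ref{lem:bt-base-cells}, equal fibers) only through the present lemma, and you give no independent mechanism.
\end{enumerate}

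The missing idea is to drop generic $Z$ and argue atom by atom; linearity and well-definedness of $\Omega$ suffice. For $F=P_C^GS_GMT_G^\top P_C^G$, the entry $F_{uu}=\sum_{e,f\ni u}S_G(u,e)M_{ef}T_G(u,f)$ is a sum over darts $(u,e)$. Each dart's term depends only on the fiber of $e$ and the dart profile $\mathbf d(u,e)$, for three reasons:
\begin{itemize}
\item $M$ is constant on each adjacent $R_\rho$;
\item $\mathbf d(u,e)_\rho$ counts the relevant $f$;
\item Lemma~\ref{lem:bt-dart-profiles} fixes the target dart, hence $T_G(u,f)$.
\end{itemize}
On an equal fiber all darts at $u$ have one type, so the contribution is a transported scalar times $\Delta_\alpha$. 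Unequal fibers contribute combinations of $\Delta_\alpha$ and $\Sigma_\alpha$. No separation of $Z_{uu}$ from $Z_{uv}$ is ever needed.
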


\begin{proof}
Projector atoms are already diagonal, so it suffices to treat an incidence atom
\[
 F=P_C^G S_G M T_G^\top P_C^G.
\]
Terms with different outer projectors have zero diagonal.  Explicitly,
\[
 F_{uu}
 =\one[u\in C]\sum_{e,f\ni u}S_G(u,e)M_{ef}T_G(u,f).
\]
For fixed \((u,e)\),
separate \(f=e\) and group the remaining \(f\)'s by adjacent basis relation
\(R_\rho\).  The matrix \(M\) is constant on each \(R_\rho\), the coordinate
\(\mathbf d(u,e)_\rho\) gives its multiplicity, and the recovered target dart profile
fixes the corresponding entry of \(T_G\).  Hence the inner sum depends only on
the source fiber together with the dart profile and its \(S_G\)-entry.  The
separated \(f=e\) contribution is likewise fixed by the source diagonal
fiber and the two incidence labels.  More explicitly, for an incident dart
\((u,e)\) whose edge lies in \(\mathcal E_\alpha\) and whose dart profile is
\(\delta\), set
\[
 \kappa_{\alpha,\delta}^{S,T,M}
 :=S_G(u,e)\sum_f M_{ef}T_G(u,f).
\]
The preceding grouping proves that this number is independent of the chosen
incident dart with fiber \(\alpha\) and profile \(\delta\).  Thus the full
diagonal sum groups by dart type with a concrete, well-defined coefficient.

Write \(\delta_\alpha\) for the common endpoint dart profile of an equal
fiber and \(\delta_\alpha^+,\delta_\alpha^-\) for the ordered profiles of an
unequal fiber.  An equal fiber contributes
\(\kappa_{\alpha,\delta_\alpha}^{S,T,M}
  (\Delta_\alpha)_{uu}\).  For an unequal fiber, the two dart counts satisfy
\[
 n_\pm(u)=\frac{(\Delta_\alpha)_{uu}\pm(\Sigma_\alpha)_{uu}}2.
\]
Consequently the exact matrix identity is
\[
 \operatorname{Diag}_{V(G)}(F)
 =\sum_\alpha P_C^G
 \begin{cases}
  \kappa_{\alpha,\delta_\alpha}^{S,T,M}\Delta_\alpha,
    &\mathcal E_\alpha\text{ equal},\\[2mm]
  \displaystyle
  \frac{\kappa_{\alpha,\delta_\alpha^+}^{S,T,M}
        +\kappa_{\alpha,\delta_\alpha^-}^{S,T,M}}2\,\Delta_\alpha
  +\frac{\kappa_{\alpha,\delta_\alpha^+}^{S,T,M}
        -\kappa_{\alpha,\delta_\alpha^-}^{S,T,M}}2\,\Sigma_\alpha,
    &\mathcal E_\alpha\text{ unequal}.
 \end{cases}
\]
Every summand lies in \(\mathcal W_G\).  The same coefficients occur on
\(H\), while \(P_C^G,\Delta_\alpha,\Sigma_\alpha\) are carried to their
corresponding \(H\)-matrices.  This proves both assertions in (A.18).
\end{proof}

Equal-dart-profile fibers require no canonical orientation.  Fix one such
fiber \(\mathcal E_\alpha\), choose an arbitrary orientation only for the
following proof, and let \(\widehat B_\alpha\) have entry \(+1\) at the
chosen positive endpoint, \(-1\) at the other endpoint, and zero elsewhere,
including on columns outside \(\mathcal E_\alpha\).  Put
\(K_\alpha=\widehat B_\alpha^\top B_G\).
For an adjacent \(f\), the value \(K_\alpha(e,f)\) is \(+1\) or \(-1\) according
to which oriented endpoint of \(e\) is shared.  Equality of the two endpoint
dart profiles therefore says, for every adjacent basis relation \(R_\rho\)
and every \(e\in\mathcal E_\alpha\),
\[
 \sum_{f:(e,f)\in R_\rho}K_\alpha(e,f)=0.
 \tag{A.19}
\]
Indeed, the summands are \(+1\) for the \(R_\rho\)-neighbors in the positive
wing of \(e\) and \(-1\) for those in the negative wing, and equality of the
two endpoint dart profiles equates the two counts.

\begin{lemma}[Equal-fiber cancellation]
\label{lem:bt-equal-cancellation}
For every \(Z\in\mathcal W_G\),
\[
 \operatorname{Diag}_{E(G)}(\widehat B_\alpha^\top ZB_G)=0,
 \qquad
 \operatorname{Diag}_{E(G)}(B_G^\top Z\widehat B_\alpha)=0.
 \tag{A.20}
\]
Consequently, for every edge \(e=uv\in\mathcal E_\alpha\),
\[
 Z_{uu}=Z_{vv},
 \qquad
 Z_{uv}=Z_{vu}.
 \tag{A.21}
\]
\end{lemma}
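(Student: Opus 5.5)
By linearity in \(Z\), it suffices to prove the first identity in (A.20) for the spanning atoms of (A.14); then I deduce the second identity and (A.21) from it.

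\emph{Reduction of the second identity.} Since \(\mathcal W_G\) is closed under transpose (Lemma~\ref{lem:bt-trace-transport}),
\[
\operatorname{Diag}_{E(G)}(B_G^\top Z\widehat B_\alpha)
=\operatorname{Diag}_{E(G)}\bigl((\widehat B_\alpha^\top Z^\top B_G)^\top\bigr).
\]
So the second identity in (A.20) follows from the first applied to \(Z^\top\). Both matrices vanish on columns outside \(\mathcal E_\alpha\), so only \(e=uv\in\mathcal E_\alpha\) matters. Orient \(e\) with \(u\) positive.

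\emph{Projector atoms.} For \(Z=P_C^G\), the \((e,e)\)-entry is \(\one[u\in C]-\one[v\in C]\). The two endpoint dart profiles of \(e\) are equal. Through (A.1) this gives \(\mathbf s(u)=\mathbf s(v)\), so \(u,v\) lie in the same star-profile class and the entry is zero.

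\emph{Incidence atoms.} Let \(Z=P_C^GS_GMT_G^\top P_D^G\). I would group the product as
\[
(\widehat B_\alpha^\top ZB_G)_{ee}=\sum_{x\in\{u,v\}}\widehat B_\alpha(x,e)\,\one[x\in C]\,w(x,e),
\]
where
\[
w(x,e)=\sum_{g\ni x}S_G(x,g)\,N_{ge},\qquad N=M\,(T_G^\top P_D^GB_G).
\]
By Lemma~\ref{lem:bt-mixed-grams}, \(T_G^\top P_D^GB_G\in\mathcal A_{L(G)}\). Hence \(N\in\mathcal A_{L(G)}\) is constant on every basis relation. Since \(u,v\) share a star-profile class, the factor \(\one[x\in C]\) is the same at both endpoints. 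It therefore suffices to show \(w(u,e)=w(v,e)\), after which the signs \(\pm1\) cancel.

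\emph{Constancy of \(w\) on the two darts.} I split \(w(x,e)\) into the \(g=e\) term and the remaining terms.

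The \(g=e\) term is \(S_G(x,e)N_{ee}\). This is independent of \(x\): \(\mathcal E_\alpha\) is an equal fiber, so it is never oriented. Hence every \(S_G\in\mathcal I_G\) is either \(B_G\) or zero on that column.

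For \(g\neq e\), I group the \(g\) by the adjacent basis relation \(R_\rho\) containing \((e,g)\). Then \(N_{ge}\) is the constant value of \(N\) on \(R_{\rho^\top}\). The number of such \(g\) in the \(x\)-wing is \(\mathbf d(x,e)_\rho\), which agrees at \(u\) and \(v\) because the dart profiles are equal. The entry \(S_G(x,g)\) is the incidence of the target edge \(g\) at the center \(x=e\cap g\). It is fixed by the relation \(\rho\) through the target-dart statement of Lemma~\ref{lem:bt-dart-profiles}: the target dart determines the \(\pm\) label whenever \(g\) lies in an unequal fiber, and the entry is \(1\) or \(0\) otherwise.

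Thus \(w(x,e)=\sum_\rho \mathbf d(x,e)_\rho\,c_\rho+c_0\), with coefficients \(c_\rho,c_0\) independent of \(x\). This gives \(w(u,e)=w(v,e)\) and proves (A.20). The step I expect to need the most care is this one: checking that the sign \(S_G(x,g)\) is genuinely a function of \(\rho\) alone. That is exactly where the target-dart transport of Lemma~\ref{lem:bt-dart-profiles} must be invoked, rather than only the central star profile.

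\emph{Deriving (A.21).} Expanding entrywise with \(u\) positive gives
\[
(\widehat B_\alpha^\top ZB_G)_{ee}=Z_{uu}+Z_{uv}-Z_{vu}-Z_{vv},
\qquad
(B_G^\top Z\widehat B_\alpha)_{ee}=Z_{uu}-Z_{uv}+Z_{vu}-Z_{vv}.
\]
Both vanish by (A.20). Adding the two identities gives \(Z_{uu}=Z_{vv}\), and subtracting them gives \(Z_{uv}=Z_{vu}\).
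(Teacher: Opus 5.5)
Your proposal is correct and is essentially the paper's argument: your \(N=M(T_G^\top P_D^G B_G)\) is the paper's \(Q'\), your factor \(\one[x\in C]\,S_G(x,g)\) is the paper's \(Q(e,g)\), and your wing-by-wing cancellation using equal dart profiles is exactly (A.19). The only difference is that you check constancy of that factor directly, from the shared star class of \(u,v\) and the target-dart lemma, instead of citing the mixed-Gram lemma; your transpose reduction for the second identity and the sum/difference derivation of (A.21) match the paper's.
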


\begin{proof}
For \(S_G\in\mathcal I_G\) and a profile class \(C\),
\[
 \widehat B_\alpha^\top P_C^G S_G
 =K_\alpha\circ(B_G^\top P_C^G S_G).
\]
For disjoint edge pairs both sides vanish.  For an adjacent off-diagonal
pair this is the unique shared-endpoint contribution; on the
diagonal both sides vanish because the two endpoints have the same star
profile and no \(\widetilde B_{G,\beta}\in\mathcal I_G\) is supported on
this equal fiber.
For an incidence atom, put
\(Q=B_G^\top P_C^G S_G\in\mathcal A_{L(G)}\) and
\(Q'=M T_G^\top P_D^G B_G\in\mathcal A_{L(G)}\).  The \(e\)-diagonal entry of
\(\widehat B_\alpha^\top(P_C^G S_G M T_G^\top P_D^G)B_G\) is
\[
 \sum_f K_\alpha(e,f)Q(e,f)Q'(f,e).
 \]
For fixed \(e\), \(Q(e,f)\) is constant on \(R_\rho\), while \(Q'(f,e)\) is
constant on \(R_\rho^\top\).  Diagonal and disjoint relations contribute
zero, and (A.19) annihilates each adjacent relation.  Projector atoms vanish
similarly.  This proves the first identity in (A.20).  Apply it to
\(Z^\top\in\mathcal W_G\) and transpose the resulting diagonal to obtain the
second.

Orient \(e=uv\) positively at \(u\).  The two equations at \(e\) are
\[
 Z_{uu}+Z_{uv}-Z_{vu}-Z_{vv}=0,
 \qquad
 Z_{uu}+Z_{vu}-Z_{uv}-Z_{vv}=0.
\]
Their sum and difference give (A.21).
The conclusion no longer contains \(\widehat B_\alpha\), so it is independent
of the arbitrary orientation used in the proof.
\end{proof}

Fix a basis \(Z_1,\ldots,Z_t\) of \(\mathcal W_G\) and the transported
basis \(\Omega Z_1,\ldots,\Omega Z_t\) of \(\mathcal W_H\).  For an ordered
root pair on each side define its \emph{complete \(\mathcal W\)-profile}
\[
 \begin{aligned}
 \zeta_G(u,v)&=((Z_1)_{uv},\ldots,(Z_t)_{uv}),\\
 \zeta_H(u',v')&=((\Omega Z_1)_{u'v'},\ldots,
                    (\Omega Z_t)_{u'v'}).
 \end{aligned}
\]
Replacing the chosen basis applies an invertible linear transformation to
every profile vector and therefore leaves its joint level sets unchanged.
Thus the profile cells defined below are basis independent.
Use the finite common index set
\[
 \Lambda
 =\{\zeta_G(u,v):u,v\in V(G)\}
  \cup\{\zeta_H(u',v'):u',v'\in V(H)\}.
\]
Thus \(\Lambda\) includes values attained on only one root; the induction will
show that no such one-sided nonempty cell remains at a fixed distance.  For
every nonnegative integer \(d\) and every \(\lambda\in\Lambda\), define
\[
 P^G_{d,\lambda}(u,v)
 =\one[\operatorname{dist}_G(u,v)=d,\ \zeta_G(u,v)=\lambda].
\]
Define \(P^H_{d,\lambda}\) by the same formula using \(\zeta_H\) and
\(\operatorname{dist}_H\).  These matrices are defined even when their
support is empty; we call a nonzero one a \emph{profile cell}.  We suppress a
superscript only in a one-root calculation.  Every matrix in
\(\mathcal W_G\) is constant on each \(G\)-profile cell by definition.

\begin{lemma}[Distance-zero and distance-one cells]
\label{lem:bt-base-cells}
For every \(\lambda\in\Lambda\) and \(d\in\{0,1\}\), the matrix
\(P^G_{d,\lambda}\) lies in \(\mathcal W_G\), and
\(\Omega(P^G_{d,\lambda})=P^H_{d,\lambda}\), including when either displayed
matrix is zero.
\end{lemma}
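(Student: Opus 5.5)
The plan is to build both families of cells from the tools already available: diagonal extraction (Lemma~\ref{lem:bt-diagonal-extraction}), compression into the line algebra (A.16), the mixed Grams (Lemma~\ref{lem:bt-mixed-grams}), and equal-fiber cancellation (Lemma~\ref{lem:bt-equal-cancellation}). Since \(\mathcal W_G\) is not yet known to be Hadamard closed, I would never apply an entrywise selector to a general root matrix. Instead, selectors are applied only in two places where they are legitimate. One is to diagonal matrices of \(\mathcal W_G\), where ordinary products coincide with Hadamard products. The other is inside the coherent algebra \(\mathcal A_{L(G)}\), which is Hadamard closed and transported by \(\varphi\).

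\emph{Distance zero.} For each basis element \(Z_i\), Lemma~\ref{lem:bt-diagonal-extraction} places \(\operatorname{Diag}_{V(G)}(Z_i)\) in \(\mathcal W_G\) and gives \(\Omega\operatorname{Diag}_{V(G)}(Z_i)=\operatorname{Diag}_{V(H)}(\Omega Z_i)\). Products of diagonal matrices are entrywise products. So the Lagrange selector convention of \S A.1 can be applied coordinatewise, using the finite common value set of \((Z_i)_{uu}\) and \((\Omega Z_i)_{u'u'}\) on the two sides. Multiplying the single-coordinate selectors over \(i=1,\dots,t\) yields a polynomial in the \(\operatorname{Diag}(Z_i)\), and this polynomial equals \(P^G_{0,\lambda}\). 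Because \(\Omega\) is a unital algebra isomorphism commuting with \(\operatorname{Diag}\), the same polynomial on \(H\) gives \(P^H_{0,\lambda}\). This remains valid when one side is zero.

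\emph{Distance one.} Here I pass to the line side. Fix \(Z\in\mathcal W_G\) and consider the edges \(e=uv\) in one fiber.
\begin{itemize}
\item For an unequal fiber \(\mathcal E_\alpha\), oriented with \(u\) at the \(+\)-dart, I would use the diagonals of four compressions: \(B_G^\top\operatorname{Diag}(Z)B_G\), \(\widetilde B_{G,\alpha}^\top\operatorname{Diag}(Z)B_G\), \(B_G^\top ZB_G\), and \(\widetilde B_{G,\alpha}^\top ZB_G\) together with \(B_G^\top Z\widetilde B_{G,\alpha}\). By (A.16) all of these lie in \(\mathcal A_{L(G)}\) compatibly with \(\Omega,\varphi\). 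Their \(e\)-diagonal entries form an invertible linear system in \(Z_{uu},Z_{vv},Z_{uv},Z_{vu}\). Hence \(Z_{uv}\) (with \(u=+\)) and \(Z_{vu}\) are the \(e\)-diagonal entries of explicit transported matrices \(X^{+-}_\alpha(Z)\) and \(X^{-+}_\alpha(Z)\) in \(\mathcal A_{L(G)}\).
\item For an equal fiber, (A.21) gives \(Z_{uu}=Z_{vv}\) and \(Z_{uv}=Z_{vu}\). Consequently \(Z_{uv}=\tfrac12(B_G^\top ZB_G)_{ee}-\tfrac12(B_G^\top\operatorname{Diag}(Z)B_G)_{ee}\), which is again a transported line matrix.
\end{itemize}
Applying Hadamard interpolation in \(\mathcal A_{L(G)}\) to these matrices for all \(Z_i\) simultaneously, and multiplying by \(D_\alpha\), produces diagonal edge projectors. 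These are \(E^{+-}_{\alpha,\lambda}\) and \(E^{-+}_{\alpha,\lambda}\), or \(E_{\alpha,\lambda}\) in the equal case, and they select the edges whose corresponding oriented pair has profile \(\lambda\). They are transported by \(\varphi\).

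Next, lift these projectors back to the root side. Put \(B_\pm=\tfrac12(B_G\pm\widetilde B_{G,\alpha})\) on an unequal fiber. Then
\[
 P^G_{1,\lambda}=\sum_{\alpha\ \mathrm{unequal}}\bigl(B_+E^{+-}_{\alpha,\lambda}B_-^\top+B_-E^{-+}_{\alpha,\lambda}B_+^\top\bigr)
 +\sum_{\alpha\ \mathrm{equal}}\bigl(B_GE_{\alpha,\lambda}B_G^\top-\operatorname{Diag}_{V(G)}(B_GE_{\alpha,\lambda}B_G^\top)\bigr).
\]
Each term is a sum of profile-blocked incidence atoms of (A14), so it lies in \(\mathcal W_G\). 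By (A.15) and Lemma~\ref{lem:bt-diagonal-extraction}, \(\Omega\) sends this expression to the identical \(H\)-expression, which is \(P^H_{1,\lambda}\).

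The main obstacle is the bookkeeping in the unequal case. I must check that the linear system really isolates \(Z_{uv}\) from \(Z_{vu}\) using only matrices that (A.16) certifies. I must also check that the synchronized \(+/-\) labels make the \(H\)-side oriented pairs correspond exactly, with no orientation ambiguity. For equal fibers, the cancellation lemma is precisely what removes the need for an orientation.
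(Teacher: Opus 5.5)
Your proposal is correct and follows the paper's proof essentially step for step. Like the paper, you obtain the diagonal cells by joint polynomial interpolation of the extracted diagonals, treat unequal fibers by oriented compressions lifted back through \(B_\pm\), and treat equal fibers by the cancellation identity (A.21) followed by \(B_GE_{\alpha,\lambda}B_G^\top-\operatorname{Diag}_{V(G)}(B_GE_{\alpha,\lambda}B_G^\top)\). The only difference is cosmetic: the paper reads \(Z_{u_su_t}\) directly as the \(e\)-diagonal entry of \((B^s_{G,\alpha})^\top ZB^t_{G,\alpha}\) with \(B^\pm_{G,\alpha}=\tfrac12(B_GD_\alpha\pm\widetilde B_{G,\alpha})\), which avoids the linear-system bookkeeping you identified as the main obstacle.
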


\begin{proof}
\emph{Diagonal cells.}
Diagonal extraction and polynomial interpolation of commuting diagonal
matrices construct every profile cell on \(I_{V(G)}\).  Namely, apply finite
ordinary polynomial interpolation jointly to
\[
 \operatorname{Diag}_{V(G)}(Z_1),\ldots,
 \operatorname{Diag}_{V(G)}(Z_t).
\]
Their ordinary products evaluate coordinatewise because the matrices are
diagonal, so this step does not assume Hadamard closure of \(\mathcal W_G\).
For each coordinate use the union of the values attained by the corresponding
diagonal matrices on \(G\) and \(H\), as in the interpolation convention
above.  Equation (A.18) therefore transports the resulting diagonal selector,
including a zero selector for a requested value unattained on one side.

\emph{Unequal fibers.}
Let \(\mathcal E_\alpha\) be unequal and define its endpoint-selector
incidences
\[
 B_{G,\alpha}^\pm
 =\frac12(B_GD_\alpha\pm\widetilde B_{G,\alpha}).
\]
Each \(\alpha\)-column selects its \(\pm\)-endpoint.  Since
\(B_{G,\alpha}^\pm\) is a linear combination of
\(B_GD_\alpha,\widetilde B_{G,\alpha}\), multiplying \(D_\alpha\) onto
(A.16) shows that
\((B_{G,\alpha}^s)^\top ZB_{G,\alpha}^t\in\mathcal A_{L(G)}\) for
\(s,t\in\{+,-\}\).  Its
\(e\)-diagonal entry is \(Z_{u_su_t}\) for \(e=u_+u_-\).  Joint level sets
over the basis \(Z_i\), intersected with the diagonal projector \(D_\alpha\),
therefore give transported diagonal projectors
\(\Pi_{\alpha,\lambda}^{+-}\) and
\(\Pi_{\alpha,\lambda}^{-+}\).  Explicitly,
\(\Pi_{\alpha,\lambda}^{+-}\) selects the edge
\(e=u_+u_-\) when \(\zeta_G(u_+,u_-)=\lambda\), while
\(\Pi_{\alpha,\lambda}^{-+}\) selects it when
\(\zeta_G(u_-,u_+)=\lambda\).  They are transported because all the
compressed basis matrices used in their joint level sets are transported.
The ordered
root-edge cell on this fiber is
\[
 B_{G,\alpha}^+\Pi_{\alpha,\lambda}^{+-}(B_{G,\alpha}^-)^\top
 +B_{G,\alpha}^-\Pi_{\alpha,\lambda}^{-+}(B_{G,\alpha}^+)^\top.
 \tag{A.22}
\]
After expanding \(B_{G,\alpha}^\pm\), absorb every \(D_\alpha\) and
\(\Pi\)-factor into the line-algebra middle matrix.  Inserting
\(I_{V(G)}=\sum_C P_C^G\) on both root sides then expresses every resulting
term as an incidence atom in (A.14).  Thus (A.22) lies in \(\mathcal W_G\) and
is transported.

\emph{Equal fibers.}
If \(\mathcal E_\alpha\) is equal,
Lemmas~\ref{lem:bt-diagonal-extraction} and
\ref{lem:bt-equal-cancellation} apply as follows.  Equation (A.16) places
both compressed terms below in \(\mathcal A_{L(G)}\); (A.18) transports the
extracted root diagonal, and Hadamard multiplication by the transported
\(D_\alpha\) preserves the line algebra.  Hence the diagonal line matrix
\[
 b_Z^\alpha=
 \frac12D_\alpha\circ
 \left(B_G^\top ZB_G
       -B_G^\top\operatorname{Diag}_{V(G)}(Z)B_G\right)
\]
has value \(Z_{uv}=Z_{vu}\) at \(e=uv\).  Indeed,
\[
 (B_G^\top ZB_G)_{ee}
 -(B_G^\top\operatorname{Diag}_{V(G)}(Z)B_G)_{ee}
 =Z_{uv}+Z_{vu}=2Z_{uv},
\]
where the last equality is (A.21).  Interpolate the joint values of
the \(b_{Z_i}^\alpha\) to obtain a diagonal line projector
\(\Pi_{\alpha,\lambda}\), per the interpolation convention; then
\[
 B_G\Pi_{\alpha,\lambda}B_G^\top
 -\operatorname{Diag}_{V(G)}
    (B_G\Pi_{\alpha,\lambda}B_G^\top)
\]
is the corresponding symmetric ordered root-edge cell.  Because the root is
simple, its off-diagonal support consists exactly of the two orientations of
the selected root edges; diagonal extraction removes only the endpoint
counts.  Summing the fiberwise selectors with the same \(\lambda\) proves
the lemma.  The fiberwise selectors use the same common value index
\(\Lambda\), so an unattained requested value contributes the zero
projector.  Applying the same construction to \(\Omega^{-1}\) shows that
nonzero distance-zero and distance-one cells correspond bijectively and,
more strongly, establishes the displayed equality for every
\(\lambda\in\Lambda\).
\end{proof}

\subsection{Induction on root distance}

We extend Lemma~\ref{lem:bt-base-cells} by strong induction on root
distance.  The induction uses ordered edge pairs whose four endpoint distances
identify a unique endpoint pair at the current distance.

\begin{lemma}[Profile cells at all root distances]
\label{lem:bt-distance-induction}
For every nonnegative integer \(d\) and every \(\lambda\in\Lambda\),
\(P^G_{d,\lambda}\) belongs to \(\mathcal W_G\) and
\(\Omega(P^G_{d,\lambda})=P^H_{d,\lambda}\), including zero cells.
\end{lemma}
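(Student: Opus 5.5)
We argue by strong induction on \(d\), with Lemma~\ref{lem:bt-base-cells} giving \(d\in\{0,1\}\). For \(d\ge2\), assume every cell \(P^G_{d',\lambda}\) with \(d'<d\) lies in \(\mathcal W_G\) and is transported to \(P^H_{d',\lambda}\), including zero cells. Summing over \(\lambda\in\Lambda\) shows that each distance indicator \(E_{d'}^G=\one[\operatorname{dist}_G=d']\) lies in \(\mathcal W_G\) and is transported, as is their sum \(E_{<d}^G\). First we construct the distance-\(d\) indicator. Since \(E_{<d}^G\) and \(E_{d-1}^G\) have \(0\)-\(1\) entries and disjoint supports, the support of \(E_{d-1}^GA_G\), restricted to the complement of \(E_{<d}^G\), is exactly the distance-\(d\) relation. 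Hadamard products are not yet available, so this restriction must instead be obtained by compression into the line algebra, as described next.

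\emph{Compression and selection.} For \(U_G,V_G\in\mathcal I_G\) and \(Z\in\mathcal W_G\), (A.16) places \(U_G^\top ZV_G\) in \(\mathcal A_{L(G)}\), which is Hadamard closed, and \(\varphi\) transports it. For an ordered edge pair \((e,f)\), with \(e=u_1u_2\) and \(f=w_1w_2\), the compressions \(B_G^\top E_{d'}^GB_G\) record how many of the four endpoint pairs \((u_i,w_j)\) lie at each distance \(d'<d\). The same compressions applied to the basis \(Z_1,\dots,Z_t\) record, by (A.13)-type bookkeeping, the multiset of their profiles. We select, by Hadamard interpolation in \(\mathcal A_{L(G)}\), the transported \(0\)-\(1\) matrix \(Y_d\) of pairs \((e,f)\) whose endpoint-distance pattern is \((d-1,d,d-1,d-2)\) in the following sense: exactly one endpoint pair \((u_1,w_2)\) is at distance \(d\), and the others are at the displayed smaller distances. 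Any pair at distance \(d\ge2\) arises this way. Take a shortest path \(u_1=x_0,x_1,\dots,x_d=w_2\), and set \(e=x_0x_1\) and \(f=x_{d-1}x_d\). Then \(e\) and \(f\) are disjoint because \(d\ge2\), and the triangle inequality forces the remaining three endpoint pairs to be at distances \(d-1,d-1,d-2\), all smaller than \(d\). These are detectable by the counts above, so for each such \((e,f)\) the distance-\(d\) endpoint pair is identified uniquely.

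\emph{Orientation and profile cells.} The main obstacle is to lift a line-side selector back to the root side while orienting which endpoint of \(e\) and which of \(f\) is the far one, and to do so without a global choice on equal fibers. For this we use the selectors \(B^\pm_{G,\alpha}\) on unequal fibers, as in (A.22), together with the cancellation of Lemma~\ref{lem:bt-equal-cancellation} on equal fibers. We further refine \(Y_d\) by the known cells at distance \(d-1\) and by the endpoint-degree data, namely which endpoint of \(e\) is at distance \(d-1\) from both endpoints of \(f\). Then \(B_GY_dB_G^\top\) has support containing the distance-\(d\) pairs, but it also picks up pairs at distance less than \(d\) coming from the other endpoints. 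We remove these by subtracting multiples of the transported cells \(P^G_{d',\lambda}\) with \(d'<d\). The coefficients are forced to be constant on each such cell and transported, because they are counts of the form (A.13) evaluated on already constructed cells. After this subtraction we obtain a matrix supported exactly on distance \(d\) whose value at \((u,v)\) is a positive count \(N(u,v)\). That count is itself a coefficient obtained from line-algebra data, so its level sets refine into the \(d\)-cells.

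Finally we obtain \(E_d^G\) by interpolation. Multiplying \(E_d^G\) on either side by incidence atoms and compressing places each \(\zeta_G(u,v)\) restricted to distance \(d\) in line-side diagonal or selected form. Selecting the value \(\lambda\) there, and reconstructing through (A.22) or the equal-fiber symmetric formula, yields \(P^G_{d,\lambda}\in\mathcal W_G\), transported by \(\Omega\) because every ingredient is transported. Running the same argument with \(\Omega^{-1}\) gives the correspondence of zero cells. We expect the equal-fiber orientation step to be the delicate point, and we plan to handle it by applying (A.20) to the partially built matrices, exactly as in the distance-one case.
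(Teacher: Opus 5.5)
Your opening matches the paper. You use strong induction from Lemma~\ref{lem:bt-base-cells}, the lower distance indicators, and a line-side \(0\)-\(1\) selector of edge pairs whose endpoint-distance counts force a unique endpoint pair at distance \(d\). Geodesic end-edges show that every distance-\(d\) pair is witnessed; at \(d=2\) these edges share \(x_1\) and are not disjoint, but selection is by counts, so this is harmless. You also use the paper's multiplicity device when you lift \(Y_d\) with unsigned \(B_G\), subtract lower-distance cells, and note that a matrix of \(\mathcal W_G\) is constant on profile cells.

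The gap is the final stage. Because you lift before refining by profile, your remainder carries a count \(N(u,v)\) that varies over the distance-\(d\) relation. The next two steps, as written, are root-side entrywise operations: obtaining \(E^G_d\) ``by interpolation'', and reading \(\zeta_G\) ``restricted to distance \(d\)'' by multiplying \(E^G_d\) with incidence atoms. But \(\mathcal W_G\) is not known to be Hadamard closed until after this lemma, and ordinary products \(E^G_dZ_i\) do not give \(Z_i\circ E^G_d\).

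The proposed reconstruction through (A.22) or the equal-fiber formula does not apply either. Those formulas turn diagonal line projectors, i.e.\ selected root edges, into root-edge cells, whereas a distance-\(d\) pair is carried by off-diagonal witness pairs. On an equal fiber, (A.21) gives only \(Z_{uu}=Z_{vv}\) and \(Z_{uv}=Z_{vu}\). It gives no symmetry between the near and far endpoints of \(e\) relative to \(f\), which lie at different distances from the far endpoint of \(f\). So (A.20) cannot resolve the orientation you flag as delicate. Relatedly, under the pattern \((d,d-1,d-1,d-2)\), no endpoint of \(e\) is at distance \(d-1\) from both endpoints of \(f\).

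The missing idea is that no orientation is needed: select the profile on the line side before lifting. Compressions give sums over the four endpoint pairs, not multisets of profiles, so isolate the far pair by subtraction. For \(Z=\sum_ia_iZ_i\), put \(z_\mu=\sum_ia_i\mu_i\); the lower restrictions \(L_{s,Z}=\sum_{\mu}z_\mu P^G_{s,\mu}\) for \(s<d\) are available by induction. Then \(M_{d,Z}=B_G^\top ZB_G-\sum_{s<d}B_G^\top L_{s,Z}B_G\in\mathcal A_{L(G)}\) equals \(Z_{uv}\) at each witness pair whose distance-\(d\) endpoint pair is \((u,v)\).

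Now Hadamard-select in \(\mathcal A_{L(G)}\) the witness pairs with \(M_{d,Z_i}=\lambda_i\) for all \(i\). Lift with unsigned \(B_G\), and subtract the lower-distance part of the result exactly as you did for \(N\). The remainder lies in \(\mathcal W_G\), hence is constant on profile cells, and is supported on the single cell of profile \(\lambda\) at distance \(d\). It therefore equals \(w_{d,\lambda}P^G_{d,\lambda}\) for one scalar that is the same on \(H\), and the geodesic witness makes this scalar positive whenever the cell is nonempty on either side. Dividing by it constructs and transports both cells, zero cells included. This makes \(E^G_d\), (A.22), and Lemma~\ref{lem:bt-equal-cancellation} unnecessary at this stage.
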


\begin{proof}
Let
\[
 (\Gamma_s^G)_{uv}=\one[\operatorname{dist}_G(u,v)=s],
 \qquad
 (\Gamma_s^H)_{u'v'}=\one[\operatorname{dist}_H(u',v')=s]
\]
be the \(0\)-\(1\) distance matrices.  The assertion holds for
\(s=0,1\) by Lemma~\ref{lem:bt-base-cells}.  Suppose all profile cells at
distances below \(d\ge2\) have been constructed and transported.  Then
\(\Gamma_s^G=\sum_{\mu\in\Lambda} P^G_{s,\mu}\) and
\(\Gamma_s^H=\sum_{\mu\in\Lambda} P^H_{s,\mu}=\Omega(\Gamma_s^G)\) for \(s<d\).  Hence
\[
 \Gamma_{\ge d}^G=J_{V(G)}-\sum_{s<d}\Gamma_s^G\in\mathcal W_G,
 \qquad \Gamma_{d-2}^G\in\mathcal W_G.
\]
Define the two transported line-algebra matrices
\[
 N_d^{L(G)}=B_G^\top \Gamma_{d-2}^G B_G,
 \qquad
 T_d^{L(G)}=B_G^\top \Gamma_{\ge d}^G B_G.
\]
Their \((e,f)\)-entries count, respectively, the ordered endpoint pairs
\((x,y)\in e\times f\) at distance exactly \(d-2\) and at distance at
least \(d\).  Define the corresponding matrices for \(H\).  Equation
(A.16) gives
\(\varphi(N_d^{L(G)})=N_d^{L(H)}\) and
\(\varphi(T_d^{L(G)})=T_d^{L(H)}\).

\emph{Witness edge-pair identification.}
Finite level-set interpolation in the stable line algebra gives the
transported \(0\)-\(1\) matrix
\[
 Q_d^{L(G)}
 =\one[N_d^{L(G)}>0]\circ\one[T_d^{L(G)}=1]
 \in\mathcal A_{L(G)},
 \qquad
 Q_d^{L(H)}=\varphi(Q_d^{L(G)}).
\]
Thus \(Q_d^{L(G)}\) is a union of stable line-graph basis relations, rather than
necessarily a single basis relation.

If \((Q_d^{L(G)})_{ef}=1\), choose \(x\in e,y\in f\) with
\(\operatorname{dist}_G(x,y)=d-2\), which is possible because
\((N_d^{L(G)})_{ef}\ne0\).  Moving across either root edge changes distance
by at most one, so every pair in \(e\times f\) has distance at most \(d\).
The equality \((T_d^{L(G)})_{ef}=1\) now says that exactly one ordered
endpoint pair has distance \(d\).  We call \((e,f)\) a \emph{witness edge
pair} for this unique distance-\(d\) endpoint pair.  The same argument applies
on \(H\).

Conversely, every ordered pair \((u,v)\) at distance \(d\) has a witness edge
pair.  Take a geodesic \(u=x_0,x_1,\ldots,x_d=v\) and put
\(e=x_0x_1\), \(f=x_{d-1}x_d\).  The four endpoint distances are
\(d,d-1,d-1,d-2\): the displayed geodesic gives the upper bounds, and the
triangle inequality with \(\operatorname{dist}_G(u,v)=d\) gives the matching
lower bounds.  Hence \((N_d^{L(G)})_{ef}\ne0\) and
\((T_d^{L(G)})_{ef}=1\).  This also covers \(d=2\), where
\(x_1=x_{d-1}\).

\emph{Profile readout.}
For \(Z=\sum_i a_iZ_i\in\mathcal W_G\), put
\(z_\mu=\sum_i a_i\mu_i\) for \(\mu\in\Lambda\).  For every \(s<d\), the
already constructed and transported matrix
\[
 L_{s,Z}^G=\sum_{\mu\in\Lambda}z_\mu P^G_{s,\mu}
\]
is entrywise equal to \(Z\circ \Gamma_s^G\), but its membership in
\(\mathcal W_G\) follows directly from the induction hypothesis, without
assuming Hadamard closure.  Define \(L_{s,\Omega Z}^H\) by the same
coefficient formula; then
\(L_{s,\Omega Z}^H=\Omega(L_{s,Z}^G)\).  Now form
\[
 M_{d,Z}^{L(G)}
 =B_G^\top ZB_G-\sum_{s<d}B_G^\top L_{s,Z}^G B_G.
 \tag{A.23}
\]
Hence
\((M_{d,Z}^{L(G)})_{ef}=Z_{uv}\) when \((u,v)\) is the unique distance-\(d\)
endpoint pair associated with \((e,f)\).  Compression compatibility shows
that the analogous matrix on
\(H\) is
\(M_{d,\Omega Z}^{L(H)}=\varphi(M_{d,Z}^{L(G)})\).
Each term of (A.23) belongs to \(\mathcal A_{L(G)}\) by (A.16), so
\(M_{d,Z}^{L(G)}\) is a transported line-algebra matrix.

Fix a profile value
\(\lambda=(\lambda_1,\ldots,\lambda_t)\), and let
\[
 (K_{d,\lambda}^{L(G)})_{ef}
 =(Q_d^{L(G)})_{ef}\prod_{i=1}^t
   \one[(M_{d,Z_i}^{L(G)})_{ef}=\lambda_i].
\]
This is again a union of stable basis relations.  Its transported mate
\[
 K_{d,\lambda}^{L(H)}
 =\varphi(K_{d,\lambda}^{L(G)})
\]
selects exactly the witness edge pairs whose unique distance-\(d\) endpoint
pair has complete profile \(\lambda\).  By the interpolation convention, an unattained requested
coordinate gives the zero selector on that side.

\emph{Multiplicity correction.}
Put
\[
 F_{d,\lambda}^G=B_GK_{d,\lambda}^{L(G)}B_G^\top.
\]
This matrix belongs to \(\mathcal W_G\), and its \(H\)-side counterpart is
its image under \(\Omega\).  The matrix
\(F_{d,\lambda}^G\) is supported on root distances at most \(d\), because
every endpoint pair of a selected witness edge pair has distance at most
\(d\).  If
\(F_{d,\lambda}^G=\sum_i a_iZ_i\), put
\(f_\mu=\sum_i a_i\mu_i\) for \(\mu\in\Lambda\).  The induction hypothesis
constructs and transports the lower-layer matrix
\[
 L_{<d,\lambda}^G
 =\sum_{s<d}\sum_{\mu\in\Lambda}f_\mu P^G_{s,\mu}.
\]
Entrywise this matrix equals
\(\sum_{s<d}(F_{d,\lambda}^G\circ \Gamma_s^G)\); this equality is an observation
about entries, not an appeal to Hadamard closure.  For a root pair at
distance below \(d\), the lower-distance sum removes the corresponding
entry of \(F_{d,\lambda}^G\); above distance \(d\), that entry is zero by the
support observation.  At distance \(d\), every selected witness edge pair
contributing to the \((u,v)\)-entry has \((u,v)\) as its unique
distance-\(d\) endpoint pair.  The entry is then zero unless its profile is
\(\lambda\), and otherwise counts those witness edge pairs.  Since
\(F_{d,\lambda}^G\in\mathcal W_G\), its value on the complete profile
\(\lambda=(\lambda_1,\ldots,\lambda_t)\) is the explicit scalar
\[
 w_{d,\lambda}:=\sum_i a_i\lambda_i.
\]
The transported matrix is \(\sum_i a_i\Omega Z_i\), so it has the same scalar
on the transported \(H\)-profile.  We therefore have the
entrywise identities
\[
 \widehat F_{d,\lambda}^G
 :=F_{d,\lambda}^G-L_{<d,\lambda}^G
 =w_{d,\lambda}P^G_{d,\lambda},
 \qquad
 \Omega(\widehat F_{d,\lambda}^G)
 =w_{d,\lambda}P^H_{d,\lambda}.
 \tag{A.24}
\]
This argument does not use Hadamard closure of \(\mathcal W_G\): each
lower-distance restriction in the definition of \(\widehat F\) was defined
first as the already constructed profile-cell sum displayed above.  If
\(P^G_{d,\lambda}\ne0\), the geodesic construction above gives a selected
witness edge pair, so \(w_{d,\lambda}>0\).  If instead
\(P^H_{d,\lambda}\ne0\), the identical argument on \(H\) gives the same
conclusion for the common scalar.  Dividing (A.24) then constructs both cells
at once:
\[
 P^G_{d,\lambda}
 =\frac1{w_{d,\lambda}}\widehat F_{d,\lambda}^G,
 \qquad
 P^H_{d,\lambda}
 =\frac1{w_{d,\lambda}}\Omega(\widehat F_{d,\lambda}^G).
\]
If the cell occurs on neither root, both cell matrices are zero.  This proves
the induction step for every \(\lambda\in\Lambda\), including empty cells, and
strong induction establishes the lemma for all root distances.
\end{proof}

\subsection{Completion of the proof}

\begin{proof}[Proof of Theorem~\ref{thm:backward-transfer}]
If either root is \(K_1,K_2\), the paw, or the diamond,
Lemma~\ref{lem:bt-direct-roots} proves the result.  We may therefore use the
standing hypotheses of the general construction above.

The nonempty profile cells from
Lemma~\ref{lem:bt-distance-induction} are pairwise disjoint \(0\)-\(1\)
matrices and sum to \(J_{V(G)}\).  Every matrix in \(\mathcal W_G\) is its
profile-value-weighted sum of these cells, while every cell belongs to
\(\mathcal W_G\).  Disjoint nonzero \(0\)-\(1\) cells are linearly
independent, so these two containments show that they form a basis of
\(\mathcal W_G\), and
\[
 P^G_{d,\lambda}\circ P^G_{d',\mu}
 =\one[(d,\lambda)=(d',\mu)]P^G_{d,\lambda}.
\]
Thus \(\mathcal W_G\) is Hadamard closed, and the same holds for
\(\mathcal W_H\).  Since \(\Omega\) maps the profile cells and maps
\(I_{V(G)},J_{V(G)},A_G\) to \(I_{V(H)},J_{V(H)},A_H\), it preserves
Hadamard products as well as the operations in
Lemma~\ref{lem:bt-trace-transport}.

We now restrict \(\Omega\) to the stable root algebras by proving the two
required inclusions separately.  Let
\(\mathcal A_G\) and \(\mathcal A_H\) denote the stable root pair algebras
from the first subsection.  Since \(\mathcal W_G\) is now a coherent matrix
algebra containing \(I_{V(G)},J_{V(G)},A_G\), minimality gives
\(\mathcal A_G\subseteq\mathcal W_G\); similarly
\(\mathcal A_H\subseteq\mathcal W_H\).  Consider the preimage
\[
 \mathcal B_G
 =\{Z\in\mathcal W_G:\Omega(Z)\in\mathcal A_H\}.
\]
Preservation of ordinary and Hadamard multiplication, transpose, and
\(I,J,A\) makes \(\mathcal B_G\) a coherent matrix algebra containing
\(I_{V(G)},J_{V(G)},A_G\).  Minimality therefore gives
\(\mathcal A_G\subseteq\mathcal B_G\), or
\(\Omega(\mathcal A_G)\subseteq\mathcal A_H\).  Applying the same argument
to
\[
 \mathcal B_H
 =\{Y\in\mathcal W_H:\Omega^{-1}(Y)\in\mathcal A_G\}
\]
gives \(\Omega^{-1}(\mathcal A_H)\subseteq\mathcal A_G\), equivalently
\(\mathcal A_H\subseteq\Omega(\mathcal A_G)\).  Hence \(\Omega\) restricts
to an isomorphism
\[
 \Psi:\mathcal A_G\longrightarrow\mathcal A_H
\]
preserving ordinary and Hadamard multiplication, transpose, trace, and
\(I,J,A\).

It remains to verify the basis-bijection hypothesis of
Lemma~\ref{lem:pair-algebra-bridge}.  Let \(\mathcal C_G^{\rm st},
\mathcal C_H^{\rm st}\) be the
stable root-color sets and expand
\[
 \Psi(R_\rho^G)=\sum_{\sigma\in\mathcal C_H^{\rm st}}
   a_{\rho\sigma}R_\sigma^H.
\]
Because \(R_\rho^G\circ R_\rho^G=R_\rho^G\), Hadamard preservation gives
\(a_{\rho\sigma}^2=a_{\rho\sigma}\), so every coefficient is \(0\) or
\(1\).  Images of distinct source relations are Hadamard-orthogonal, and
\[
 \sum_{\rho\in\mathcal C_G^{\rm st}}\Psi(R_\rho^G)
 =\Psi(J_{V(G)})=J_{V(H)}.
\]
Consequently each target relation \(R_\sigma^H\) occurs in exactly one of
the displayed images.  Injectivity of \(\Psi\) makes every source image
nonzero.  Finally, the relation matrices are bases and \(\Psi\) is a linear
isomorphism, so
\[
 |\mathcal C_G^{\rm st}|=\dim\mathcal A_G
 =\dim\mathcal A_H=|\mathcal C_H^{\rm st}|.
\]
Thus the partition of \(\mathcal C_H^{\rm st}\) into the nonempty supports of the
source images has singleton parts.  There is a bijection
\(\pi:\mathcal C_G^{\rm st}\to\mathcal C_H^{\rm st}\) with
\(\Psi(R_\rho^G)=R_{\pi(\rho)}^H\).  All the remaining hypotheses of
Lemma~\ref{lem:pair-algebra-bridge} are inherited from \(\Psi\); in
particular relation sizes agree because
\(|R_\rho^G|=\operatorname{tr}(R_\rho^G(R_\rho^G)^\top)\).
The lemma now gives \(G\equiv_{3\text{-WL}}H\).
\end{proof}

\subsection{Componentwise extension to disconnected roots}
\label{app:componentwise-backward-extension}

We now remove connectedness without changing the connected reconstruction
above.  For a graph \(X\), let \(\operatorname{Comp}(X)\) denote its set of
connected components, including isolated vertices.

\begin{lemma}[Component matching for stable pair refinement]
\label{lem:component-parabolic-matching}
If finite graphs \(X,Y\) satisfy \(X\equiv_{3\text{-WL}}Y\), then their
components admit a bijection
\(\beta:\operatorname{Comp}(X)\to\operatorname{Comp}(Y)\) such that
\[
  X[C]\equiv_{3\text{-WL}}Y[\beta(C)]
  \qquad(C\in\operatorname{Comp}(X)).
\]
Conversely, any such componentwise bijection implies
\(X\equiv_{3\text{-WL}}Y\).
\end{lemma}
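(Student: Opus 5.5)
We work throughout with the stable pair (2-FWL) coherent algebra and Lemma~\ref{lem:pair-algebra-bridge}. By the arity shift, \(X\equiv_{3\text{-WL}}Y\) supplies a basis bijection \(\pi\) and a map \(\varphi:\mathcal A_X\to\mathcal A_Y\) preserving all the listed operations.

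\emph{Step 1: the connectivity relation lies in the algebra.} The reflexive--transitive closure of adjacency, \(\mathrm{Conn}_X=\one[(I+A_X)^{n}>0]\), belongs to \(\mathcal A_X\). Ordinary powers and the Hadamard-interpolated level set \(\one[\cdot>0]\) stay inside the algebra. They are also transported, using the common interpolation convention and \(n\) large enough for both graphs. So \(\mathrm{Conn}_X\) is a union of basis relations, and \(\varphi(\mathrm{Conn}_X)=\mathrm{Conn}_Y\). Likewise, each distance matrix \(\Gamma_s\) is a union of basis relations.

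\emph{Step 2: a fiber-level description of components.} For a vertex \(x\), the stable diagonal color of \((x,x)\) determines the multiset of stable colors of the pairs \((x,y)\) with \(y\) in its component. These are exactly the colors \(\rho\) with \(R_\rho\subseteq\mathrm{Conn}_X\) and source fiber equal to that diagonal color, each counted with its out-valency. Next, define a component's \emph{type} as the multiset of diagonal colors on its vertices, together with the restricted algebra data. The key claim is that the stable pair coloring of \(X\) restricted to \(C\times C\) is exactly the stable coloring of the induced subgraph \(X[C]\), up to renaming. The reason is that replacement entries \((c(z,y),c(x,z))\) with \(z\) outside \(C\) carry colors whose relations are disjoint from \(\mathrm{Conn}\). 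These colors are determined by the diagonal colors of \(x\) and \(y\) together with the global data. I would prove this by running the two refinements side by side and inducting on rounds, showing that the coloring of \(X\) on \(C\times C\) refines and is refined by that of \(X[C]\). The refined direction uses Step 1.

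\emph{Step 3: counting to build \(\beta\).} Group the components of \(X\) by the stable signature of \(X[C]\). Within one component \(C\), any diagonal color \(\alpha\) occurring in \(C\) determines the whole signature of \(X[C]\): by Step 2, the colors of pairs \((x,y)\) with \(y\in C\) are determined by the color of \((x,x)\). Hence the diagonal fibers partition into groups, one per component signature. The number of components with a given signature equals the fiber size \(\operatorname{tr}(D_\alpha)\) divided by the number of \(\alpha\)-vertices per component. Both quantities are transported by \(\varphi\): traces are preserved, and the per-component count is an out-valency inside \(\mathrm{Conn}\). So the component multisets agree signature by signature, which yields \(\beta\).

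\emph{Step 4: the converse.} Given \(\beta\), I would colour pairs of \(X\) by (signature of \(X[C]\), stable colour inside \(X[C]\)) for pairs within a component. Cross pairs get the unordered data of the two component colours of their endpoints. I would then check that this coloring is stable and matches \(Y\) under \(\beta\), so it is at least as coarse as a common stable coloring with equal histograms. Alternatively, construct \(\varphi\) blockwise on the disjoint union and verify the hypotheses of Lemma~\ref{lem:pair-algebra-bridge} directly.

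The main obstacle is Step 2, specifically the claim that intra-component stable colors of \(X\) coincide with those of \(X[C]\). The cross-component replacement entries must contribute only information already determined by the diagonal colors, and avoiding circularity requires the round-by-round induction to be set up carefully.
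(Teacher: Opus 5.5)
Your route differs from the paper's and is viable. As written, though, it has a gap where Step~2 feeds Step~3, and an error in Step~4.

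\textbf{Comparison with the paper.} The paper stays inside coherent-configuration theory. The equivalence closure \(e_X=\langle I_X\cup A_X\rangle\) is a parabolic, and algebraic isomorphisms transport it. The cited decomposition of partial parabolics lets \(\varphi\) restrict to algebraic isomorphisms \(\mathfrak X_C\to\mathfrak Y_D\) on paired classes. Minimality then carries the coherent closure of \(X[C]\) onto that of \(Y[D]\). So the paper never identifies the restricted configuration with the component's own stable colouring. Its converse is a disjoint union of Duplicator strategies in the bijective pebble game.

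Your Step~1 is \(\varphi(e_X)=e_Y\) proved by hand. Steps~2--3 replace the parabolic machinery by a stronger claim: the restriction to \(C\times C\) \emph{is} the component's stable colouring. You then count components of each type from \(\operatorname{tr}(D_\alpha)\) and out-valencies inside \(\mathrm{Conn}\). This is self-contained and more explicit, but Step~2 now carries all the weight.

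\textbf{The gap between Steps~2 and~3.} ``The same partition of \(C\times C\), up to renaming'' per component does not support Step~3. Step~3 compares diagonal colours across different components and across \(X\) and \(Y\). You need a single function, the same for every component of both graphs, from the jointly computed stable colour of a within-component pair to its colour in the joint shared-encoding refinement of all components. You also need a converse within each graph.

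A round-by-round induction cannot give this. In \(P_2\sqcup P_3\), after one round the leaves of both paths have the same global diagonal colour, since it records only the degree and \(|V|\). Their local colours differ, since they record the component order. Argue with stable partitions instead:
\begin{enumerate}
\item[(i)] Since \(\mathrm{Conn}\) is a union of stable colours, the entries with \(z\) in the component of \(x\) are exactly those whose second coordinate lies in \(\mathrm{Conn}\). So the joint stable colouring restricted to within-component pairs is stable for the component-local update. Hence it refines the joint component-local stable colouring.
\item[(ii)] Conversely, colour a within-component pair by (component signature, local colour). Colour a cross pair \((x,z)\) by the \emph{ordered} tuple (signature of the component of \(x\), local diagonal colour of \(x\), and the same two entries for \(z\)). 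This colouring refines the initial types and is stable for the global update. So the global stable colouring is a function of it.
\end{enumerate}

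\textbf{How Step~3 then goes through.} Use (i) together with the fact that a stable colour determines its replacement multiset. The row of \(x\) then yields every column, hence all of \(C\times C\), which gives ``\(\alpha\) determines the signature''. Use (ii) to show that every component of that signature contains \(\alpha\)-vertices. Then \(\operatorname{tr}(D_\alpha)\) divided by the per-component count really counts all of them.

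\textbf{The error in Step~4.} The cross-pair data must be ordered. With unordered data, \((x,z)\) and \((z,x)\) receive one colour, although their replacement multisets differ whenever \(d(x)\ne d(z)\). So the proposed colouring is not stable. With ordered data, (ii) run jointly on \(X\) and \(Y\) under \(\beta\) proves the converse.
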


\begin{proof}
Assume first that the graphs are nonempty.  Let \(\mathfrak X,\mathfrak Y\)
be their stable-pair coherent configurations and let
\(\varphi:\mathfrak X\to\mathfrak Y\) be the trace-preserving algebraic
isomorphism induced by joint refinement, with
\(\varphi(A_X)=A_Y\).  The least equivalence relation
\[
 e_X=\langle I_X\cup A_X\rangle
\]
has exactly the connected components of \(X\) as its classes.  It is a
parabolic of \(\mathfrak X\), and algebraic isomorphisms transport this
equivalence closure, so \(\varphi(e_X)=e_Y\)
\citep[Section~2.7, equation~(5)]{ponomarenko2023wl}.

Use the unique decomposition of a partial parabolic into indecomposable
partial parabolics.  The isomorphism \(\varphi\) bijects these summands and
preserves their numbers of classes; after pairing their classes, it restricts
on any paired classes \(C,D\) to an algebraic isomorphism of the corresponding
restricted coherent configurations
\citep[Definition~2.1.20, Lemma~2.1.21, Proposition~2.3.25, and
Exercise~2.7.31]{chen2019coherent}.  This restriction maps
\(A_X|_{C\times C}\) to \(A_Y|_{D\times D}\).  It therefore maps the smallest
coherent algebra generated by identity, all-ones, and adjacency on \(X[C]\)
into the corresponding stable pair algebra on \(Y[D]\); applying the inverse
restriction gives the reverse inclusion, hence equality.  The restricted
isomorphism also preserves relation multiplicities.  The coherent-algebra
characterization from
Lemma~\ref{lem:pair-algebra-bridge} gives
\(X[C]\equiv_{3\text{-WL}}Y[D]\).  Repeating this for every class yields
\(\beta\).

For the converse, use the bijective-pebble characterization of counting
logic.  In each round, take the disjoint union of Duplicator's bijections for
the matched component games.  Pebbled vertices in different components stay
in different matched components, so equality and adjacency are preserved.
The empty-graph case is immediate.
\end{proof}

\begin{proof}[Proof of Theorem~\ref{thm:componentwise-backward}]
Apply Lemma~\ref{lem:component-parabolic-matching} to the assumed equivalence
of \(L(G)\) and \(L(H)\).  The line graph of a connected component containing
an edge is connected, and each connected line-graph component arises this way.
Because \(G\) and \(H\) have no isolated vertices, this gives a bijection
between their root components.  For each matched pair \(C,D\),
\[
  L(C)\equiv_{3\text{-WL}}L(D).
\]
Both \(C\) and \(D\) are Whitney-general, so
Theorem~\ref{thm:backward-transfer} gives
\(C\equiv_{3\text{-WL}}D\).  The converse direction of
Lemma~\ref{lem:component-parabolic-matching} now yields
\(G\equiv_{3\text{-WL}}H\).
\end{proof}

\section{Four-Clique Counting and Incomparability Witnesses}
\label{app:finite-certificates}

\subsection{Universal four-clique count}
\label{app:k4-counting-proof}

The initial pair color distinguishes nonadjacent line vertices, which for a
simple root are distinct disjoint edges.  In the first global update of an
ordered pair \((e,f)\), the multiplicity of the replacement entry
\((\mathrm{adj},\mathrm{adj})\) is exactly
\(\lvert N_{L(G)}(e)\cap N_{L(G)}(f)\rvert\); hence the graph-level histogram
determines the set in Corollary~\ref{cor:universal-k4-counting}.

Write \(e=ab\) and \(f=cd\).  Their common line neighbors are precisely the
present edges among \(ac,ad,bc,bd\), so the count is four exactly when the
four endpoints induce a \(K_4\).  Conversely, every \(K_4\) has three
unordered pairs of opposite edges, hence six ordered pairs, and the endpoint
union identifies the clique uniquely.  This proves the formula.  The Lean
development separately checks the local equivalence, the six-to-one fiber
count, the first-round histogram implication, and the global-edge ILG-\(3\)
corollary.

\subsection{Exact witnesses for comparison with root
\texorpdfstring{\(4\)-WL}{4-WL}}
\label{app:four-wl-incomparability}

We verify the two directions in Remark~\ref{rem:four-wl-incomparable} using
exact joint \(3\)-FWL on the roots and exact joint \(2\)-FWL on their line
graphs.  Refinement is collision-safe, uses one shared color namespace for
each compared pair, and runs until separation or stability.  The color traces,
including the atomic round, are
\[
\begin{array}{@{}lcc@{}}
\toprule
\text{witness roots} & \text{root }4\text{-WL} & \text{ILG-}3\text{-WL} \\
\midrule
\operatorname{STS}(13)\text{ incidence pair}
  & 14\to67\to101\;(\text{separates})
  & 3\to5\to7\to7\;(\text{stable equal}) \\
K_5\text{ compressed parity pair}
  & 15\to43\to43\;(\text{stable equal})
  & 3\to8\to46\;(\text{separates}) \\
\bottomrule
\end{array}
\]

For the first pair, translate the base blocks \(\{0,1,4\}\) and
\(\{0,2,7\}\) modulo \(13\), then apply the Pasch trade
\[
014,027,179,249\ \longleftrightarrow\ 017,024,149,279.
\]
The two systems have respectively \(13\) and \(8\) Pasch configurations, so
their connected incidence roots, each with \((n,m)=(39,78)\), are
nonisomorphic.  Figure~\ref{fig:incomparability-sts} shows the two roots under
a shared cyclic layout.

\begin{figure}[H]
  \centering
  \includegraphics[width=.78\linewidth]{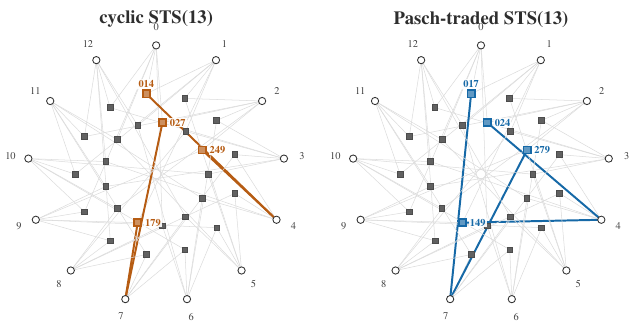}
  \caption{The \(\operatorname{STS}(13)\) incidence witness.  Each panel is a
  root graph: hollow circles are points, squares are blocks, and an edge means
  that the point belongs to the block.  Under the displayed alignment, \(74\)
  light-gray edges are shared; the Pasch trade removes the four orange edges
  and adds the four blue edges.  Root \(4\)-WL separates the pair, whereas
  ILG-\(3\)-WL does not.}
  \label{fig:incomparability-sts}
\end{figure}

For the reverse pair, replace every vertex of \(K_5\) by the
eight parity-constrained bitstrings on its four incident base edges, joining
states in adjacent gadgets when their shared-edge bits agree.  One root uses
even parity in every gadget; the other uses odd parity at one base vertex.
Both are connected \(16\)-regular graphs with \((n,m)=(40,320)\); the first
contains \(64\) copies of \(K_5\) and the second contains none.
Figure~\ref{fig:incomparability-parity} shows these roots under a shared
five-gadget layout.

\begin{figure}[H]
  \centering
  \includegraphics[width=.78\linewidth]{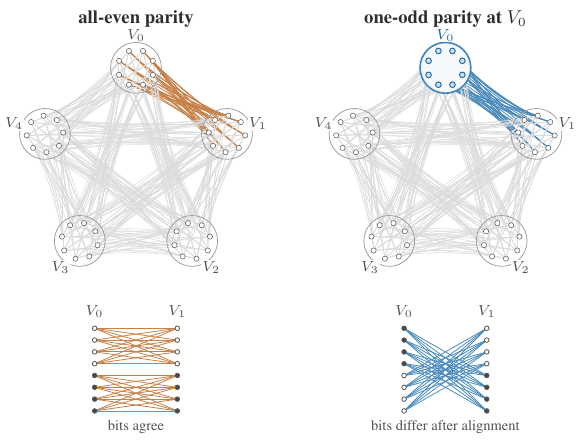}
  \caption{The compressed parity witness over \(K_5\).  Each \(V_i\) contains
  eight bitstring states, shown as graph vertices; states in two gadgets are
  adjacent when their bits for the corresponding base edge agree.  The left
  root uses even parity in every gadget, while the right root uses odd parity
  in \(V_0\).  Under the displayed alignment, \(288\) gray edges are shared;
  the \(32\) orange edges are removed and the \(32\) blue edges are added.
  Each main panel shows all \(40\) vertices and \(320\) edges, and the inset
  enlarges the changed \(V_0\)--\(V_1\) bundle.  ILG-\(3\)-WL separates the
  pair, whereas root \(4\)-WL does not.}
  \label{fig:incomparability-parity}
\end{figure}

All four
roots are therefore nonisomorphic, connected, simple, and Whitney-general.
The reproduction script
\path{experiments/deterministic/run_incomparability_certificate.py}
reconstructs both witness pairs and reproduces these refinement traces.

\section{Machine Verification}
\label{app:machine-verification}

The accompanying Lean~4 development formalizes finite simple undirected
graphs, line graphs, collision-free WL and folklore-WL refinements, the direct
and global-edge ILG definitions, and the message-passing model used in the
neural expressivity theorem.  Top-level import files organize the
formalization into line-graph correspondences, first-order results,
third-order results, and neural expressivity.

The formalized results include the ILG/line-graph refinement and GNN
correspondences, forward transfer, the equal-order first-order counterexample,
the universal factor-six four-clique identity and its global-edge ILG-\(3\)
consequence, the Shrikhande--rook separation, the connected completion with the
Whitney exception, and the connected and componentwise backward-transfer
theorems.  The formalization of backward transfer follows the
reconstruction argument in Appendix~\ref{app:backward-transfer-proof} and
proves the relation between pair refinement and the separate-coordinate
\(3\)-WL formulation used in the paper.  It also formalizes the lower- and
upper-bound directions for global-edge ILG-GNN expressivity.
Theorem~\ref{thm:backward-transfer} is stated in the stable-signature form used
in the paper and proved equivalent to the all-round collision-free formulation
used internally.  The Lean development covers the \(k=3\) arity-shift instance
used by the backward theorems; the general arity shift and the cited \(k=2\)
consequence are taken from the literature.

Lean checks all modules containing the stated theorems.  The development
contains no \texttt{sorry}, \texttt{admit}, or custom axiom declarations.  A
separate Lean file records the axioms used by the main results.  The top-level
imports, build instructions, and theorem index mapping each numbered paper
result and Appendix~A lemma to its Lean declaration are available in the
accompanying repository:
\url{https://github.com/lukeyf/ilg-wl}.

\section{Proofs of ILG Equivalence and Expressivity Results}

\subsection{ILG-\texorpdfstring{\(1\)}{1}-GNN Equals MPNN on \texorpdfstring{\(L(G)\)}{L(G)}}
\label{app:ilg1gnn-linegraph-mpnn}

We give the detailed proof of
Lemma~\ref{lem:ilg1gnn-linegraph-mpnn}. Let \(G\) be a finite simple
undirected graph. Under the canonical identification
\(E(G)=V(L(G))\), define the ILG neighborhood of \(e\in E(G)\) by
\[
    N_{\mathrm{ILG}}(e)
    =
    \{f\in E(G): f\neq e,\ e\cap f\neq\emptyset\}.
\]
By the definition of the line graph,
\[
    N_{\mathrm{ILG}}(e)=N_{L(G)}(e).
    \tag{D.1}
\]

Consider a \(T_{\mathrm{net}}\)-layer MPNN on \(L(G)\) of the form
\[
    h_e^{(\ell+1)}
    =
    U_\ell\left(
        h_e^{(\ell)},
        \operatorname{AGG}_\ell
        \left\{
            M_\ell\left(h_e^{(\ell)},h_f^{(\ell)},
                         \omega^{L(G)}_{ef}\right)
            : f\in N_{L(G)}(e)
        \right\}
    \right),
\]
where \(M_\ell\) is the message map, \(U_\ell\) is the update map,
\(\operatorname{AGG}_\ell\) is permutation invariant, and
\(\omega^{L(G)}_{ef}\) is an optional edge feature on the line graph. In the
untyped case it is constant. If typed line-graph adjacencies are used, assume
the ILG layer uses the corresponding feature
\[
    \omega^{\mathrm{ILG}}_{ef}=\omega^{L(G)}_{ef}
    \quad
    \text{for all } f\in N_{\mathrm{ILG}}(e).
    \tag{D.2}
\]
The matching ILG-\(1\)-GNN on \(G\) is
\[
    \tilde h_e^{(\ell+1)}
    =
    U_\ell\left(
        \tilde h_e^{(\ell)},
        \operatorname{AGG}_\ell
        \left\{
            M_\ell\left(\tilde h_e^{(\ell)},\tilde h_f^{(\ell)},\omega^{\mathrm{ILG}}_{ef}\right)
            : f\in N_{\mathrm{ILG}}(e)
        \right\}
    \right).
\]
Both models are initialized by the same edge features:
\[
    h_e^{(0)}=\tilde h_e^{(0)}=\phi_0(q_e)
    \quad
    \text{for all } e\in E(G).
\]

We prove by induction that
\[
    h_e^{(\ell)}=\tilde h_e^{(\ell)}
    \quad
    \text{for all } e\in E(G)
    \tag{D.3}
\]
at every layer \(\ell\). The base case \(\ell=0\) is the shared initialization above. Suppose (D.3) holds at layer \(\ell\). For a fixed edge \(e\), the MPNN message multiset at the line-graph vertex \(e\) is
\[
    \mathcal{M}^{L(G)}_e
    =
    \left\{\!\left\{
        M_\ell\left(
            h_e^{(\ell)},
            h_f^{(\ell)},
            \omega^{L(G)}_{ef}
        \right)
        : f\in N_{L(G)}(e)
    \right\}\!\right\}.
\]
Using the neighborhood identity (D.1), the feature identity (D.2), and the induction hypothesis, this multiset is equal to
\[
    \mathcal{M}^{\mathrm{ILG}}_e
    =
    \left\{\!\left\{
        M_\ell\left(
            \tilde h_e^{(\ell)},
            \tilde h_f^{(\ell)},
            \omega^{\mathrm{ILG}}_{ef}
        \right)
        : f\in N_{\mathrm{ILG}}(e)
    \right\}\!\right\}.
\]
Because \(\operatorname{AGG}_\ell\) is a function of the multiset rather than the ordering of its elements,
\[
    \operatorname{AGG}_\ell(\mathcal{M}^{L(G)}_e)
    =
    \operatorname{AGG}_\ell(\mathcal{M}^{\mathrm{ILG}}_e).
\]
Applying the same update map \(U_\ell\) gives
\[
    h_e^{(\ell+1)}
    =
    \tilde h_e^{(\ell+1)}.
\]
Thus (D.3) holds at layer \(\ell+1\), and induction proves the layerwise equivalence.

If the implementation stores line-graph vertices in an ordering different
from the edge ordering used by the ILG layer, let \(\mathsf P_E\) be the
corresponding permutation matrix. The same argument gives
\[
    \mathsf Z_{L(G)}^{(\ell)}=\mathsf P_E\widetilde{\mathsf Z}^{(\ell)}
    \quad
    \text{for all }\ell.
\]
Any graph readout that is invariant to the order of vertices or edges
therefore satisfies
\[
    \operatorname{READOUT}\left(\{h_e^{(T_{\mathrm{net}})}:
                                   e\in V(L(G))\}\right)
    =
    \operatorname{READOUT}\left(\{\tilde h_e^{(T_{\mathrm{net}})}:
                                   e\in E(G)\}\right).
\]
The equivalence is therefore exact for edge-only ILG-\(1\)-GNN.

\subsection{Atomic Types on the Line Graph}
\label{app:atomic-types}

Let \(G\) be simple and undirected. A vertex of \(L(G)\) is a root edge
\(e\in E(G)\). For ordered edge tuples
\(\mathbf{e}=(e_1,\ldots,e_k)\in E(G)^k\), the \emph{atomic type} of
\(\mathbf{e}\) as a tuple of vertices in \(L(G)\) is determined by equality
and adjacency among tuple coordinates:
\[
    \operatorname{atp}_{L(G)}(\mathbf{e})
    =
    \left(
        \one[e_i=e_j],
        \one[e_i\sim_{L(G)} e_j]
    \right)_{1\leq i,j\leq k}.
\]
Because \(e_i\sim_{L(G)} e_j\) if and only if \(e_i\neq e_j\) and
\(e_i\cap e_j\neq\emptyset\), this atomic type can be computed from the
root-edge endpoints without building \(L(G)\). For \(k=2\), equality and
adjacency collapse to the three-valued relation code
\[
    r_G(e,f)=0 \text{ if } e=f,\qquad
    r_G(e,f)=1 \text{ if } e\sim_{L(G)} f,\qquad
    r_G(e,f)=2 \text{ otherwise}.
\]
Thus every initial color used by \(k\)-WL on \(L(G)\) is an endpoint-incidence function on \(G\).

\begin{remark}[Completeness of the relation code for \(k=2\)]
\label{rem:relation-complete}
The three-valued code \(r_G(e,f)\) determines the atomic type of the ordered
pair \((e,f)\) in \(L(G)\). The only
line-graph relations between two vertices are equality, adjacency, and
non-adjacency. Since line-graph adjacency is symmetric for simple undirected
\(G\), the code is symmetric: \(r_G(e,f)=r_G(f,e)\). The initial atomic type
for an ordered pair is therefore
\((r_G(e,f),r_G(f,e))=(r_G(e,f),r_G(e,f))\), which is determined by
\(r_G(e,f)\) alone.
\end{remark}

\subsection{Forward Transfer: \texorpdfstring{\(1\)}{1}-WL Equivalence Implies Line-Graph \texorpdfstring{\(1\)}{1}-WL Equivalence}
\label{app:forward-transfer}

We prove Proposition~\ref{prop:forward-transfer}.

Suppose \(G\equiv_{1\text{-WL}} H\). We argue directly with the stable
\(1\)-WL (color-refinement) colorings; this is equivalent to the
fractional-isomorphism characterization
\citep{ramana1994fractional,grohe2017descriptive} and avoids constructing
an explicit doubly stochastic matrix on edges.

Let \(c_G\) and \(c_H\) be the stable \(1\)-WL vertex colorings of \(G\)
and \(H\). For stable colors \(a,b\), let \(q_G(a,b)\) be the number of
neighbors of color \(b\) at a vertex of color \(a\); define \(q_H\)
analogously. Define an induced coloring of the line graph by
\[
    \widehat c_G(e) = \multiset{c_G(u), c_G(v)}
    \qquad\text{for } e=\{u,v\}\in E(G),
\]
and likewise \(\widehat c_H\) on \(L(H)\).

\emph{\(\widehat c_G\) is equitable on \(L(G)\).} Fix
\(e=\{u,v\}\). Its neighbors in \(L(G)\) are the root edges sharing \(u\)
or \(v\), other than \(e\). For any target color \(\multiset{a,b}\), the
number of such edges \(f\) with
\(\widehat c_G(f)=\multiset{a,b}\) is determined by
\(c_G(u),c_G(v)\) and the quotient numbers \(q_G\), with the contribution of
\(e\) itself removed. It is therefore constant on each \(\widehat c_G\)-class.
Thus \(\widehat c_G\) is equitable; the same proof applies to
\(\widehat c_H\). We do not require these induced colorings to be the
coarsest equitable partitions.

\emph{Transfer.} Because \(G\equiv_{1\text{-WL}}H\), a stable-color
bijection identifies the class sizes and quotient numbers \(q_G,q_H\). It
therefore maps \(\widehat c_G\)-classes to \(\widehat c_H\)-classes while
preserving their sizes and their equitable quotients. Starting color
refinement from these matched equitable partitions gives identical stable
histograms; coarsening back to the uniform initialization cannot separate
the graphs either. Therefore
\(L(G)\equiv_{1\text{-WL}}L(H)\). \qed

\subsection{Equal-Order Failure of First-Order Backward Transfer}
\label{app:first-order-transfer}

For the roots in Proposition~\ref{prop:first-order-loss}, the different root degree histograms \(4^2 2^6\) and \(3^6 1^2\) prove \(1\)-WL inequivalence. To verify the line premise without relying on software, start line-graph color refinement from degrees. In both line graphs, two vertices have degree two and eight have degree four. The next round splits the degree-four vertices according to whether they have a degree-two neighbor, producing cells \(A,B,C\) of sizes \(2,4,4\). Their common quotient is
\[
 A:(0,2,0),\qquad B:(1,1,2),\qquad C:(0,2,2).
\]
This partition is equitable and therefore stable. Thus the two line graphs have identical stable color-class sizes and quotients, while the roots do not. This also explains why an argument based only on root order or inverse-line-graph uniqueness cannot prove first-order backward transfer.

\subsection{Equivalence of ILG-\texorpdfstring{\(k\)}{k}-WL and \texorpdfstring{\(k\)}{k}-WL on \texorpdfstring{\(L(G)\)}{L(G)}}
\label{app:ilg-kwl-proof}

We prove Proposition~\ref{prop:ilg-kwl}.

For \(k=1\), the ILG update uses exactly the line-neighbor set identified in
the proof of Lemma~\ref{lem:ilg1gnn-linegraph-mpnn}; hence its color
refinement is exactly \(1\)-WL on \(L(G)\). Assume for the remainder of
this subsection that \(k\geq2\).

Let
\(c_t^{L(G)},c_t^{\mathrm{ILG},G}:E(G)^k\to\mathcal C_t\) denote the two
colorings, run with the shared encoding convention of
Section~\ref{sec:background}. Both start from the same atomic type:
\[
 c_0^{L(G)}(\mathbf e)
 =c_0^{\mathrm{ILG},G}(\mathbf e)
 =\hash(\operatorname{atp}_{L(G)}(\mathbf e)).
\]

For \(\mathbf{e}=(e_1,\ldots,e_k)\) and \(g\in E(G)\), write
\[
    \mathbf{e}[i\leftarrow g]
    =
    (e_1,\ldots,e_{i-1},g,e_{i+1},\ldots,e_k).
\]
The separate-coordinate \(k\)-WL update on \(L(G)\) is
\[
    c_{t+1}^{L(G)}(\mathbf{e})
    =
    \hash\left(
        c_t^{L(G)}(\mathbf{e}),
        \left(
            \multiset{
                c_t^{L(G)}(\mathbf{e}[i\leftarrow g]) : g\in E(G)
            }
        \right)_{i=1}^{k}
    \right).
\]
The ILG-\(k\)-WL update is
\[
    c_{t+1}^{\mathrm{ILG},G}(\mathbf{e})
    =
    \hash\left(
        c_t^{\mathrm{ILG},G}(\mathbf{e}),
        \left(
            \multiset{
                c_t^{\mathrm{ILG},G}(\mathbf{e}[i\leftarrow g]) : g\in E(G)
            }
        \right)_{i=1}^{k}
    \right).
\]

The two displayed recurrences have the same domain, initialization,
replacement range, and update. Induction on \(t\) therefore gives
\[
 c_t^{L(G)}(\mathbf e)=c_t^{\mathrm{ILG},G}(\mathbf e)
 \qquad(\mathbf e\in E(G)^k)
\]
at every round. If two implementations use different injective color names,
the same induction instead supplies the canonical bijection between those
names.

\textbf{Graph-level consequence.} Since the tuple domains are identical and
the colors agree at every round, their graph-level signatures agree:
\[
    \multiset{c_t^{\mathrm{ILG},G}(\mathbf e):\mathbf e\in E(G)^k}
    =\multiset{c_t^{L(G)}(\mathbf e):\mathbf e\in E(G)^k}
    \quad\text{for all }t.
\]
Therefore any graph pair separated by \(k\)-WL on \(L(G)\) is separated by ILG-\(k\)-WL on \(G\), and conversely. \qed

\subsection{Equivalence of Global-Edge ILG-\texorpdfstring{\(k\)}{k}-WL and \texorpdfstring{\(k\)}{k}-WL on \texorpdfstring{\(L(G)\)}{L(G)}}
\label{app:global-edge-2fwl}

We prove Corollary~\ref{cor:global-edge-fwl} for \(k\geq3\), writing the update
explicitly for \(k=3\).

\paragraph{From direct ILG-\(k\)-WL to the global-replacement form.}
The direct ILG-\(k\)-WL state space is \(E(G)^k\). Its
separate-coordinate update contains \(k\) replacement multisets and enumerates
\(O(m^{k+1})\) replacement entries per round; canonical sorting gives the
\(O(m^{k+1}\log m)\) worst-case time in Table~\ref{tab:complexity}. The
global-edge construction instead uses the equivalent global-replacement form
on \((k{-}1)\)-tuples. At \(k=3\), the
pair coloring of \(L(G)\) has the update
\[
 \vartheta_t^{L(G)}(g;e,f)
 :=\bigl(c_t^{\mathrm{FWL},L(G)}(g,f),
          c_t^{\mathrm{FWL},L(G)}(e,g)\bigr),
\]
\[
 c_{t+1}^{\mathrm{FWL},L(G)}(e,f)
 =\hash\left(
   c_t^{\mathrm{FWL},L(G)}(e,f),
   \multiset{\vartheta_t^{L(G)}(g;e,f):g\in E(G)}
 \right).
\]
This pair-state process has the same stable graph-distinguishing power as the
separate-coordinate \(3\)-WL convention fixed in
Section~\ref{sec:background}
\citep{feng2023fwl}.

\paragraph{Restriction to \(k\geq 3\).}
For \(k=2\), global replacement has one-edge states.  Its message
\(\multiset{c_t(g):g\in E(G)}\) contains no relation between \(g\) and the
current edge, so it cannot recover line-graph adjacency; for example, it
fails on \(P_4\) versus \(K_{1,3}\).  Adding \(r_G(g,e)\) restores exactly
\(1\)-WL power.  The \(k=2\) comparison therefore follows from the
graph-level equivalence of \(1\)-WL and \(2\)-WL.

\paragraph{Specialization to the line graph.}
All atomic information needed to initialize this pair coloring is given by
\(r_G(\cdot,\cdot)\) (Remark~\ref{rem:relation-complete}). Define the
global-edge coloring \(c_t^{\mathrm{GE},G}\) by
\[
 c_0^{\mathrm{GE},G}(e,f)=\hash(r_G(e,f)),
\]
and, with
\[
 \vartheta_t^{\mathrm{GE},G}(g;e,f)
 :=\bigl(c_t^{\mathrm{GE},G}(g,f),c_t^{\mathrm{GE},G}(e,g),
          r_G(g,e),r_G(g,f)\bigr),
\]
\[
 c_{t+1}^{\mathrm{GE},G}(e,f)
 =\hash\left(
   c_t^{\mathrm{GE},G}(e,f),
   \multiset{\vartheta_t^{\mathrm{GE},G}(g;e,f):g\in E(G)}
 \right).
\]

\paragraph{Redundancy of the extra relation codes.}
At every round \(t\geq0\), the color \(c_t^{\mathrm{GE},G}(e,g)\)
refines its initial color and therefore determines \(r_G(g,e)\).
Symmetrically, \(c_t^{\mathrm{GE},G}(g,f)\) determines \(r_G(g,f)\).
Thus the two appended relation codes are functions of the first two entry
coordinates and cannot refine the multiset partition.

\paragraph{Partition equivalence by induction.}
At \(t=0\), both colorings are injective functions of \(r_G(e,f)\), so
they induce the same partition. Assume a color-name bijection \(\pi_t\)
identifies their partitions at round \(t\). The global-replacement multiset is
\[
 \mathcal S_t^{\mathrm{FWL},L(G)}(e,f)
 =\multiset{\vartheta_t^{L(G)}(g;e,f):g\in E(G)}.
\]
The global-edge multiset is
\[
 \mathcal S_t^{\mathrm{GE},G}(e,f)
 =\multiset{\vartheta_t^{\mathrm{GE},G}(g;e,f):g\in E(G)}.
\]
Apply \(\pi_t\) to the first two coordinates. The preceding paragraph shows
that the remaining two coordinates are fixed functions of them, so the two
multisets determine one another. Their full update signatures therefore
induce the same partition at round \(t+1\), giving \(\pi_{t+1}\). \qed

\subsection{Neural Simulation of the Tuple Refinement}
\label{app:neural-simulation}

We prove Proposition~\ref{prop:expressivity} for \(k=3\), where the states are
edge pairs.  The general construction replaces pairs by ordered
\((k{-}1)\)-tuples.

\paragraph{Setup.} Fix a feature-labeled graph class \(\mathcal{G}\) with
\(|E(G)|\leq m_{\max}\) for all \(G\in\mathcal{G}\), and assume that its
edge features range over a fixed finite alphabet. At \(k=3\), the
global-edge ILG-\(k\)-GNN layer is
\[
    M_{ef}^{(\ell)}
    =
    \sum_{g\in E(G)}
    \psi_\ell\left(
        h_{gf}^{(\ell)},h_{eg}^{(\ell)},q_g,
        \relEmb(r_G(g,e)),\relEmb(r_G(g,f))
    \right),
\]
\[
    h_{ef}^{(\ell+1)}
    =
    \eta_\ell(h_{ef}^{(\ell)},M_{ef}^{(\ell)}).
\]

\paragraph{Simulation direction (lower bound).}
We construct functions \(\psi_\ell,\eta_\ell\) such that
\(h_{ef}^{(\ell)}\) encodes the feature-labeled discrete color
\(c_\ell^{\mathrm{GE},G}(e,f)\).

\emph{Base case.} Choose \(\relEmb\) injective and choose \(\phi_0\) to be
injective on the finite domain
\(\{(q_e,q_f,\relEmb(r_G(e,f))):e,f\in E(G),\ G\in\mathcal G\}\). For the
feature-labeled refinement, set
\[
 c_0^{\mathrm{GE},G}(e,f)=\hash(q_e,q_f,r_G(e,f)).
\]
Then \(h_{ef}^{(0)}=\phi_0(q_e,q_f,\relEmb(r_G(e,f)))\) injectively encodes
this initial color.

\emph{Induction step.} Suppose
\(h_{ef}^{(\ell)}=\chi_\ell(c_\ell^{\mathrm{GE},G}(e,f))\) for an injective
code \(\chi_\ell:\mathcal D_\ell\to\mathbb R^{d_h}\). Define the
\emph{replacement entry}
\[
 \tau_\ell^G(g;e,f)
 =\left(c_\ell^{\mathrm{GE},G}(g,f),
         c_\ell^{\mathrm{GE},G}(e,g),
         r_G(g,e),r_G(g,f),q_g\right).
\]
The entry set is finite, with cardinality at most
\[
 |\mathcal D_\ell|^2\,3^2\,|\{q_g\}|.
\]
Consequently \(\multiset{\tau_\ell^G(g;e,f):g\in E(G)}\) ranges over a
finite family of multisets of size at most \(m_{\max}\).

By the multiset-encoding lemma \citep{xu2019gin}, for any countable domain and bounded multiset size, there exists a function \(\omega_\ell\) and a dimension \(d\) such that
\[
    \Phi_\ell\bigl(\multiset{\tau_\ell^G(g;e,f):g\in E(G)}\bigr)
    \;:=\;
    \sum_{g\in E(G)}\omega_\ell(\tau_\ell^G(g;e,f))
    \;\in\;\mathbb{R}^d
\]
is injective in the multiset argument. Since \(\chi_\ell\) and \(\relEmb\)
are injective, the neural message inputs determine \(\tau_\ell^G\). Choose
\(\psi_\ell\) to realize \(\omega_\ell\) on this finite input domain; an MLP
of sufficient width can do so. Then
\[
    M_{ef}^{(\ell)}=\sum_{g\in E(G)}\psi_\ell(\cdots)
    =\Phi_\ell\bigl(\multiset{\tau_\ell^G(g;e,f):g\in E(G)}\bigr),
\]
which is injective in the replacement multiset. Choose \(\eta_\ell\) to
map \((h_{ef}^{(\ell)},M_{ef}^{(\ell)})\) injectively to a new code for
\(c_{\ell+1}^{\mathrm{GE},G}(e,f)\). The discrete update shows that these two
inputs determine the new color.

At the final round, pair colors and edge features range over finite domains.
Since the final hidden states injectively encode pair colors and both domains
and their multiset sizes are finite, choose \(\gamma_{\mathrm{read}}\) and
\(\xi\) so that their sums injectively encode the respective multisets.
Choose \(\operatorname{READOUT}\) injective on the resulting finite set of
summary pairs.  The graph output then injectively encodes the complete
feature-labeled refinement signature.

\paragraph{Upper bound (converse).}
Run the feature-labeled refinement jointly on \(G\) and \(H\), so equal
color names have a graph-independent meaning. We prove by induction that,
for fixed network parameters, there is one function \(F_\ell\) such that,
for \(K\in\{G,H\}\),
\[
 h_{ef}^{(\ell)}=F_\ell(c_\ell^{\mathrm{GE},K}(e,f))
 \qquad(e,f\in E(K)).
\]
The base case follows from the initial color above. At the induction step,
the new color determines the old color and the complete multiset of
replacement entries \(\tau_\ell^K(g;e,f)\). The induction hypothesis therefore
determines every message input, its multiset, the aggregate, and the updated
state. This defines \(F_{\ell+1}\) independently of \(K\).

If the two graphs are not separated by the feature-labeled refinement, the multiplicity of every final pair color is the same on both sides; the same is true for every edge-feature color represented on the diagonal. Hence the graph readout
\[
 h_K=\operatorname{READOUT}\left(
   \sum_{e,f\in E(K)}\gamma_{\mathrm{read}}
      (h_{ef}^{(T_{\mathrm{net}})}),
   \sum_{e\in E(K)}\xi(q_e)\right),
 \qquad K\in\{G,H\},
\]
receives identical multisets of inputs and gives \(h_G=h_H\). No tuplewise
or edgewise bijection between the graphs is used. \qed

\subsection{Algorithmic Specification}
\label{app:algo-spec}

This section specifies the global-edge ILG-\(k\)-GNN forward pass for general
\(k\).  The procedure operates directly on \(G\) without constructing the
adjacency list of \(L(G)\).  Fix an edge ordering
\(\mathbf E_G=(e_1,\ldots,e_m)\) of \(E(G)\).

\paragraph{Step 1: edge features.}
Construct an edge feature matrix \(\mathsf Q_G\in\mathbb{R}^{m\times d_e}\). For unlabeled simple graphs, \((\mathsf Q_G)_i\) may be a constant vector. With vertex labels, \((\mathsf Q_G)_i\) concatenates a constant edge indicator with the one-hot labels of the two endpoints in sorted order.

\paragraph{Step 2: relation access.}
After locally indexing the incident root vertices, store the endpoint table
\(\mathsf P_G\in\mathbb N^{m\times2}\), whose \(i\)-th row lists the two
endpoint indices of \(e_i\). A relation code is computed in \(O(1)\) by
comparing two rows: \(r_G(e_i,e_j)=0\) when \(i=j\), \(1\) when the rows share
an entry, and \(2\) otherwise. Constructing and retaining \(\mathsf P_G\)
uses \(O(m)\) time and space. The layer evaluates \(r_G\) only for the index
pairs of the current tuple chunk rather than retaining a relation matrix.
In Steps~3--4, \((\mathsf R_G)_{ij}\) denotes a call to this accessor, not a
stored tensor. Implementations that require a dense relation tensor may
instead materialize \(\mathsf R_G\in\{0,1,2\}^{m\times m}\), using \(O(m^2)\)
space. This tensor is not required by the model definition or by an
implementation that computes relations on demand.

\paragraph{Step 3: tuple initialization.}
For every ordered \((k{-}1)\)-tuple \(\mathbf{i}=(i_1,\ldots,i_{k-1})\), compute
\[
    \mathsf Z_{\mathbf{i}}^{(0)}
    =
    \phi_0\left((\mathsf Q_G)_{i_1},\ldots,(\mathsf Q_G)_{i_{k-1}},\;
    \bigl((\mathsf R_G)_{i_p,i_q}\bigr)_{1\leq p<q\leq k-1}\right).
\]
This yields \(\mathsf Z^{(0)}\in\mathbb{R}^{m^{k-1}\times d_h}\).

\paragraph{Step 4: global-edge message aggregation.}
For each layer \(\ell\) and each tuple \(\mathbf{i}\), aggregate over the global edge index \(g\):
\[
    \mathsf M_{\mathbf{i}}^{(\ell)}
    =
    \sum_{g=1}^{m}
    \psi_\ell\left(
        \bigl(\mathsf Z_{\mathbf{i}[p\leftarrow g]}^{(\ell)}\bigr)_{p=1}^{k-1},\;
        (\mathsf Q_G)_g,\;
        \bigl((\mathsf R_G)_{g,i_p}\bigr)_{p=1}^{k-1}
    \right),
\]
where \(\mathbf{i}[p\leftarrow g]\) denotes the tuple with coordinate~\(p\) replaced by~\(g\). Then update
\(\mathsf Z_{\mathbf{i}}^{(\ell+1)}=
\eta_\ell(\mathsf Z_{\mathbf{i}}^{(\ell)},\mathsf M_{\mathbf{i}}^{(\ell)})\).

\paragraph{Step 5: invariant readout.}
After \(T_{\mathrm{net}}\) layers, produce graph logits by applying an
invariant readout, where \(\|\) denotes vector concatenation:
\[
    z_G =
    \mathrm{MLP}\left(
        \left[
        \sum_{\mathbf{i}}\gamma_{\mathrm{read}}(\mathsf Z_{\mathbf{i}}^{(T_{\mathrm{net}})})
        \;\middle\|\;
        \sum_i \xi((\mathsf Q_G)_i)
        \right]
    \right).
\]
Any fixed edge and tuple ordering may be used.  Under a vertex relabeling,
\(\mathsf R_G,\mathsf Q_G,\mathsf Z\) are permuted consistently, and the final
sums are unchanged.

\subsection{Permutation Equivariance and Invariance}
\label{app:equivariance}

We prove equivariance and invariance for the \(k=3\) instantiation (edge
pairs); the general \((k{-}1)\)-tuple case is identical. Let
\(\sigma:V(G)\to V(H)\) be a relabeling isomorphism and define its induced
edge bijection by
\[
 \sigma_E(\{u,v\})=\{\sigma(u),\sigma(v)\}.
\]
For \(e\in E(G)\), write \(\bar e=\sigma_E(e)\). Primes denote features
and hidden states on \(H=\sigma(G)\).

\paragraph{Relation preservation.}
For any \(e,f\in E(G)\), the edges share an endpoint if and only if their
images do. Therefore
\[
 r_H(\bar e,\bar f)=r_G(e,f).
\]

\paragraph{Initialization.}
If edge features satisfy \(q_{\bar e}'=q_e\) (i.e., features are defined
intrinsically), then the initialized pair tensor satisfies
\[
\begin{aligned}
 h_{\bar e,\bar f}^{\prime(0)}
 &=\phi_0(q_{\bar e}',q_{\bar f}',\relEmb(r_H(\bar e,\bar f)))\\
 &=\phi_0(q_e,q_f,\relEmb(r_G(e,f)))
 =h_{ef}^{(0)}.
\end{aligned}
\]

\paragraph{Induction.}
Assume \(h_{\bar e,\bar f}^{\prime(\ell)}=h_{ef}^{(\ell)}\) for
all \(e,f\in E(G)\). The message for the relabeled pair is
\[
\begin{aligned}
    M_{\bar e,\bar f}^{\prime(\ell)}
    &=
    \sum_{g'\in E(H)}
    \psi_\ell\left(
        h_{g',\bar f}^{\prime(\ell)},
        h_{\bar e,g'}^{\prime(\ell)},
        q_{g'}',
        \relEmb(r_H(g',\bar e)),
        \relEmb(r_H(g',\bar f))
    \right).
\end{aligned}
\]
Substituting \(g'=\sigma_E(g)\), using the bijection
\(E(G)\to E(H)\), and applying the induction hypothesis, feature identity,
and relation preservation gives
\[
    M_{\bar e,\bar f}^{\prime(\ell)}
    =
    \sum_{g\in E(G)}
    \psi_\ell\left(
        h_{gf}^{(\ell)},
        h_{eg}^{(\ell)},
        q_g,
        \relEmb(r_G(g,e)),
        \relEmb(r_G(g,f))
    \right)
    =
    M_{ef}^{(\ell)}.
\]
Thus
\(h_{\bar e,\bar f}^{\prime(\ell+1)}
=\eta_\ell(h_{ef}^{(\ell)},M_{ef}^{(\ell)})=h_{ef}^{(\ell+1)}\).

\paragraph{Invariance.}
The graph readout sums over all ordered edge pairs and all edges. Since
\(\sigma_E:E(G)\to E(H)\) is a bijection, the multisets
\[
 \multiset{h_{\bar e,\bar f}^{\prime(T_{\mathrm{net}})}:
             e,f\in E(G)}
 \quad\text{and}\quad
 \multiset{h_{ef}^{(T_{\mathrm{net}})}:e,f\in E(G)}
\]
are identical, and likewise for edge features. Therefore the invariant
readouts are equal: \(h_H=h_G\). \qed

\subsection{Complexity Details}
\label{app:complexity}

Let \(n=|V(G)|\), \(m=|E(G)|\),
\(m_L=|E(L(G))|=\sum_{v\in V(G)}\binom{d_G(v)}{2}\), and fix
\(k\geq 3\).  Table~\ref{tab:complexity} reports worst-case upper bounds for
fixed \(k\) and fixed neural feature widths.  Discrete update signatures are
streamed one target tuple at a time.  For the neural implementation, \(C\) is
the target-tuple chunk size.

\begin{table}[H]
    \centering
    \caption{Worst-case complexity of the root- and line-domain formulations.
    Initialization is performed once; iterative costs are per refinement
    round or neural layer.}
    \label{tab:complexity}
    \footnotesize
    \setlength{\tabcolsep}{4pt}
    \begin{tabular}{@{}
        >{\raggedright\arraybackslash}p{0.28\textwidth}
        >{\raggedright\arraybackslash}p{0.34\textwidth}
        >{\raggedright\arraybackslash}p{0.28\textwidth}@{}}
        \toprule
        Method & Time & Peak space \\
        \midrule
        Root \(G\) \((k{-}1)\)-FWL
        & \(O(n^{k-1}+n+m)\) initialization;
          \(O(n^k\log n)\) per round
        & \(O(n^{k-1}+n+m)\) \\
        \addlinespace[3pt]
        Direct ILG-\(k\)-WL
        & \(O(m^k)\) initialization;
          \(O(m^{k+1}\log m)\) per round
        & \(O(m^k+m)\) \\
        \addlinespace[3pt]
        Explicit \(L(G)\) \((k{-}1)\)-FWL
        & \(O(m^{k-1}+m+m_L)\) initialization;
          \(O(m^k\log m)\) per round
        & \(O(m^{k-1}+m+m_L)\) \\
        \addlinespace[3pt]
        Global-edge ILG-\(k\)-WL
        & \(O(m^{k-1}+m)\) initialization;
          \(O(m^k\log m)\) per round
        & \(O(m^{k-1}+m)\) \\
        \addlinespace[3pt]
        Global-edge ILG-\(k\)-GNN
        & \(O(m^{k-1}+m)\) initialization;
          \(O(m^k)\) per layer
        & \(O(m^{k-1}+Cm)\) per layer with checkpointing \\
        \bottomrule
    \end{tabular}
\end{table}

Direct ILG-\(k\)-WL stores \(k\)-tuples, whereas global-edge ILG-\(k\)-WL stores
\((k{-}1)\)-tuples; this accounts for the additional factor of \(m\) in the
direct formulation.  Explicit \((k{-}1)\)-FWL on \(L(G)\) and global-edge
ILG-\(k\)-WL have the same per-round complexity.  They differ in whether
line-graph adjacency is stored or computed from root-edge endpoints.  Since
\(m_L\leq\binom{m}{2}=O(m^2)\), both use \(O(m^2)\) total worst-case space at
\(k=3\).  The \(O(m)\) versus \(O(m+m_L)\) difference refers only to graph
storage, excluding tuple states.  If exact color identifiers occupy a dense
integer range of suitable size, array-based counting removes the sorting factor.  \(T\) rounds or
layers multiply iterative time by \(T\).  The reported ILG-\(3\)-GNN experiment
and the empirical cost experiment both execute the dense ordered edge-pair
state space, so their implementation does not alter these bounds.

The deterministic refinement runner represents each canonical signature by a
deterministic \(128\)-bit digest.  A digest collision is therefore possible but
highly unlikely.

\section{Empirical Cost and Scaling}
\label{app:empirical-complexity}

\paragraph{Objectives.}
At \(k=3\), we test four claims: line-domain cost scales with \(m\); computing
adjacency from root endpoints avoids storing \(E(L(G))\) but leaves the
\(m^2\) states and \(m^3\) updates unchanged; root \(4\)-WL can be cheaper on
dense roots; and explicit and implicit implementations return the same
signatures.

\paragraph{Experimental setup.}
We compare root \(2\)-FWL (the paper's root \(3\)-WL), root \(3\)-FWL (root
\(4\)-WL), \(2\)-FWL on an explicitly materialized \(L(G)\), and global-edge
ILG-\(3\)-WL.  The explicit and implicit methods use the same deterministic
pair-refinement implementation with \(128\)-bit digests.  The explicit method constructs and retains
\(E(L(G))\), including isolated line vertices, whereas the implicit method
queries the atomic relation from the root-edge endpoints.  We checked their
final signatures for equality on every measured call; all checks passed.
All methods use dense state spaces, one graph per call, and complete refinement
rounds.

The primary sweep uses connected \(d\)-regular roots on \(n=25\) vertices for
\(d\in\{2,4,6,8,12,16,20,24\}\), with three fixed seeds per degree (the
endpoints are \(C_{25}\) and \(K_{25}\)).  This fixes every root-domain state
and replacement count while varying \(m=25d/2\).  Each call performs atomic
initialization and exactly one complete refinement round, avoiding
graph-dependent stabilization depth.  A fresh single-threaded process first
performs one warm-up and then three timed calls; Table~\ref{tab:empirical-density}
reports the median over the resulting nine observations.  Peak resident set
size (RSS) is measured without allocation tracing in a separate fresh process
on the fixed first seed.  Root generation and interpreter start-up are
excluded from time, while explicit line-graph construction is included.

The reference run used an Apple M2 CPU, macOS 26.5.1, Python 3.12.4, and
NetworkX 3.6.1.  All algorithmic calls are single-process with library thread
counts fixed to one.  The benchmark runner records every repetition, seed,
IQR, pre-call RSS, graph count, software version, and exact output digest.
As with the other deterministic experiments, colors use \(128\)-bit digests,
so equality has negligible rather than mathematically zero collision risk.

\begin{table}[H]
    \centering
    \caption{Fixed-\(n=25\) one-round cost sweep.  Times are median seconds;
    four representative degrees are shown, while the exponent fit uses all
    six points with \(m\geq75\).  RSS columns report absolute fresh-process
    peaks in MiB for one fixed graph.  At the displayed degrees,
    \((m,m_L)\) equals \((25,25)\), \((75,375)\), \((150,1650)\), and
    \((300,6900)\), respectively.  Explicit and implicit signatures agreed
    throughout.}
    \label{tab:empirical-density}
    \scriptsize
    \setlength{\tabcolsep}{2.7pt}
    \begin{tabular*}{\textwidth}{@{\extracolsep{\fill}}lccrrrrrr@{}}
        \toprule
        & & & \multicolumn{4}{c}{End-to-end one-round time [s]}
        & \multicolumn{2}{c}{Peak RSS [MiB]} \\
        \cmidrule(lr){4-7}\cmidrule(lr){8-9}
        Method & States & Entries/round
        & \(d=2\) & \(d=6\) & \(d=12\) & \(d=24\)
        & \(d=12\) & \(d=24\) \\
        \midrule
        Root \(3\)-WL
        & \(n^2\) & \(n^3\)
        & 0.0138 & 0.0152 & 0.0160 & 0.0117 & 52.6 & 52.1 \\
        Root \(4\)-WL
        & \(n^3\) & \(n^4\)
        & 0.4926 & 0.5934 & 0.6698 & 0.3813 & 54.6 & 54.5 \\
        Explicit \(L(G)\) \(2\)-FWL
        & \(m^2\) & \(m^3\)
        & 0.0142 & 0.2655 & 1.8221 & 12.8667 & 54.9 & 58.4 \\
        Implicit ILG-\(3\)-WL
        & \(m^2\) & \(m^3\)
        & 0.0134 & 0.2693 & 1.7507 & 11.4022 & 55.3 & 60.8 \\
        \bottomrule
    \end{tabular*}
\end{table}

\paragraph{Density-sweep results.}
For the six points with \(m\geq75\), an ordinary least-squares fit of
\(\log(\text{time})\) against \(\log m\) gives slope \(2.60\)
(\(95\%\) CI \([2.18,3.03]\), \(R^2=0.986\)) for explicit line refinement
and \(2.60\) (\([2.33,2.87]\), \(R^2=0.995\)) for implicit ILG.  These
finite-range slopes describe the measured regime; the benchmark separately
records exactly \(m^3\) replacement entries per round, which is the claimed
combinatorial count.
Root \(3\)- and \(4\)-WL retain fixed state and
replacement counts throughout the sweep; their measured constants still vary
with the color alphabet.  The line-domain runtime crosses root \(4\)-WL
between \(m=100\) and \(m=150\).  At the SR25 scale \((n,m)=(25,150)\),
implicit ILG takes \(2.61\) times the one-round wall time of root \(4\)-WL.

\paragraph{Isolating structural storage.}
To separate graph representation from tuple state, we first fix \(m=150\)
and compare four connected roots with sharply different degree concentration.
Endpoint-table preparation and explicit line-graph materialization are timed
without refinement.  Allocation peaks are traced separately after root
generation, avoiding the interpreter baseline that would mask the small
endpoint table.

\begin{table}[H]
    \centering
    \caption{Fixed-\(m=150\) structural representation cost.  Each root has
    the same \(m^2=22{,}500\) subsequent line-domain pair states.  Times are
    medians of five calls after one warm-up; allocation is the peak traced
    Python heap in a separate fresh process.}
    \label{tab:empirical-structure}
    \footnotesize
    \setlength{\tabcolsep}{4pt}
    \begin{tabular*}{\textwidth}{@{\extracolsep{\fill}}lrrcccc@{}}
        \toprule
        & & & \multicolumn{2}{c}{Endpoint table}
        & \multicolumn{2}{c}{Explicit \(L(G)\)} \\
        \cmidrule(lr){4-5}\cmidrule(lr){6-7}
        Root & \(n\) & \(m_L\)
        & Time [ms] & Peak [MiB]
        & Time [ms] & Peak [MiB] \\
        \midrule
        \(P_{151}\) & 151 & 149
        & 0.099 & 0.026 & 0.510 & 0.058 \\
        Cubic & 100 & 300
        & 0.083 & 0.026 & 0.604 & 0.077 \\
        SR25 & 25 & 1,650
        & 0.091 & 0.025 & 2.357 & 0.368 \\
        \(K_{1,150}\) & 151 & 11,175
        & 0.089 & 0.025 & 15.151 & 2.017 \\
        \bottomrule
    \end{tabular*}
\end{table}

The same structural-only experiment on \(P_{m+1}\) and \(K_{1,m}\) for
\(m\in\{100,200,400,800\}\) gives endpoint-allocation slopes \(0.97\) on
both families.  Explicit materialization has slope \(0.95\) on paths, where
\(m_L=m-1\), and \(1.86\) on stars, where
\(m_L=\binom{m}{2}\).  At \(m=800\), endpoint preparation takes \(0.409\) ms
and \(0.130\) MiB on the star, versus \(454.7\) ms and \(59.0\) MiB for
explicit materialization.  Once the common pair refinement is included at
\(m=150\), however, the four implicit one-round times occupy only
\(1.38\)--\(1.46\) s and the explicit times \(1.41\)--\(1.76\) s despite a
\(75\)-fold range in \(m_L\).  Thus the endpoint representation removes the
structural term, but the \(m^2\) state and \(m^3\) replacement work dominate
the full \(k=3\) computation.

\paragraph{Stabilized SR25 case study.}
Finally, we run root \(3\)-WL, root \(4\)-WL, and ILG-\(3\)-WL to stabilization
on all \(15\) SR25 graphs.
For each graph and method, the reported time is the median of three calls
after one warm-up; totals sum the \(15\) graph medians.  Peak RSS is the
maximum over \(15\) separate memory processes.

\begin{table}[H]
    \centering
    \caption{End-to-end stabilized cost on the \(15\) SR25 graphs, yielding
    \(105\) pairwise separation verdicts.  All graphs have
    \((n,m,m_L)=(25,150,1650)\).}
    \label{tab:empirical-sr25}
    \footnotesize
    \setlength{\tabcolsep}{4pt}
    \begin{tabular*}{\textwidth}{@{\extracolsep{\fill}}lrrrrr@{}}
        \toprule
        Method & States & Entries/round & Separated
        & Total time [s] & Peak RSS [MiB] \\
        \midrule
        Root \(3\)-WL
        & 625 & 15,625 & 0/105 & 0.196 & 52.6 \\
        Root \(4\)-WL
        & 15,625 & 390,625 & 105/105 & 48.171 & 61.6 \\
        Implicit ILG-\(3\)-WL
        & 22,500 & 3,375,000 & 105/105 & 230.442 & 67.4 \\
        \bottomrule
    \end{tabular*}
\end{table}

On SR25, ILG has \(36\) times the root-\(3\)-WL state and \(216\) times its
replacement count.  Relative to root \(4\)-WL, it has \(1.44\) times the
state and \(8.64\) times the per-round replacements; stabilized implicit ILG
is \(4.78\) times slower and uses \(1.09\) times the peak RSS in this
reference implementation.  Root \(4\)-WL and ILG-\(3\)-WL separate all
\(105\) pairs.  This SR25 cost comparison does not order their
expressivity.  Root \(4\)-WL and ILG-\(3\)-WL are incomparable, so the higher
ILG cost can still yield separations that root \(4\)-WL misses
(Remark~\ref{rem:four-wl-incomparable}).

\section{Neural Evaluation Protocols}
\label{app:neural-evaluation-protocols}

This appendix specifies the pair-trained neural baselines and the untrained
ILG-\(3\)-GNN experiment in Table~\ref{tab:separation}.  The ILG-\(3\)-GNN
architecture and run seed are fixed across the substructure-counting witness
pairs, SR25, and BREC evaluations, and no parameter optimization is performed.

\paragraph{Pair-trained rows}
For locally computed pair-trained entries, we initialize and train a separate
model for each graph pair and evaluation seed; parameters are not shared
across pairs.  Training uses a Siamese cosine-embedding loss with target
\(-1\) and Adam.  The protocol uses \(q=32\) copies of each graph under
different vertex orderings; here \(q\) is the number of copies.  A pair counts as
separated when its paired-comparison statistic exceeds the BREC threshold
\(72.34\) and its isomorphic-copy reliability statistic remains below that
threshold \citep{wang2024brec}.

For the controls, substructure-counting witness pairs, SR25, and ILG-\(1\)-GNN BREC results, seeds
\(0,1,2\) are used and a pair counts as separated when at least two seeds pass
both checks.  The two controls are an isomorphic copy and the Whitney pair
\(K_3,K_{1,3}\).  Each SR25 pair is trained and evaluated separately, and the
table aggregates the \(105\) decisions.  The locally compared state domains
are root edges for ILG-\(1\)-GNN, unordered root \(3\)-subsets for Morris
\(3\)-GNN, and ordered root-vertex pairs for PPGN
\citep{morris2019weisfeiler,maron2019ppgn}.

\paragraph{ILG-\(1\)-GNN}
The ILG-\(1\)-GNN grid uses the pair-trained protocol on the controls,
substructure-counting witness pairs, SR25, and BREC.  None of the reported
pairs is separated.
This is consistent with the theoretical upper bound: under the canonical
edge/line-vertex identification, the architecture is an MPNN on \(L(G)\) and
is bounded by \(1\)-WL on that graph.

\paragraph{Morris \(3\)-GNN on BREC}
We use a fixed configuration with two shared full-replacement layers of width
\(64\), mean pooling over all unordered vertex triples, and seeds
\(100,200,300\).  After verifying that every seed produces valid results for
all \(400\) pairs, we select the seed with the largest number of separations;
pairs from different seeds are not combined.  The value \(207/400\) is
reported as a neural separation count rather than as equality with a computed
pairwise set-\(3\)-WL bound.

\paragraph{PPGN}
The BREC value \(233/400\) is sourced from
\citet[Table~2]{wang2024brec} under that paper's protocol.  The SR25 and
substructure-counting witness-pair cells are computed locally using the
pair-trained protocol above.
The row therefore reports graph-pair separation under these source-specific
protocols.

\paragraph{Untrained ILG-\(3\)-GNN configuration}
We fix run seed \(500\) and deterministically derive one fresh
initialization for every substructure-counting witness pair, SR25 pair, and
BREC pair.
The same initialized parameters are used for both graph sides, with no
parameter updates.  The global-edge architecture maintains dense ordered root-edge-pair states and
uses constant edge features, the invariant equality/incident/disjoint
relation codes, width \(32\), five replacement layers, a
\(16\)-dimensional graph output, \(\sigma(x)=\sin(5x)\), and multiscale sum
readout.  A parameter-free layer normalization is applied to the graph output
before distances are computed.  Evaluation uses CPU float64 arithmetic and
order-stabilized reductions.  The inputs contain no vertex identifiers, edge
identifiers, or exact refinement colors, and the single run seed is used
throughout.

\paragraph{Relabeling decision}
For each graph, we construct four copies: the stored ordering and three
independently relabeled copies.  Thus \(q=4\) for the reported result.  Let
\(z^G_a,z^H_b\) be their graph outputs and define
\[
\begin{split}
B&=\min_{a,b}\lVert z^G_a-z^H_b\rVert_2,\\
W_G&=\max_{a,a'}\lVert z^G_a-z^G_{a'}\rVert_2,\qquad
W_H=\max_{b,b'}\lVert z^H_b-z^H_{b'}\rVert_2.
\end{split}
\]
The pair is counted as separated only if
\[
 B-\max\{W_G,W_H\}-\tau>0,\qquad
 \tau=\sqrt{10^{-7}\cdot72.34}\approx2.6896\times10^{-3}.
\]
Here \(\tau\) is the direct-\(\ell_2\) floor induced by the BREC covariance
ridge \(10^{-7}\) and decision threshold \(72.34\).  We use this direct
\(\ell_2\) criterion rather than the BREC RPC statistic.
Every reported separation uses all four copies; the two-copy stage may reject
a pair early.  Nonfinite values and execution errors are treated as failures
to separate.

We apply the same configuration to the three substructure-counting witness
pairs, the \(105\) SR25 pairs, and the \(400\) BREC main pairs; the rule separates
\(3/3\), \(105/105\), and \(359/400\), respectively.  Every evaluation yields
valid finite outputs.
After computing the neural decisions, we compare each suite with deterministic
ILG-\(3\)-WL, and the pairwise verdicts coincide.  The same protocol does not
separate either the isomorphic \(C_6\) control or the Whitney pair
\(K_3,K_{1,3}\).  Since \(L(K_3)\cong L(K_{1,3})\cong K_3\), every invariant
line-graph model with the constant features used here must give the Whitney
pair the same output.

\paragraph{Dense execution}
All reported ILG-\(3\)-GNN results use the dense ordered edge-pair state space.
The implementation also supports exact-color state compression.  It groups
states with identical exact-refinement transition signatures while retaining
the size of each group.  For fixed weights, this is an execution optimization
of the same recurrence; compression-group indices are not supplied as
features, targets, or readout coordinates.  The table therefore
reports untrained separation counts under the rule above rather than BREC RPC
scores or supervised-training results.

Full configurations, source runners, tests, and reproduction commands are
provided in the accompanying repository:
\url{https://github.com/lukeyf/ilg-wl}.

\section*{GenAI Usage Statement}

Generative AI was used to survey related work, to develop the detailed
arguments of the proofs in Appendix~\ref{app:backward-transfer-proof}, to
assist in constructing the Lean~4 formalization, to assist in designing and
implementing the empirical complexity evaluation, and for language editing
and code assistance.  The choice of research direction and of proof strategy
was made by the author.  The author reviewed all AI-assisted material, and
the formal development is checked by the Lean~4 kernel.  The author takes
responsibility for the contents of the paper and artifact.

\end{document}